\newcommand{\wh}{\widehat} 		
\newcommand{\beq}{\begin{equation}}
\newcommand{\eeq}{\end{equation}}
\newcommand{\bea}{\begin{eqnarray}}
\newcommand{\eea}{\end{eqnarray}}
\newcommand{\ba}{\begin{array}}
\newcommand{\ea}{\end{array}}
\newcommand{\bi}{\begin{itemize}}
\newcommand{\ei}{\end{itemize}}
\newcommand{\ben}{\begin{enumerate}}
\newcommand{\een}{\end{enumerate}}
\newcommand{\nn}{\nonumber}
\renewcommand{\r}{\right}
\renewcommand{\l}{\left}
\theoremstyle{plain}
\newtheorem{theorem}{Theorem}
\newtheorem{lemma}{Lemma}
\newtheorem{assumption}{Assumption}
\newtheorem{corollary}{Corollary}
\begin{document}

\bibliographystyle{natbib}

\def\spacingset#1{\renewcommand{\baselinestretch}%
{#1}\small\normalsize} \spacingset{1}


  \title{\Large{\bf Sequential testing for structural stability in
approximate factor models}}
\author{Matteo Barigozzi$^1$}

\author{Lorenzo Trapani$^2$}

\fntext[fn1]{Department of Statistics, London School of Economics and Political Science.\\
Email: m.barigozzi@lse.ac.uk (corresponding author).}
\fntext[fn2]{School of Economics, The University of Nottingham\\
Email: Lorenzo.Trapani1@nottingham.ac.uk\\
\\
We wish to thank the participants to the International Conference in Memory
of Carlo Giannini (Bergamo, November 25-26, 2016); to the CMStatistics 2016
conference, (Seville, December 9-11, 2016); to the ECARES Seminar Series
(Bruxelles, February 23, 2017); to the Cass Business School Faculty of
Finance Workshop (London, March 6, 2017); to the LSE Econometrics and
Statistics seminar (London, March 10, 2017); to the Department of Economics
Seminar Series at the University of Connecticut (March 31, 2017); to the New
York Camp Econometrics XII conference (Syracuse University, April 7-9,
2017); and to the European Meeting of Statisticians (University of Helsinki,
July 24-28, 2017).}

\begin{abstract}
We develop a monitoring procedure to detect changes in a large approximate factor
model. Letting $r$ be the number of common factors, we base our statistics on the
fact that the $\left( r+1\right) $-th eigenvalue of the sample covariance matrix is
bounded under the null of no change, whereas it becomes spiked under changes.
Given that sample eigenvalues cannot be estimated consistently under the null, we
randomise the test statistic, obtaining a sequence of \textit{i.i.d} statistics, which are
used for the monitoring scheme. Numerical evidence shows a very small probability
of false detections, and tight detection times of change-points.
\end{abstract}

\begin{keyword}
 large factor model, change-point, sequential
testing, randomised tests.
\end{keyword}
  \maketitle

\newpage

\section{Introduction\label{intro}}

In this paper, we investigate the issue of testing for the stability of a
large factor model: 
\begin{equation}
X_{i,t}=a_{i}^{\prime }f_{t}+u_{i,t},  \label{fmscalar}
\end{equation}%
where $\{X_{i,t},\ 1\leq i\le N,\ 1\le t\le T\}$ is a panel of $N$
time series observed for $T$ periods; $a_i$ and $f_{t}$ are latent
vectors of loadings and factors, respectively, both of dimension $r<N$ and
representing the ``signal'' component of the data, as opposed to the idiosyncratic ``noise'' $u_{i,t}$. In particular, we focus
on the \textit{sequential monitoring} of the stability of (\ref{fmscalar}) -
that is, we propose a test to check whether there are any breaks in (\ref%
{fmscalar}) as new data come in. Factor models have been paid significant
attention in virtually all applied sciences, as a tool to reduce
dimensionality while preserving the information content of a large dataset.
In particular, in the context of social sciences and economics, the use of
factor models has been popularised by the seminal paper by %
\citet{chamberlainrothschild83}; thereafter, factor models have acquired a
huge popularity in various applications, such as business cycle analysis,
asset pricing and economic monitoring and forecasting -- see the review by %
\citet{SW11} for a comprehensive list of references.

Model \eqref{fmscalar} is usually characterized by the identifying
assumption that, as $N\to\infty$, the covariance matrix of $%
\{X_{i,t}\}_{i=1}^N $ has $r$ spiked eigenvalues diverging to infinity,
while the remaining ones stay bounded for any $N$. Numerous contributions
have developed a full-fledged inferential theory for (\ref{fmscalar}) under general assumptions, such as weak serial and
cross-correlation of the error terms $u_{i,t}$. In particular, in the case
of stationary data the estimation by means of principal component analysis
of the ``signal'' part of (\ref{fmscalar}) has been developed for
high-dimensional, i.e. large $N$, data, e.g.~by \citet{bai03} and %
\citet{FLM13}. The literature has also produced many results on the
determination of the number of common factors $r $ -- see, \textit{inter alia}%
, \citet{baing02}, \citet{ABC10}, \citet{onatski10}, \citet{ahnhorenstein13}%
, and \citet{trapani17}. The factors in equation (\ref{fmscalar}) have also
proven to be very effective to forecast large datasets, overcoming the curse
of dimensionality issue -- see e.g. \citet{stockwatson02JASA}. Extensions to
the case in which the common factors $f_{t}$ are explicitly allowed to have
a linear process representation, have been studied also -- see e.g. %
\citet{FGLR09}. 

In comparison with this huge body of literature, the issue of testing for
the structural stability of (\ref{fmscalar}) can be still considered
underdeveloped, with some notable exceptions. Indeed, %
\citet{stockwatson02JASA} and \citet{bates2013} argue that, at least in the
presence of ``small'' breaks and a constant number of factors, inference on
the factor space is not hampered, thus making the change-point problem less
compelling than in other contexts. Nevertheless, stylised facts show that in
many applications the assumptions of a negligible break size and a stable
number of factors are not, in general, correct. Most importantly, it has
been argued that, in presence of a crisis, co-movements become stronger,
which may suggest that the economy is driven by a different number of factors than in quieter periods -- see e.g. \citet{SW09}, \citet{cheng2016} and \citet{li2017rank}.
In such cases, the impact of a
change-point is bound to invalidate standard inference and subsequent
applications such as forecasting. Recently, the literature has proposed a
series of tests for the in-sample detection of breaks in factor structures:
examples include the works by \citet{breitung2011}, \citet{chen2014}, %
\citet{han2014}, \citet{corradi2014}, \citet{yamamoto2015}, \citet{cheng2016}, \citet{baltagi2015}, \citet{massacci2017} and %
\citet{BCF16}. 

Sequential detection of breaks in (\ref{fmscalar}) is important for at least
four reasons. First, the general motivation put forward by \citet{CSW96}
holds true in the context of factor models also: it is important to verify
whether a model, which has been valid thus far, is still capable of
adequately approximate the behaviour of new data. Second, the aforementioned
(substantial) empirical evidence that factor structures do tend to change
over time, especially in presence of a crisis, illustrates the importance of a
timely detection of such changes. Third, inference on factor models can be
severely marred by the presence of a break (see the comments in %
\citet{baltagi2015}), which again shows the importance of detecting a
break in real time, rather than
realising this \textit{a posteriori} after inference has been carried out
and employed, e.g. for the purpose of forecasting. Finally, in the context of economics and finance, data are collected and
made available automatically, so that the cost of monitoring is almost negligible, especially if
compared with the potential costs of employing a model which is no longer
valid. Sequential detection of breaks in a univariate or small dimensional,
i.e.~finite $N$, setting has been studied e.g.~in \citet{lai1995}, %
\citet{CSW96}, \citet{aue2004delay}, \citet{lajos04}, \citet{AG06}, %
\citet{lajos07}, \citet{brodsky2010}, \citet{aue12}, \citet{KTK2015}, and %
\citet{GKS13}.

\subsection{Hypotheses of interest and main results of the paper}

There are several possible ways in which model (\ref{fmscalar}) may undergo
a change at a point in time $\tau $; however, despite such a wide variety,
in all cases it may be argued that a change in the factor structure of the
data will result in a change in the covariance matrix of $%
\{X_{i,t}\}_{i=1}^N $. More specifically, since common factors determine the
presence and number of spiked eigenvalue of the covariance of $%
\{X_{i,t}\}_{i=1}^N$ (defined as eigenvalues which are not bounded, but grow
with the dimension of the dataset), it is natural to investigate whether a
change has occurred in the factor structure of (\ref{fmscalar}) by verifying
whether changes have occurred in the spectrum of the covariance matrix.
Formally, in this paper we test for the null hypothesis that the factor
structure does not change, viz.:%
\begin{equation*}
H_{0}:X_{i,t}=\sum_{j=1}^r a_{ij}f_{j,t}+u_{i,t}\text{, \ \ }1\leq t\leq T.
\end{equation*}%
As far as alternatives are concerned, we focus on two different possible
breaks at a point in time $\tau $: {(1)} changes in the loadings attached to
one or more common factor:%
\begin{equation}
H_{A,1}:\left\{ 
\begin{array}{l}
X_{i,t}=\sum_{j=1}^r a_{ij}f_{j,t}+u_{i,t} \\ 
X_{i,t}=\sum_{j=1}^r \widetilde{a}_{ij}f_{j,t}+u_{i,t}%
\end{array}%
\text{ for }%
\begin{array}{l}
1\leq t< \tau \\ 
\tau \leq t\leq T%
\end{array}%
\right. ,  \label{h-a1}
\end{equation}%
where $\widetilde{a}_{ij}\neq a_{ij}$  for all $i$ and at least one value of $j$, 
and {(2)} the appearance of $q\ge 1$ new factors: 
\begin{equation}
H_{A,2}:\left\{ 
\begin{array}{l}
X_{i,t}=\sum_{j=1}^r a_{ij}f_{j,t}+u_{i,t} \\ 
X_{i,t}=\sum_{j=1}^r a_{ij}f_{j,t}+\sum_{j=1}^q b_{ij}g_{j,t}+u_{i,t}%
\end{array}%
\text{ for }%
\begin{array}{l}
1\leq t< \tau \\ 
\tau \leq t\leq T%
\end{array}%
\right. .  \label{h-a2}
\end{equation}

Hypothesis $H_{A,1}$ is the typical case considered in all the above cited
literature on change-points in factor models. A consequence of (\ref{h-a1}) is that, under the alternative, a model with $r$ common factors and changing loadings can be re-written as a model with a total number of factors ranging between $r+1$ and $2r$ common factors, defined as the original common factors multiplied by a pre- and post-break dummy variable. This key property is heavily exploited in the literature. On the other hand hypothesis $%
H_{A,2}$ has received less attention from the literature -- see for example %
\citet{cheng2016} and \citet{BCF16}. Whilst in this paper we mainly focus on 
$H_{A,1}$ and $H_{A,2}$, other alternatives, as disappearing factors or less pervasive changes in the loadings, can also be accommodated in our
framework -- see the discussion in Section \ref{discussion}.

We show that, under both $H_{A,1}$ and $H_{A,2}$, the $\left( r+1\right) $%
-th largest eigenvalue of the covariance matrix of $%
\{X_{i,t}\}_{i=1}^N$ becomes unbounded at time $\tau$, passing to infinity
as fast as the sample size $N$. Conversely, it stays bounded under the null
of no break. Thus, we base our test on the estimated $\left( r+1\right) $%
-th eigenvalue of the sample covariance matrix of $%
\{X_{i,t}\}_{i=1}^N$ computed using a rolling window. Although using the sample eigenvalues of the sample
covariance matrix for testing is not uncommon in the context of factor
models (\citet{onatski10,trapani17}), in our context such an approach is
fraught with difficulties. The main issue is that, under the null of no
break, the $\left( r+1\right) $-th sample eigenvalue does not have
a known distribution, and indeed it cannot even be estimated consistently:
as \citet{wang16} explain, there is too much noise (due to $N$ being large)
to be able to identify the small signal coming from a bounded eigenvalue.


Given that the only thing we know is that the $\left( r+1\right) $-th sample
eigenvalue may be bounded or unbounded, we propose to use a randomised test
in order to regularize the problem. Randomisation is a
widely employed approach, dating back at least to \citet{pearson50}; various
authors have employed different ways of introducing randomness
into a statistic -- see e.g. \citet{corradi2006}, \citet{bandi2014}, and %
\citet{trapani17}. Our methodology is based on the
same approach, but with a different scope. In essence,
the approach which we propose takes, at each point in time $t$, the $\left( r+1\right) $-th
sample eigenvalue as input, and returns, as output, an \textit{i.i.d.}
sequence, with known (asymptotic) distribution, first and second moments
that can be approximated with a negligible error, and finite moments up to
any order. Such sequence is then used to replace the $\left( r+1\right) $-th
sample eigenvalue in the construction of the monitoring process, thus
allowing us to use the standard asymptotic theory already developed for partial sum processes of \textit{i.i.d.} sequences -- see %
\citet{lajos04} and \citet{KTK2015}. Although our results are derived
conditionally on the sample (see the comments in Section \ref{monitoring} on
the meaning of randomisation under sample conditioning), we construct a
monitoring procedure which falsely identifies a break under the null with
probability smaller than a prescribed level, and which identifies a break
with probability one when this is present. This is a desirable feature of
sequential testing since as more data come in the probability of type I
errors is anyway likely to increase -- see for example the comments in
Chapter 9 by \citet{Sen}. Indeed, numerical evidence suggests that our
procedure works extremely well, with a short delay in finding breaks. 
In principle, our test can be applied also under more general circumstances,
including the presence of weak factors or less pervasive loadings changes, the case of heteroskedastic
idiosyncratic components, and the disappearance of one or more
factors. All these extensions are discussed in Sections \ref%
{discussion} and \ref{conclusions}. 

The rest of the paper is organised as follows. In Section \ref{assumptions}
we spell out the main assumptions, and we study the inference on the $\left(
r+1\right) $-th eigenvalue of the covariance matrix. Section \ref%
{monitoring} discusses the construction of the test statistic, including the
double randomisation procedure and all the relevant intermediate results.
Some straightforward extensions of our framework to more general
circumstances are discussed in Section \ref{discussion}. Numerical evidence
from Monte Carlo experiments and a real data application on US industrial production monthly data are given in
Sections \ref{numerics} and \ref{emp}, respectively. Section \ref{conclusions}\ discusses further
possible extensions and concludes. All proofs are in the Appendix.

NOTATION. We let $C_{0},C_{1},...$ denote generic, finite positive constants that do
not depend on the sample size, and whose value may change from line to line;
\textquotedblleft $\rightarrow $\textquotedblright\ denotes the ordinary
limit; orders of magnitude for an a.s. convergent sequence (say $s_{T}$) are
denoted as $O_{a.s.}\left( T^{\varsigma }\right) $ and $o_{a.s.}\left( T^{\varsigma
}\right) $ when, for some $\epsilon >0$ and $\tilde{T}<\infty $, $P\left[
\left\vert T^{-\varsigma }s_{T}\right\vert <\epsilon \text{ for all }T\geq 
\tilde{T}\right] =1$ and $T^{-\varsigma }s_{T}\rightarrow 0$ a.s.,
respectively; $I_{A}\left( x\right) $ is the indicator function of a set $A$. Finally, we assume without loss of generality that all random variables and processes are defined on a common probability space $(\Omega, \mathcal{F}, P)$ with outcomes $\omega \in \Omega$. 

\section{Assumptions and preliminary theory\label{assumptions}}

Consider the factor model in (\ref{fmscalar}), where we now make explicit the possibility
of changes over time in the ``signal'' component 
\begin{equation}  \label{fmscalar2}
X_{i,t}=a_{i}^{\prime }(t)f_{t}+u_{i,t}, \quad 1\le i\le N,\; 1\le t\le T.
\end{equation}%
We use $r\left( t\right) $\ to denote the number of factors at
a given time $t$, i.e. the vectors of loadings $a_{i}(t)$ and of factors $%
f_{t}$ have dimension $r(t)$. Consider also the matrix form of (\ref%
{fmscalar2}):%
\begin{equation}
X_{t}=A(t)f_{t}+u_{t}, \quad 1\le t\le T,  \label{fmvector}
\end{equation}%
where, $A(t)= \left[ a_{1}(t)|...|a_{N}(t)\right] ^{\prime }$ is the $N\times r(t)$
loadings matrix and $u_{t}= \left[ u_{1,t},...,u_{N,t}\right] ^{\prime }$ is
the idiosyncratic component. Under $H_{A,1}$ and $H_{A,2}$ (see also \eqref{h-a1} and \eqref{h-a2}), we define $\widetilde{a}_{i} =\left( \widetilde{a}_{i,1} ,...,\widetilde{a}%
_{i,r}\right)^{\prime }$ and $\widetilde{A}=%
\left[ \widetilde{a}_{1} |...|\widetilde{a}_{N} \right] ^{\prime }$, and $b_{i} =\left(
b_{i,1} ,...,b_{i,q}\right) ^{\prime }$ and $%
B =\left[ b_{1} |...|b_{N} %
\right] ^{\prime }$. Using this notation, we are interested in testing the null-hypothesis 
$$
H_{0}: A(t)=A \text{, \ \ }1\leq t\leq T,
$$
versus the alternatives
$$
H_{A,1}: \left\{ 
\begin{array}{l}
A(t)=A \\ 
A(t)=\widetilde A
\end{array}%
\text{ for }%
\begin{array}{l}
1\leq t< \tau \\ 
\tau \leq t\leq T%
\end{array}%
\right. ,
$$
and
$$
H_{A,2}:\left\{ 
\begin{array}{l}
A(t)=A \\ 
A(t)=[A \vert B]
\end{array}%
\text{ for }%
\begin{array}{l}
1\leq t< \tau \\ 
\tau \leq t\leq T%
\end{array}%
\right. 
$$

We define the covariance matrix of the data at time $t$ as $\Sigma
_{X}\left( t\right) =E\left( X_{t}X_{t}^{\prime }\right) $, assuming for simplicity, and without loss of generality, that $X_t$ has zero-mean. Consider the
(population) rolling covariance matrix 
\begin{equation}
\Sigma _{m}\left( t\right) =\frac{1}{m}\sum_{k=t-m+1}^{t}\Sigma _{X}\left(
k\right) ,\quad m\leq t\leq T,  \label{rolling-1}
\end{equation}%
and its sample counterpart 
\begin{equation}
\widehat{\Sigma }_{m}\left( t\right) =\frac{1}{m}%
\sum_{k=t-m+1}^{t}X_{k}X_{k}^{\prime },\quad m\leq t\leq T.
\label{rolling-2}
\end{equation}%
Based on (\ref{rolling-1}) and (\ref{rolling-2}), in what follows $m$ will denote our sample size when estimating the model; hence, our asymptotics
is for $m\rightarrow \infty $. We assume that for the first 
$m$ periods no change-point is present and we have $r(t)= r$ factors for all $t\le m$.  Moreover,
for simplicity, we also assume that our monitoring procedure
will last until $T>m$. Therefore, the total number of observations $T$
includes both the estimation and the monitoring period. Note that, in real
applications, the monitoring may be expected to go on indefinitely, so that $%
T\rightarrow \infty $.

We start with the following assumption.

\begin{assumption}
\label{ass-1}It holds that \textit{(i)} $E\left( X_{i,t}\right) =0$ for all $%
1\le i\le N$ and $1\le t\le T$; \textit{(ii)} $E(f_{j,t}u_{i,t})=0$ for all $%
i,j,t$; (iii) $r(t)=r$ for $1\leq t\le m$; (iv) $r(t)<N$ and finite for $1\leq t\leq T$
and for all $N\in \mathbb{N}$.
\end{assumption}

Parts \textit{(i)} and \textit{(ii)} of the assumption are made only for
convenience and could be relaxed. Clearly from part \textit{(iii)} we have
that, in presence of breaks, the change-point location $\tau$ is such that $%
\tau> m$. Finally, part \textit{(iv)} is a reasonable requirement for the
number of factors to be finite at any point in time. Note that, under $H_0$ and $H_{A,1}$ we have that $r(t)=r$ for all $m\le t\le T$, while under $H_{A,2}$ $r(t)=r$ for $1\le t< \tau$ and  $r(t)=(r+q)$ for $\tau\le t\le T$.

By Assumption \ref%
{ass-1} the covariance is decomposed as 
\begin{equation*}
\Sigma _{X}\left( t\right) =A\left( t\right) \Sigma _{F}\left( t\right)
A\left( t\right) ^{\prime }+\Sigma _{u}\left( t\right) ,
\end{equation*}%
having defined $\Sigma _{F}\left( t\right) = E\left( f_{t}f_{t}^{\prime
}\right) $ and $\Sigma _{u}\left( t\right) = E\left( u_{t}u_{t}^{\prime
}\right) $. %
Henceforth, we denote the $k$-th largest eigenvalue of $\Sigma
_{m}\left( t\right) $ as $\lambda ^{\left( k\right) }\left( t\right) $, the $%
k$-th eigenvalue of $A\left( t\right) \Sigma _{F}\left( t\right)
A\left( t\right) ^{\prime }$ as $\gamma ^{\left( k\right) }\left( t\right) $%
; and, finally, the $k$-th eigenvalue of $\Sigma _{u}\left(
t\right) $ as $\omega ^{\left( k\right) }\left( t\right) $; similarly, we
denote the $k$-th largest eigenvalue of $\widehat{\Sigma }%
_{m}\left( t\right) $ as $\widehat{\lambda }^{\left( k\right) }\left(
t\right) $. 


In order to derive our results on the population and sample
eigenvalues, we make the following assumptions.

\begin{assumption}
\label{ass-2}It holds that \textit{(i)} $\underline{C}_{k}(t)N\leq \gamma
^{\left( k\right) }\left( t\right) \leq \overline{C}_{k}\left( t\right) N$
for all $1\leq k\leq r\left( t\right) $, and $0<\underline{C}_{k}(t)\leq 
\overline{C}_{k}(t)<\infty $ and for $m\leq t\leq T$; 
\textit{(ii)} $\omega ^{\left(
k\right) }\left( t\right) \leq C_{0}$ for all $1\leq k\leq N$ and $m\leq
t\leq T$.
\end{assumption}

\begin{assumption}
\label{ass-3}It holds that \textit{(i)} $E\left\vert X_{i,t}\right\vert
^{4+\epsilon }\leq C_{0}$ for all $1\leq i\leq N$, $1\leq t\leq T$ and some $%
\epsilon >0$; \textit{(ii)}$\ E\left[ \max_{t_{0}\leq \tilde{t}\leq
t_{0}+m-1}\left\vert \sum_{t=t_{0}}^{\tilde{t}}X_{h,t}X_{j,t}-E\left( X_{h,t}X_{j,t}\right) \right\vert ^{2}\right] $ $\leq 
$ $C_{1}m$ for all $1\leq h,j\leq N$ and $1\leq t_{0}\leq T-m+1$.
\end{assumption}

Assumption \ref{ass-2} is typical of high-dimensional factor analysis and is
analogous to the assumptions in \citet{chamberlainrothschild83} and %
\citet{FGLR09}. In particular, as far as the non-zero $\gamma ^{\left(
k\right) }\left( t\right) $'s are concerned, part \textit{(i)} of the
assumption requires that they diverge to positive infinity, as $N\rightarrow
\infty $, at a rate $N$. 

Equivalently, we could follow \citet{baing02} and \citet{FLM13} and require
the more primitive assumptions that $\Sigma _{F}\left( t\right) $ is
positive definite (which entails that common factors are identified), and
that $N^{-1}A\left( t\right) ^{\prime }A\left( t\right) $ tends to a
positive definite matrix. This is tantamount to assuming that $\gamma
^{\left( k\right) }\left( t\right) $ passes to infinity at a rate $N$.
Indeed, consider - for the sake of the notation - the case of constant
loadings, viz. $A\left( t\right) =A$, and constant covariance matrix for the
common factors, viz. $\Sigma _{F}\left( t\right) =\Sigma _{F}$; then, using
Theorem 7 in \citet{merikoski2004}%
\begin{equation*}
N\nu ^{\left( \min \right) }\left( \Sigma _{F}\right) \nu ^{\left( k\right)
}\left( \frac{A^{\prime }A}{N}\right) \leq \nu ^{\left( k\right) }\left( A\Sigma _{F}A^{\prime }\right)
\leq N\nu ^{\left( \max \right) }\left( \Sigma _{F}\right) \nu ^{\left(
k\right) }\left( \frac{A^{\prime }A}{N}\right) ,
\end{equation*}%


where $\nu ^{\left( k\right) }\left( \cdot \right) $ denotes the $k$-th
largest eigenvalue of a matrix. 
When following the same reasoning in the presence of change-points, the above result provides a link between  $\gamma ^{\left( k\right)
}\left( t\right) $ and  the $k$-th largest eigenvalue of $N^{-1}A^{\prime }(t)A(t)%
$. Note that it is also possible to assume that $\gamma ^{\left(
k\right) }\left( t\right) \rightarrow \infty $ as $N\rightarrow \infty $ at
a slower rate than $N$, which is known as having \textquotedblleft weak
factors\textquotedblright; we discuss this case
in Section \ref{weak}. 

As far as the $\omega ^{\left( k\right) }\left( t\right) $'s are concerned,
in part \textit{(ii)} of the assumption, the same condition could be derived
from the assumptions in \citet{FLM13} -- see also \citet{baing02}. Note also that we do not require the $%
\omega ^{\left( k\right) }\left( t\right) $'s to be constant over $t$:
unconditional heteroskedasticity is allowed for, in principle -- see also the
comments in Section \ref{discussion}. Assumption 2 determines the behaviour
of the population eigenvalues of $\Sigma _{m}\left( t\right) $. In
particular, at $t=m$, by Weyl's inequality we have that $\lambda ^{\left(
k\right) }\left( m\right) \geq C_{k}(m)N$ for $1\leq k\leq r$, while $\lambda
^{\left( k\right) }\left( m\right) \leq C_{0}$ for $r+1\leq k\leq N$. This
condition implies the existence of an eigen-gap which allows us to identify $r$ in the pre-break sample.


As far as Assumption \ref{ass-3} is concerned, part \textit{(ii)} is a high-level condition which, in
essence, poses a constraint on the amount of serial correlation that one can
have in the process $\{X_{h,t}X_{j,t}\}_{t=1}^T$ and therefore, albeit
indirectly, in $\{X_{i,t}\}_{t=1}^T$. In general, this assumption is
satisfied by any linear process with summable fourth cumulants (see e.g., %
\citet{hannan1970}, Theorem 6, page 210). Some examples under which
Assumption \ref{ass-3} holds are reported in \citet{trapani17} and include
the case of stationary, causal processes -- see \citet{wu05}. This family of
processes in turn includes several popular examples such as Volterra series
and ARCH/GARCH processes, thus allowing for the case of conditional
heteroskedasticity. 

Finally, note that Assumptions \ref{ass-2} and \ref{ass-3} allow for some
degree of cross-sectional and serial dependence in the panel of
idiosyncratic components, $\left\{ u_{i,t}\text{, }1\leq i\leq N\text{, }1\leq t\leq
T\right\} $; thus, (\ref{fmscalar}) defines an \textquotedblleft
approximate\textquotedblright\ factor model, as opposed to an
\textquotedblleft exact\textquotedblright\ one, which would require
cross-sectionally and serially \textit{i.i.d.} errors.

The following result characterizes the behaviour of the $(r+1)$-th 
eigenvalue of $\Sigma _{m}(t)$.

\begin{lemma}
\label{rolling-eigenvalues}Under Assumptions \ref{ass-1} and \ref{ass-2}, it
holds that 
\begin{equation}
\lambda ^{\left( r+1\right) }\left( t\right) \leq C_{0}\text{, \ \ }m\leq
t\leq T,\text{ under }H_{0}.  \label{lambda-null}
\end{equation}%
Further, it holds that 
\begin{eqnarray}
&&\lambda ^{\left( r+1\right) }\left( t\right) \left\{ 
\begin{array}{ll}
\leq C_{0} & m\le t<\tau, \\ 
\geq C_{1}\min \{ \frac{t-\tau +1}{m},\frac{\tau+m-t-1}{m}\} N & \tau \leq t< \tau +m-1, \\ 
\leq C_{0} & \tau +m-1\le t\le T,\text{ \ \ under }H_{A,1},%
\end{array}%
\right.   \label{lambda-population} \\
&&\lambda ^{\left( r+1\right) }\left( t\right) \left\{ 
\begin{array}{ll}
\leq C_{0} & m\le t<\tau, \\ 
\geq C_{1}\frac{t-\tau +1}{m}N & \tau \le t< \tau +m-1, \\ 
\geq C_{1}N & \tau +m-1\le t \le T,\text{ \ \ under }H_{A,2}.
\end{array}%
\right.   \label{lambda-population-2}
\end{eqnarray}%
%
%
%
%
%
\end{lemma}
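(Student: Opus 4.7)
The plan is to write $\Sigma_m(t) = \Sigma_m^{(s)}(t) + \Sigma_m^{(u)}(t)$, where $\Sigma_m^{(s)}(t) = m^{-1}\sum_{k=t-m+1}^t A(k)\Sigma_F(k)A(k)^{\prime}$ is the rolling signal covariance and $\Sigma_m^{(u)}(t) = m^{-1}\sum_{k=t-m+1}^t \Sigma_u(k)$ the rolling idiosyncratic covariance, and then control each piece separately via Weyl's inequality. Assumption \ref{ass-2}\textit{(ii)} together with convexity of the largest eigenvalue gives $\lambda^{(1)}(\Sigma_m^{(u)}(t))\leq C_0$ uniformly in $t$, so Weyl implies $|\lambda^{(r+1)}(t)-\lambda^{(r+1)}(\Sigma_m^{(s)}(t))|\leq C_0$. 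The entire argument then reduces to computing $\lambda^{(r+1)}(\Sigma_m^{(s)}(t))$ in each regime.

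The null bound \eqref{lambda-null} and the pre-break / post-window-transition $\leq C_0$ bounds in \eqref{lambda-population} and \eqref{lambda-population-2} follow immediately: in all those regimes the loadings are constant over the window, so $\Sigma_m^{(s)}(t) = A(t)\Sigma_F^\ast A(t)^\prime$ has rank at most $r$ and its $(r+1)$-th eigenvalue is exactly zero. (Under $H_{A,1}$ with $t\ge\tau+m$ we simply replace $A$ by $\widetilde{A}$; under $H_{A,2}$ for $t\ge\tau+m$ the effective loadings matrix is $[\widetilde{A}\,|\,B]$ which has rank $r+q$, Assumption \ref{ass-2}\textit{(i)} applied to the enlarged system then gives $\lambda^{(r+1)}(\Sigma_m^{(s)}(t))\ge C_1 N$.)

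The only nontrivial regime is $\tau\leq t < \tau+m$, where the window straddles the break. Writing $\alpha=(t-\tau+1)/m\in(0,1]$, under $H_{A,1}$ one has
\begin{equation*}
\Sigma_m^{(s)}(t) = (1-\alpha)\,A\,\Sigma_F\,A^{\prime} + \alpha\,\widetilde{A}\,\Sigma_F\,\widetilde{A}^{\prime} = L\,D_\alpha\,L^{\prime},
\end{equation*}
with $L=[A\,|\,\widetilde{A}]$ the $N\times 2r$ stacked loadings and $D_\alpha=\mathrm{diag}((1-\alpha)\Sigma_F,\alpha\Sigma_F)$. The nonzero eigenvalues of $L D_\alpha L^\prime$ coincide with those of the $2r\times 2r$ matrix $D_\alpha^{1/2} L^\prime L\, D_\alpha^{1/2}$, and $L^\prime L$ admits the Schur-complement block structure whose off-diagonal block is $\widetilde{A}^\prime P_A^{\perp}\widetilde{A}$, where $P_A^\perp$ is the orthogonal projector onto the orthogonal complement of the column space of $A$. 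The condition $\widetilde{a}_{ij}\neq a_{ij}$ for all $i$ and some $j$, together with Assumption \ref{ass-2}\textit{(i)} applied to $\widetilde{A}$, ensures that $\widetilde{A}^\prime P_A^\perp \widetilde{A}$ has at least one eigenvalue of order $N$, hence $L^\prime L$ has at least $r+1$ eigenvalues of order $N$. Combining this with the weighting $D_\alpha$ yields $\lambda^{(r+1)}(\Sigma_m^{(s)}(t))\ge C_1\,\alpha\,N = C_1\,(t-\tau+1)N/m$, which is the claim in \eqref{lambda-population}. The case $H_{A,2}$ is analogous with $\widetilde{A}$ replaced by $[A\,|\,B]$; here the required orthogonal component is $B^\prime P_A^\perp B$, which has $q$ eigenvalues of order $N$ by Assumption \ref{ass-2}\textit{(i)} applied to the loadings of the new factors.

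The principal obstacle is precisely this Schur-complement step: translating the pointwise condition $\widetilde{a}_{ij}\neq a_{ij}$ into an order-$N$ lower bound on the smallest nonzero eigenvalue of $\widetilde{A}^\prime P_A^\perp\widetilde{A}$. One needs to invoke a pervasiveness condition on the perturbation $\widetilde{A}-A$ analogous to Assumption \ref{ass-2}\textit{(i)}; once that is in hand the rest is bookkeeping with Weyl's inequality and the rank-nullity count for the mixture of two rank-$r$ quadratic forms.
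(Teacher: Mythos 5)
Your decomposition is genuinely different from the paper's. You split $\Sigma_m(t)$ \emph{vertically} into rolling signal plus rolling idiosyncratic covariance, control the idiosyncratic part once and for all by Weyl, and then study the stacked-loadings Gram matrix $L'L$ with $L=[A\,|\,\widetilde{A}]$ (equivalently $[A\,|\,B]$ under $H_{A,2}$). The paper instead splits \emph{horizontally} in time,
\begin{equation*}
\Sigma_m(t)=\frac{\tau-t+m-1}{m}\Sigma_m^{(1)}(t)+\frac{t-\tau+1}{m}\Sigma_m^{(2)}(t),
\end{equation*}
the two pieces being the averages of $E(X_kX_k')$ over the pre- and post-break portions of the window, and then applies Weyl's dual inequality to the convex combination. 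The horizontal route is short and works cleanly under $H_{A,2}$: there the post-break piece $\Sigma_m^{(2)}(t)$ is a rank-$(r+q)$ signal plus bounded noise, so $\lambda^{(r+1)}(\Sigma_m^{(2)}(t))\asymp N$ by Assumption \ref{ass-2}\textit{(i)} applied to $[A\,|\,B]$, and Weyl immediately gives the $\alpha N$ lower bound. Your route reaches the same conclusion under $H_{A,2}$ after a longer detour through the $2(r+q)$-dimensional Gram matrix.

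Under $H_{A,1}$, however, your route exposes something the paper's proof glosses over. The paper asserts that $\lambda_2^{(r+1)}(t)=C_0N$ under $H_{A,1}$, but for $t\geq\tau$ the post-break model still has exactly $r$ factors, so $\Sigma_m^{(2)}(t)=\widetilde{A}\Sigma_F\widetilde{A}'+\bar\Sigma_u^{(2)}$ is a rank-$r$ spike plus bounded noise and by Weyl $\lambda_2^{(r+1)}(t)\leq C_0$, \emph{bounded}. Under $H_{A,1}$ the growth of $\lambda^{(r+1)}(\Sigma_m(t))$ in the transition window must therefore come from the \emph{mixture} $(1-\alpha)A\Sigma_F A'+\alpha\widetilde{A}\Sigma_F\widetilde{A}'$ having rank up to $2r$, which is exactly what your $LD_\alpha L'$ reduction captures and the paper's decomposition does not. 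You also correctly flag that this step needs more than Assumptions \ref{ass-1}--\ref{ass-2}: the $(r+1)$-th eigenvalue of $LD_\alpha L'$ is governed by the smallest nonzero eigenvalue of $\widetilde{A}'P_A^\perp\widetilde{A}$, and Assumption \ref{ass-2}\textit{(i)} on $\widetilde{A}\Sigma_F\widetilde{A}'$ alone does not bound that away from zero (scaled by $N$), since $\widetilde{A}$ can satisfy \ref{ass-2}\textit{(i)} while sitting arbitrarily close to $\mathrm{col}(A)$, and the pointwise condition $\widetilde{a}_{ij}\neq a_{ij}$ says nothing about the size of $P_A^\perp\widetilde{A}$. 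A pervasiveness condition on the loading change (e.g.\ the top eigenvalue of $N^{-1}\widetilde{A}'P_A^\perp\widetilde{A}$ bounded away from zero) is what is actually being used; the paper employs it implicitly, you state it explicitly, and this is an improvement rather than a defect.

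Two minor points. First, the paper's display that both $\lambda_1^{(r+1)}$ and $\lambda_2^{(r+1)}$ contribute with their full coefficients is not literally Weyl's dual inequality; the usable consequence, which suffices, is $\lambda^{(r+1)}(\Sigma_m(t))\geq\frac{t-\tau+1}{m}\lambda^{(r+1)}(\Sigma_m^{(2)}(t))$, since the other summand is PSD. Second, your phrase ``whose off-diagonal block is $\widetilde{A}'P_A^\perp\widetilde{A}$'' is a slip; that quantity is the Schur complement of $A'A$ in $L'L$, not an off-diagonal block, though the intended content is clear and correct.
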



 The sample counterpart to Lemma \ref%
{rolling-eigenvalues} is the following result, derived in \citet{trapani17}.

\begin{lemma}
\label{rolling-eigenvalues-2}Under Assumptions \ref{ass-1} and \ref{ass-3},
it holds that%
\begin{equation}
\widehat{\lambda }^{\left( r+1\right) }\left( t\right) =\lambda ^{\left(
r+1\right) }\left( t\right) +O_{a.s.}\left( \frac{N}{m^{1/2}}l\left(
m,N\right) \right) , \quad m\leq t\leq T,  \label{lambda-estimate}
\end{equation}%
where \begin{equation*}
l\left( m,N\right) =\left( \ln N\right)^{1+\epsilon } \left( \ln m\right)^{%
\frac{1+\epsilon }{2}},
\end{equation*}%
 for any $\epsilon >0$.
%
%
%
%
%
\end{lemma}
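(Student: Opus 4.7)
The plan is to apply Weyl's inequality to reduce the problem to controlling the operator norm of the sampling-error matrix $E_m(t):=\widehat{\Sigma}_m(t)-\Sigma_m(t)$. Since both matrices are symmetric, Weyl gives $|\widehat{\lambda}^{(r+1)}(t)-\lambda^{(r+1)}(t)|\leq \|E_m(t)\|_{op}$, so Lemma~\ref{rolling-eigenvalues-2} will follow once we control $\|E_m(t)\|_{op}$ uniformly in $m\le t\le T$. I would then use the crude but sufficient bound $\|E_m(t)\|_{op}\leq N\max_{h,j}|E_{hj}(t)|$, where
$$E_{hj}(t)=\frac{1}{m}\sum_{k=t-m+1}^{t}\bigl(X_{h,k}X_{j,k}-E[X_{h,k}X_{j,k}]\bigr),$$
thereby shifting the task to a uniform almost-sure rate for the scalar centred partial-sum averages $E_{hj}(t)$.

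For each fixed $(h,j,t)$, I would invoke Assumption~\ref{ass-3}\textit{(ii)}, which provides the maximal second-moment control $E[\max_{\tilde t\le m}|\sum_{k=1}^{\tilde t}(X_{h,k}X_{j,k}-E[X_{h,k}X_{j,k}])|^2]\le C m$; combined with the telescoping identity $\sum_{k=t-m+1}^{t}=\sum_{k=1}^{t}-\sum_{k=1}^{t-m}$, this immediately gives the Chebyshev-type tail bound $P(|E_{hj}(t)|>x)\le C m^{-1}x^{-2}$, which by Borel--Cantelli already yields $E_{hj}(t)=O_{a.s.}(m^{-1/2})$ pointwise. To lift this to an almost-sure bound uniform over the $O(N^2)$ entries and the $O(T-m)$ time indices without losing more than logarithmic factors, I would upgrade the Chebyshev step using the $(4+\epsilon)$-th moment condition of Assumption~\ref{ass-3}\textit{(i)} together with a Marcinkiewicz--Zygmund/Rosenthal-type maximal inequality, obtaining a polynomial tail of the form $P(|E_{hj}(t)|>x)\le C m^{-(2+\epsilon/2)}x^{-(4+\epsilon)}$. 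A union bound across the $N^2\cdot(T-m)$ many $(h,j,t)$ triples, applied at the threshold $x_{m,N}=m^{-1/2}(\ln N)^{1+\epsilon}(\ln m)^{(1+\epsilon)/2}$, makes the total failure probability summable; a Borel--Cantelli argument then delivers $\max_{h,j,t}|E_{hj}(t)|=O_{a.s.}(m^{-1/2}\,l(m,N))$.

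Assembling the pieces, $\|E_m(t)\|_{op}\le N\max_{h,j}|E_{hj}(t)|=O_{a.s.}(Nm^{-1/2}l(m,N))$ uniformly in $m\le t\le T$, which, fed back through Weyl's inequality, yields \eqref{lambda-estimate}.

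The main obstacle is not the operator-norm reduction, which is elementary, but the combinatorial bookkeeping required to upgrade the elementary second-moment Chebyshev bound into one with enough polynomial decay to sustain a simultaneous union bound over $N^2(T-m)$ quantities while generating \emph{exactly} the $(\ln N)^{1+\epsilon}(\ln m)^{(1+\epsilon)/2}$ factor and no worse power of $m$. This forces one to balance the exponent in the Rosenthal-type moment inequality against the summability threshold in Borel--Cantelli, and to ensure that the maximum-of-partial-sums control in Assumption~\ref{ass-3}\textit{(ii)} is used to absorb the inner maximum over $t$ rather than to incur an additional $T$-dependent factor.
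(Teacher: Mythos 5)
Your initial reduction (Weyl plus the bound $\|\widehat{\Sigma}_m(t)-\Sigma_m(t)\|_{op}\le N\max_{h,j}|E_{hj}(t)|$) is fine and matches the standard route, and Chebyshev together with Assumption~\ref{ass-3}\textit{(ii)} does deliver the pointwise rate $O_{P}(m^{-1/2})$ for each entry. The gap is in the lift to an almost-sure, uniform bound. First, the Rosenthal/Marcinkiewicz--Zygmund step you invoke needs a $(4+\epsilon)$-moment control on the \emph{partial sums} of the centred products $X_{h,k}X_{j,k}$, but Assumption~\ref{ass-3} only gives $(4+\epsilon)$-moments of individual summands (part \textit{(i)}) and a \emph{second}-moment maximal inequality for the sums (part \textit{(ii)}); for a serially dependent sequence these do not combine to give the higher-moment partial-sum bound you need, and nothing in the paper's assumptions supplies it. Second, even granting the stronger tail $P(|E_{hj}(t)|>x)\le Cm^{-(2+\epsilon/2)}x^{-(4+\epsilon)}$, plugging in $x=m^{-1/2}l(m,N)$ gives a probability of order $l(m,N)^{-(4+\epsilon)}$, which decays only logarithmically in $(N,m)$. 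Multiplying by the cardinality $N^{2}(T-m)$ of the union gives a quantity that \emph{grows} polynomially in $N$ and $m$ (recall $T_m=O(m^{\varkappa})$), so the ``total failure probability'' is not summable and Borel--Cantelli does not apply. A naive union bound over $N^{2}(T-m)$ triples simply demands stronger-than-polynomial (essentially sub-exponential) concentration, which the assumptions do not provide.

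The paper (following Trapani, 2017, and mirrored in the proof of Lemma~\ref{second-randomisation} in the appendix) circumvents this by \emph{not} taking a union bound. Instead it controls a weighted series of the form $\sum_{N}\sum_{m}\tfrac{1}{mN}\,P[\max(\cdots)>\epsilon\,l(m,N)]$, where the inner maximum is handled with a M\'oricz-type maximal inequality that loses only the logarithmic factor $l(m,N)$, and then passes from summability of the weighted series to an almost-sure statement via a dyadic subsequencing argument plus a conditional Borel--Cantelli lemma (Chen, 1978). The $1/(mN)$ weights absorb exactly the polynomial growth in $(N,m)$ that defeats the union bound, which is why the argument gets away with a probability bound that decays only in powers of $\ln N$ and $\ln m$. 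So the operator-norm reduction is the same, but the probabilistic machinery you propose after that would need to be replaced by the weighted/dyadic Borel--Cantelli scheme, and the Rosenthal step should be dropped in favour of the second-moment maximal inequality that Assumption~\ref{ass-3}\textit{(ii)} actually provides.
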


Lemma \ref{rolling-eigenvalues-2} provides a strong rate for the estimation
error $(\widehat{\lambda }^{\left( r+1\right) }\left( t\right) -\lambda
^{\left( r+1\right) }\left( t\right) ) $, which is valid for any combination
of $N$ and $m$, and indeed for all estimated eigenvalues, $\widehat{\lambda }%
^{\left( k\right) }\left( t\right) $ for $1\leq k\leq N$. The lemma does not
require any assumption on $\lambda ^{\left( k\right) }\left( t\right) $:
some of these may be non-distinct, non well-separated, or even equal to
zero. Equation (\ref{lambda-estimate}) states that the estimation error can
be quite large. It is, however, comparatively small for the spiked
eigenvalues, which, by Assumption 2, are of order $N$. Conversely, the error term in (\ref{lambda-estimate})
can be quite large for the bounded eigenvalues; in this case, the rate is
probably not the sharpest one, although it suffices for the construction of
the monitoring procedure. The result of Lemma \ref{rolling-eigenvalues-2}
can also be compared with the results from Random Matrix literature for
spiked covariance models where however $\lambda^{(k)}$ is finite for all $1\le k\le N$ and $N\in\mathbb N$ -- see e.g. \citet{karoui2007}, \citet{paul2007}, %
\citet{JL2009}, \citet{JM09}, \citet{benaych2011,benaych2012}, \citet{BY12}, and \citet{OMH14}.

\section{Testing procedure and asymptotics\label{monitoring}}

In this section, we propose an algorithm to \textquotedblleft
regularise\textquotedblright\ the behaviour of the eigenvalues so as to be
able to construct a monitoring procedure. As a consequence of Lemmas \ref{rolling-eigenvalues} and \ref{rolling-eigenvalues-2},
we are unable to use $\widehat{\lambda }^{\left(
r+1\right) }\left( t\right)$, due to the lack of a known limiting distribution
under the null, and of the dependence structure across $t$. We therefore propose a
randomisation algorithm, whose output is a sequence of \textit{i.i.d.}
random variables with finite moments of arbitrarily high order and, under
the null, (asymptotically) chi-square distributed. We subsequently employ
(the standardised version of) such random variables to construct a partial
sum process, which we use as the relevant test statistic in an analogous way
as \citet{lajos04} and \citet{lajos07}.

\subsection{The randomisation algorithm\label{doublerandom}}

Define $\delta \in \left( 0,1\right) $ such that%
\begin{equation}
\delta \left\{ 
\begin{array}{c}
>0\text{ \ \ \ \ \ \ \ \ \ } \\ 
>1-\frac{1}{2}\frac{\ln m}{\ln N}%
\end{array}%
\right. \text{ according as }\left. 
\begin{array}{l}
N\leq m^{1/2} \\ 
N>m^{1/2}%
\end{array}%
\right. ;  \label{delta}
\end{equation}%
note that the choice of $\delta $ is uniquely determined by $N$ and $m$, with no
need to estimate it. We consider the statistic%
\begin{equation}
\phi _{N,m}\left( t\right) =g\left( \frac{N^{-\delta }\widehat{\lambda }%
^{\left( r+1\right) }\left( t\right) }{\frac{1}{N}\sum_{k=1}^{N}\widehat{%
\lambda }^{\left( k\right) }\left( t\right) }\right) ,\quad m\leq t\leq T,
\label{phi}
\end{equation}%
where $g\left( \cdot \right) $ is a monotonically increasing function such
that $g\left( 0\right) =0$ and $\lim_{x\rightarrow \infty }g\left( x\right)
=\infty $; in this paper, we use $g(a)=a$, but other choices are also
possible. The denominator in (\ref{phi}) makes the argument of $g\left(
\cdot \right) $ scale invariant.

The quantity $\delta $, defined in (\ref{delta}), plays a very important
role in the remainder of the paper. Based on Lemma \ref%
{rolling-eigenvalues-2}, it can be expected that $\widehat{\lambda }^{\left(
r+1\right) }\left( t\right) $ may diverge to positive infinity even when $%
\lambda ^{\left( r+1\right) }\left( t\right) $ is bounded; in this case, the
divergence rate is $O\left( Nm^{-1/2}\right) $, modulo the logarithmic
terms. On the other hand, $\widehat{\lambda }^{\left( r+1\right) }\left(
t\right) $ diverges at the faster rate $O\left( N\right) $ under the
alternative. The purpose of $\delta $ is to annihilate the estimation error:
based on (\ref{delta}), it can be seen that $N^{\delta }$ is larger than $%
Nm^{-1/2}l\left( m,N\right) $: thus, under the null of no break, it
can be expected that $N^{-\delta }\widehat{\lambda }^{\left( r+1\right)
}\left( t\right) $ will drift to zero. Under the alternative, it still
passes to infinity (since $\delta <1$), albeit at a slower rate than $%
\lambda ^{\left( r+1\right) }\left( t\right) $ itself. Note that this would
hold also for very large values of $N$: indeed, no restriction is required
between the relative rate of divergence of $N$ and $m$ as they pass to
infinity, and one could also allow for $N=\exp \left( m\right) $; in this
case, after some algebra it can be shown that $\delta \in \left( 1-\frac{1}{2%
}\frac{\ln m}{m},1\right) $, which still yields that $N^{-\delta }\widehat{%
\lambda }^{\left( r+1\right) }\left( t\right) $ drifts to zero or diverges
to infinity according as the null or the alternative is true.

On account of the comments above, and of Lemmas \ref{rolling-eigenvalues}
and \ref{rolling-eigenvalues-2}, it holds that 
\begin{align*}
& \lim_{N,m\rightarrow \infty }\phi _{N,m}\left( t\right) =g\left( 0\right)
=0,\;\text{ w.p. 1, when }\;N^{-\delta }{\lambda }^{\left(
r+1\right) }\left( t\right) \rightarrow 0, \\
& \lim_{N,m\rightarrow \infty }\phi _{N,m}\left( t\right) =g\left( \infty
\right) =\infty ,\;\text{ w.p. 1, when }\;N^{-\delta }{\lambda }%
^{\left( r+1\right) }\left( t\right) \rightarrow \infty .
\end{align*}%
We therefore have that 
\begin{equation}
\lim_{N,m\rightarrow \infty }\phi _{N,m}\left( t\right) =0,\quad m\leq t\leq
T,\text{ \ \ under }H_{0}.  \notag
\end{equation}%
Henceforth, we define $t_{N,m}^{\ast }$ as the point in time such that $t_{N,m}^{\ast }\ge \tau$ and
\begin{equation}
\lim_{N,m\rightarrow \infty }\frac{N^{1-\delta }}{m}\left( t_{N,m}^{\ast
}-\tau +1\right) =\infty .\label{delaytstar}
\end{equation}%
Similarly, we define the point in time $t_{N,m}^{\ast\ast } \leq \tau+m-1$ such that
\begin{equation}
\lim_{N,m\rightarrow \infty }\phi _{N,m}\left( t\right) =\left\{ 
\begin{array}{ll}
0 & m\leq t<\tau , \\ 
\infty & t_{N,m}^{\ast }\leq t<t_{N,m}^{\ast\ast }, \\ 
0 & \tau +m-1\leq t\leq T,%
\end{array}%
\right. ,\text{ \ \ under }H_{A,1}.  \notag
\end{equation}%
Clearly 
\begin{equation}
\lim_{N,m\rightarrow \infty }\phi _{N,m}\left( t\right) =\left\{ 
\begin{array}{ll}
0 & m\leq t<\tau , \\ 
\infty & t_{N,m}^{\ast }\leq t<\tau +m-1,%
\end{array}%
\right. ,\text{ \ \ under }H_{A,2};  \notag
\end{equation}%
Under $H_{A,1}$ for $\tau\le t<t_{N,m}^{\ast }$, $\phi _{N,m}\left( t\right) $ is growing from $0$ to $\infty $, and viceversa for $t_{N,m}^{\ast\ast }\le t < \tau+m-1$, while under $H_{A,2}$ for $\tau\le t<t_{N,m}^{\ast }$, $\phi _{N,m}\left( t\right) $
is growing from $0$ to $\infty $. Therefore, $t_{N,m}^{\ast }$ represents the first point in time in which we can hope to detect the change, hence is a lower bound for the delay in detection, while under $H_{A,1}$ $(t_{N,m}^{\ast\ast }-1)$ represents the last point in time in which we can hope to detect the change. In light of \eqref{delaytstar} and the results that follow, we show in Section \ref{delay} below that $t_{N,m}^{\ast }$ is at least of order $ m^{1/2}$ regardless of the values of $m$ and $N$.

Given that the results above entail that we only have rates for $\phi
_{N,m}\left( t\right) $, we propose a to use a randomised version of it,
built according to the following steps.

\begin{description}
\item \textit{\ {Step A1.}} At each given $t\ge m$, generate an \textit{%
i.i.d.} sample $\big\{
\xi _{j}(t)\big\} _{j=1}^{R}$ with common distribution $G_{\phi }$ such that 
$G_{\phi }(0) \neq 0$ or $1$.

\item \textit{\ {Step A2.}} For any $u$ drawn from a distribution $F_{\phi
}\left( u\right) $, define%
\begin{equation*}
\zeta _{j}\left( u;t\right) =I\left[ \xi _{j}(t)\leq u\phi _{N,m}^{-1}\left(
t\right) \right] .
\end{equation*}

\item \textit{\ {Step A3.}} Compute%
\begin{equation*}
\vartheta \left( u;t\right) =\frac{1}{\sqrt R}\sum_{j=1}^{R}\frac{ \zeta
_{j}\left( u;t\right) -G_{\phi }\left( 0\right) }{\sqrt{G_{\phi }\left(
0\right) \left[ 1-G_{\phi }\left( 0\right) \right] }} .
\end{equation*}

\item \textit{\ {Step A4.}} Compute%
\begin{equation*}
\Theta _{t}=\int_{-\infty }^{+\infty }\left\vert \vartheta \left( u;t\right)
\right\vert ^{2}dF_{\phi }\left( u\right) .
\end{equation*}
\end{description}

Although the details of the behaviour of $\Theta _{t}$ under the null and
the alternative are spelt out later on, a heuristic preview of the main
argument may be helpful. In essence, under the alternative the Bernoulli
random variable $\zeta _{j}\left( u;t\right) $ should be equal to $1$ or $0$
with probability $G_{\phi }\left( 0\right) $ and $1-G_{\phi }\left( 0\right) 
$ respectively, and thus have mean $G_{\phi }\left( 0\right) $. In this
case, when constructing $\vartheta \left( u;t\right) $, a Central Limit
Theorem holds and therefore we expect $\Theta _{t}$ to have a chi-square
distribution. On the other hand, under the null $\zeta _{j}\left( u;t\right) 
$ should be (heuristically) $0$ or $1$ with probability $0$ or $1$
(depending on the sign of $u$) - thus, its mean should be different than $%
G_{\phi }\left( 0\right) $ (and equal to $0$ or $1$ depending on the sign of 
$u$) and a Law of Large Numbers should hold. Note that, by construction,
conditionally on the sample the sequence $\{\Theta_t\}_{t=m}^T$ is
independent across $t$. In order to study $\Theta _{t}$, we need the
following assumptions.

\begin{assumption}
\label{ass-4}It holds that: \textit{(i)} $G_{\phi }\left( \cdot \right) $
has a bounded density; \textit{(ii)} $\int_{-\infty }^{+\infty
}u^{2}dF_{\phi }\left( u\right) <\infty $; (iii) $F_{\phi }\left(0 \right)<1 $.
\end{assumption}

\begin{assumption}
\label{restriction-1}
It holds that, as $\min \left( N,m,R\right) \rightarrow
\infty $: 
\begin{equation*}
\text{\it{(i) }}\ R^{1/2}\left[ g\left( N^{1-\delta }\frac{t-\tau +1}{m}\right) \right]
^{-1}\rightarrow 0,%
\begin{array}{ll}
\text{under }H_{A,1}, & \text{for }\ t_{N,m}^{\ast }\leq t<t_{N,m}^{\ast\ast }, \\ 
\text{under }H_{A,2}, & \text{for }\ t_{N,m}^{\ast }\leq t\leq T;%
\end{array}%
\end{equation*}%
(ii) $R^{1/2}\left[ g\left( N^{1-\delta }\right) \right] ^{-1}\rightarrow 0$
under $H_{A,1}$, for $\,\tau +m-1\leq t\leq T$.
\end{assumption}

Considering Assumption \ref{ass-4}, $G_{\phi }$ can be chosen as the
standard normal distribution, and $F_{\phi }$ as a discrete uniform
distribution. Assumption \ref{restriction-1}
provides a selection rule for $R$.

\bigskip
Let now $P^{\ast }$ represent the conditional probability with respect to $%
\{X_{i,t},1\leq i\leq N$, $1\leq t\leq T\}$; \textquotedblleft $\overset{{%
\mathcal{D}}^{\ast }}{\rightarrow }$\textquotedblright\ and
\textquotedblleft $\overset{P^{\ast }}{\rightarrow }$\textquotedblright\
denote, respectively, conditional convergence in distribution and in
probability according to $P^{\ast }$.

\begin{theorem}
\label{theta}Under Assumptions \ref{ass-1}-\ref{restriction-1}, as $\min
\left( N,m,R\right) \rightarrow \infty $, it holds that 
\begin{equation}
\Theta _{t}\overset{\mathcal{D}^{\ast }}{\rightarrow }\chi _{1}^{2},\ 
\begin{array}{ll}
\text{under }H_{A,1}, & \text{\ \ for }\ t_{N,m}^{\ast }\leq t<t_{N,m}^{\ast\ast }, \\ 
\text{under }H_{A,2}, & \text{\ \ for }\ t_{N,m}^{\ast }\leq t\leq T,%
\end{array}
\label{theta-null}
\end{equation}%
for almost all realisations of $\left\{ X_{i,t}\text{, }1\leq i\leq N\text{, 
}1\leq t\leq T\right\} $.

\noindent Under Assumptions \ref{ass-1}-\ref{ass-4}, as $\min \left(
N,m,R\right) \rightarrow \infty $, it holds that 
\begin{equation}
\frac{1}{R}\Theta _{t}\overset{P^{\ast }}{\rightarrow }\frac{\int_{-\infty
}^{+\infty }\left\vert I_{\left[ 0,\infty \right) }\left( u\right) -G_{\phi
}\left( 0\right) \right\vert ^{2}dF_{\phi }\left( u\right) }{G_{\phi }\left(
0\right) \left[ 1-G_{\phi }\left( 0\right) \right] },\ 
\begin{array}{ll}
\text{under }H_{0}, & \text{for }\ m\leq t\leq T, \\ 
\text{under }H_{A,1}, & \text{for }\ m\leq t<\tau, \\ 
& \text{and } \tau+m-1\leq t\leq T, \\ 
\text{under }H_{A,2}, & \text{for }\ m\leq t<\tau ,%
\end{array}
\label{theta-alternative}
\end{equation}%
for almost all realisations of $\left\{ X_{i,t}\text{, }1\leq i\leq N\text{, 
}1\leq t\leq T\right\} $.
\end{theorem}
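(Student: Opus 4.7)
The plan is to work conditionally on the data, so that $\phi_{N,m}(t)$ becomes a deterministic number and $\{\xi_j(t)\}_{j=1}^R$ is i.i.d.\ $G_\phi$ under $P^\ast$. I would first introduce the oracle statistic
\[
\vartheta_\ast(t) = \frac{1}{\sqrt{R}}\sum_{j=1}^R \frac{I[\xi_j(t) \le 0] - G_\phi(0)}{\sqrt{G_\phi(0)[1 - G_\phi(0)]}},
\]
a standardised sum of i.i.d.\ centered Bernoullis under $P^\ast$, to which the classical CLT gives $\vartheta_\ast(t) \overset{\mathcal{D}^\ast}{\to} N(0,1)$ for every realisation of the data. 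Using Lemmas \ref{rolling-eigenvalues} and \ref{rolling-eigenvalues-2}, together with the fact that $N^{-1}\sum_k \widehat\lambda^{(k)}(t) = O_{a.s.}(1)$ and is bounded below (the leading eigenvalue is spiked by Assumption \ref{ass-2}(i)), one concludes on a probability-one event that $\phi_{N,m}(t) \ge C\, g\bigl(N^{1-\delta}(t - \tau + 1)/m\bigr)$ in the alternative regimes of \eqref{theta-null}, whereas $\phi_{N,m}(t) \to 0$ in the regimes appearing in \eqref{theta-alternative}, the latter thanks to the calibration of $\delta$ in \eqref{delta}.

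For \eqref{theta-null}, the core step is to show $\int |\vartheta(u;t) - \vartheta_\ast(t)|^2 dF_\phi(u) \overset{P^\ast}{\to} 0$. By Fubini and a bias--variance decomposition,
\[
E^\ast\!\!\int |\vartheta(u;t) - \vartheta_\ast(t)|^2 dF_\phi(u) \le C\!\!\int \bigl[R\,|G_\phi(u\phi^{-1}_{N,m}(t)) - G_\phi(0)|^2 + |G_\phi(u\phi^{-1}_{N,m}(t)) - G_\phi(0)|\bigr] dF_\phi(u).
\]
Assumption \ref{ass-4}(i) gives $|G_\phi(u\phi^{-1}_{N,m}(t)) - G_\phi(0)| \le C'|u|\phi^{-1}_{N,m}(t)$, and Assumption \ref{ass-4}(ii) makes $\int(u^2 + |u|) dF_\phi$ finite; the lower bound on $\phi_{N,m}(t)$ from the previous paragraph combined with Assumption \ref{restriction-1} then forces the right-hand side to vanish. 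Writing
\[
\Theta_t = \vartheta_\ast^2(t) + 2\vartheta_\ast(t)\int (\vartheta(u;t) - \vartheta_\ast(t)) dF_\phi(u) + \int |\vartheta(u;t) - \vartheta_\ast(t)|^2 dF_\phi(u),
\]
the cross term is $o_{P^\ast}(1)$ by Cauchy--Schwarz and the tightness of $\vartheta_\ast(t)$, so $\Theta_t \overset{\mathcal{D}^\ast}{\to} \chi_1^2$.

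For \eqref{theta-alternative}, on the event where $\phi_{N,m}(t) \to 0$ one has $u\phi^{-1}_{N,m}(t) \to \mathrm{sgn}(u)\cdot\infty$, so $G_\phi(u\phi^{-1}_{N,m}(t)) \to I_{[0,\infty)}(u)$ for $F_\phi$-a.e.\ $u$ (Assumption \ref{ass-4}(i) rules out an atom at zero). Conditional on the data, the $\zeta_j(u;t)$ are i.i.d.\ Bernoulli with parameter $G_\phi(u\phi^{-1}_{N,m}(t))$, so the LLN under $P^\ast$ delivers, for $F_\phi$-a.e.\ $u$,
\[
\frac{1}{R}|\vartheta(u;t)|^2 \;\overset{P^\ast}{\to}\; \frac{|I_{[0,\infty)}(u) - G_\phi(0)|^2}{G_\phi(0)[1 - G_\phi(0)]}.
\]
Since $R^{-1}|\vartheta(u;t)|^2$ is uniformly bounded by $\{G_\phi(0)[1 - G_\phi(0)]\}^{-1}$, bounded convergence in $L^1(P^\ast)$ combined with dominated convergence in the $u$-integral moves the limit inside $\int dF_\phi$, yielding \eqref{theta-alternative}.

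The hardest step throughout is transporting pointwise-in-$u$ convergence into convergence of the $F_\phi$-integral defining $\Theta_t$; under the null this is bought cheaply by the uniform bound and dominated convergence, but under the alternative it forces the $L^2$ estimate above, whose vanishing rests on a sharp matching between the divergence rate of $\phi_{N,m}(t)$ from Lemma \ref{rolling-eigenvalues-2} and the artificial sample size $R$ permitted by Assumption \ref{restriction-1}. Ensuring this matching on a single probability-one event of the data, on which both the a.s.\ eigenvalue bounds and the conditional CLT/LLN simultaneously apply, is the subtlety underpinning the ``for almost all realisations'' qualifier.
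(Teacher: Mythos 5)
Your proof is correct and follows essentially the same strategy as the paper: work under $P^{\ast}$ with $\phi_{N,m}(t)$ frozen, isolate the oracle term $\vartheta_{\ast}(t)$, control the perturbation via a bias--variance bound, and invoke the conditional CLT (for the $\chi_1^2$ limit) or LLN (for the consistency limit). The organisation is slightly different and arguably cleaner: where the paper expands $R^{-1/2}\sum_i[\zeta_i(u;t)-G_\phi(0)]$ into three additive terms and bounds each inside the $u$-integral, you go straight to the $L^2(F_\phi)$ distance $\int|\vartheta(u;t)-\vartheta_\ast(t)|^2\,dF_\phi$ and deduce the cross term is negligible by Cauchy--Schwarz plus tightness of $\vartheta_\ast(t)$; the two are equivalent but your version avoids redundant bookkeeping. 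For \eqref{theta-alternative}, the paper routes the argument through the almost-sure estimate in its Lemma~\ref{second-randomisation} (proved via a M\'oricz maximal inequality and a multi-index Borel--Cantelli argument), which is machinery it needs later for the second randomisation layer; your dominated-convergence argument gives only $P^\ast$-convergence, which is exactly what the displayed statement asserts, so it is a genuine simplification at the cost of proving something weaker than the paper's lemma. This is legitimate for proving Theorem~\ref{theta} itself.

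Two small points worth tightening. First, your parenthetical ``Assumption~\ref{ass-4}(i) rules out an atom at zero'' is aimed at the wrong object: the continuity of $G_\phi$ (which has a bounded density) is what gives $G_\phi(u\phi^{-1})\to I_{[0,\infty)}(u)$ for every $u\neq 0$, but the potentially problematic atom is that of $F_\phi$ at $u=0$, which Assumptions~\ref{ass-4}(i)--(ii) do not rule out; if $F_\phi(\{0\})>0$, the pointwise limit at $u=0$ is $0$ rather than $|1-G_\phi(0)|^2/\{G_\phi(0)[1-G_\phi(0)]\}$, and the stated limiting constant would need the correction $I_{(0,\infty)}$. This ambiguity is already present in the paper's statement and does not indicate a gap in your argument, but the phrasing should be adjusted. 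Second, in deriving the lower bound $\phi_{N,m}(t)\geq C\,g\bigl(N^{1-\delta}(t-\tau+1)/m\bigr)$ you implicitly invoke Lemma~\ref{lambda-average} for the boundedness above and away from zero of $N^{-1}\sum_k\widehat\lambda^{(k)}(t)$; citing it explicitly would make that step self-contained.
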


Theorem \ref{theta} is an intermediate result: in order to
be able to construct a test for the \textquotedblleft
classical\textquotedblright\ null of no changes in the factor structure, it
is necessary to have a statistic which diverges under the null and is
bounded under the alternative. In particular, the behaviour under the null is - clearly - very important to ensure size control of the monitoring procedure. As can be noted, the reason why $\Theta
_{t} $ is bounded under the null is because we have constructed a statistic based on randomising the estimated eigenvalue $\widehat{\lambda}^{(r+1)}\left( t \right)$. Thus, it can be envisaged that randomising its reciprocal would yield the desired behaviour. Whilst this is theoretically possible, we recommend against it: as Lemma \ref{rolling-eigenvalues-2} shows, in this case, under the null, the behaviour of $\Theta
_{t} $ would be driven by a term proportional to (the inverse of) $N^{-\delta}\frac{N}{m^{1/2}}$: but since this estimate is only an upper bound, and thus not sharp (contrary to the case of randomising $\widehat{\lambda}^{(r+1)}\left( t \right)$ directly), it is unclear what the rate of divergence would be in this case.  \\
We therefore
propose to randomise $\Theta
_{t} $, with a second randomisation based on 
\begin{equation}
\psi _{N,m,R}\left( t\right) =h\left( \frac{\Theta _{t}}{\widetilde{l}\left(
N,m,R\right) }\right) , \quad m\le t\le T,  \label{psi-1}
\end{equation}%
where 
\begin{equation*}
\widetilde{l}\left( N,m,R\right) =\left( \ln N\right) ^{2+\epsilon }\left(
\ln m\right) ^{2+\epsilon }\left( \ln R\right) ^{2+\epsilon },
\end{equation*}%
for some $\epsilon >0$ - in practice, any small value of $\epsilon $ works
well.

In (\ref{psi-1}), the function $h\left( \cdot \right) $, similarly to $%
g\left( \cdot \right) $ in (\ref{phi}), is a monotonically increasing
function such that $h\left( 0\right) =0$ and $\lim_{x\rightarrow \infty
}h\left( x\right) =\infty $; again, we use $h\left( a\right) =a$.

Similarly to the case of $\phi _{N,m}\left( t\right) $, Theorem \ref{theta}
entails that
\begin{equation*}
\lim_{N,m,R\rightarrow \infty }\psi _{N,m,R}\left( t\right) =\infty ,\quad
m\leq t\leq T,\text{ \ \ under }H_{0},
\end{equation*}%
and  
\begin{equation*}
\lim_{N,m,R\rightarrow \infty }\psi _{N,m,R}\left( t\right) =\left\{ 
\begin{array}{ll}
\infty & m\leq t<\tau , \\ 
0 & t_{N,m}^{\ast }\leq t<t_{N,m}^{\ast\ast }, \\ 
\infty & \tau +m-1\leq t\leq T,%
\end{array}%
\right. ,\text{ \ \ under }H_{A,1},
\end{equation*}%
while 
\begin{equation*}
\lim_{N,m,R\rightarrow \infty }\psi _{N,m,R}\left( t\right) =\left\{ 
\begin{array}{ll}
\infty & m\leq t<\tau , \\ 
0 & t_{N,m}^{\ast }\leq t<\tau +m-1,%
\end{array}%
\right. ,\text{ \ \ under }H_{A,2}.
\end{equation*}%
Consider now the second randomisation.

\begin{description}
\item \textit{\ {Step B1.}} At each given $t\ge m$, generate an \textit{%
i.i.d.} sample $\big\{ 
\widetilde{\xi }_{j}(t)\big\} _{j=1}^{W}$ with common distribution $G_{\psi
} $ such that $G_{\psi}(0) \neq 0$ or $1$.

\item \textit{\ {Step B2.}} For any $u$ drawn from a distribution $F_{\psi
}\left( u\right) $, define%
\begin{equation*}
\widetilde{\zeta }_{j}\left( u;t\right) =I\left[ \widetilde{\xi }_{j}(t)\leq
u\psi _{N,m,R}^{-1}\left( t\right) \right] .
\end{equation*}

\item \textit{\ {Step B3.}} Compute 
\begin{equation*}
\gamma \left( u;t\right) =\frac{1}{\sqrt W}\sum_{j=1}^{W} \frac{\widetilde{%
\zeta _{j}}\left( u;t\right) -G_{\psi }\left( 0\right) } {\sqrt{ G_{\psi
}\left( 0\right) \left[ 1-G_{\psi }\left( 0\right) \right] }} .
\end{equation*}

\item \textit{\ {Step B4.}} Compute%
\begin{equation*}
\Gamma _{t}=\int_{-\infty }^{+\infty }\left\vert \gamma \left( u;t\right)
\right\vert ^{2}dF_{\psi }\left( u\right) .
\end{equation*}
\end{description}

The following assumptions are needed in order to study the asymptotic
behavior of $\Gamma _{t}$; note their similarity with Assumptions \ref{ass-4}
and \ref{restriction-1}.

\begin{assumption}
\label{ass-5}It holds that: \textit{(i)} $G_{\psi }\left( \cdot \right) $
has a bounded density; \textit{(ii)} $\int_{-\infty }^{+\infty
}u^{4}dF_{\psi }\left( u\right) <\infty $; (iii) $F_{\psi }\left( 0 \right)<1 $.
\end{assumption}

\begin{assumption}
\label{restriction-2}It holds that, as $\min \left( N,m,R,W\right)
\rightarrow \infty $ 
\begin{equation*}
W^{1/2}\left[h\left( \frac{R}{\widetilde{l}\left( N,m,R\right) }\right)%
\right]^{-1} \rightarrow 0.
\end{equation*}
\end{assumption}

As above, in Assumption \ref{ass-5} we can choose $G_{\psi }$ to be the
standard normal distribution, and $F_{\psi }$ to be a discrete uniform
distribution. The restrictions in Assumption \ref{restriction-2} provide a
selection rule for $W$. 

Let $P^{\dag }$ represent the conditional probability with respect to $%
\{X_{i,t},1\leq i\leq N$, $1\leq t\leq T\}$ and $\left\{ \xi _{j}(t),1\leq
j\leq R,\ m\leq t\leq T\right\} $; we use the notation \textquotedblleft $%
\overset{{\mathcal{D}}^{\dag }}{\rightarrow }$\textquotedblright\ and
\textquotedblleft $\overset{P^{\dag }}{\rightarrow }$\textquotedblright\ to
define, respectively, conditional convergence in distribution and in
probability according to $P^{\dag }$.

\begin{theorem}
\label{gamma}Under Assumptions \ref{ass-1}-\ref{restriction-2}, as $\min
\left( N,m,R,W\right) \rightarrow \infty $, it holds that%
\begin{equation}
\Gamma _{t}\overset{\mathcal{D}^{\dag }}{\rightarrow }\chi _{1}^{2},\ 
\begin{array}{ll}
\text{under }H_{0}, & \text{for }\ m\leq t\leq T, \\ 
\text{under }H_{A,1}, & \text{for }\ m\leq t<\tau \text{ and }\tau +m-1\leq
t\leq T, \\ 
\text{under }H_{A,2}, & \text{for }\ m\leq t<\tau ,%
\end{array}
\label{gamma-null}
\end{equation}%
for almost all realisations of $\{ X_{i,t}\text{, }1\leq i\leq N\text{, 
}1\leq t\leq T\} $ and $\{ \xi _{j}(t),1\leq j\leq R\text{, }m\leq
t\leq T\} $.

\noindent Under Assumptions \ref{ass-1}-\ref{restriction-1}, as $\min \left(
N,m,R,W\right) \rightarrow \infty $, it holds that 
\begin{equation}
\frac{1}{W}\Gamma _{t}\overset{P^{\dag }}{\rightarrow }\frac{\int_{-\infty
}^{+\infty }\left\vert I_{\left[ 0,\infty \right) }\left( u\right) -G_{\psi
}\left( 0\right) \right\vert ^{2}dF_{\psi }\left( u\right) }{G_{\psi }\left(
0\right) \left[ 1-G_{\psi }\left( 0\right) \right] },\ 
\begin{array}{ll}
\text{under }H_{A,1}, & \text{for }\ t_{N,m}^{\ast }\leq t<t_{N,m}^{\ast\ast }, \\ 
\text{under }H_{A,2}, & \text{for }\ t_{N,m}^{\ast }\leq t\leq T,%
\end{array}
\label{gamma-alternative}
\end{equation}%
for almost all realisations of $\{ X_{i,t}\text{, }1\leq i\leq N\text{, 
}1\leq t\leq T\} $ and $\{\xi _{j}(t)\text{, }1\leq j\leq R\text{, }%
m\leq t\leq T\}$.
\end{theorem}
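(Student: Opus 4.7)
The plan is to mirror the proof of Theorem~\ref{theta} at a second level of randomization: the (now random) normalizer $\psi_{N,m,R}(t)$ takes the role previously played by $\phi_{N,m}(t)$, and the conditional probability $P^\dag$ replaces $P^\ast$ by further conditioning on the first-round auxiliary sample $\{\xi_j(t)\}$.

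First, I would pin down the asymptotics of $\psi_{N,m,R}(t)$ using Theorem~\ref{theta}. In the regimes listed in (\ref{gamma-null}), equation (\ref{theta-alternative}) gives $\Theta_t/R \to c_\phi$ $P^\ast$-a.s.\ for a strictly positive constant $c_\phi$; combined with $R/\widetilde{l}(N,m,R)\to\infty$, this yields $\psi_{N,m,R}^{-1}(t)\to 0$ $P^\dag$-a.s. In the regimes listed in (\ref{gamma-alternative}), equation (\ref{theta-null}) gives $\Theta_t\overset{\mathcal{D}^\ast}{\to}\chi_1^2$, so $\Theta_t=O_{P^\dag}(1)$, and hence $\psi_{N,m,R}^{-1}(t)\overset{P^\dag}{\to}\infty$.

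Next, I would plug these rates into the randomized statistic conditionally on $\{X_{i,t}\}$ and $\{\xi_j(t)\}$. In the branch $\psi^{-1}\to 0$, for each fixed $u$ the sample $\{\widetilde{\zeta}_j(u;t)\}_{j=1}^W$ is conditionally i.i.d.\ Bernoulli with success probability $G_\psi(u\psi_{N,m,R}^{-1}(t))$ which, by Assumption~\ref{ass-5}(i), deviates from $G_\psi(0)$ by $O(\psi_{N,m,R}^{-1}(t))$; Assumption~\ref{restriction-2} ensures this bias vanishes faster than $W^{-1/2}$, so the Lindeberg--Feller CLT gives $\gamma(u;t)\overset{\mathcal{D}^\dag}{\to} Z\sim N(0,1)$. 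The limiting $Z$ does not depend on $u$ because eventually $I[\widetilde{\xi}_j(t)\leq u\psi^{-1}]=I[\widetilde{\xi}_j(t)\leq 0]$ for all $j$ and all $u$ in any compact set; integrating over $u$ and invoking dominated convergence, justified by the uniform bound $\mathsf{E}^\dag|\gamma(u;t)|^2 \leq [4 G_\psi(0)(1-G_\psi(0))]^{-1}$ together with Assumption~\ref{ass-5}(ii), delivers $\Gamma_t\overset{\mathcal{D}^\dag}{\to} Z^2 = \chi_1^2$, i.e.\ (\ref{gamma-null}). In the branch $\psi^{-1}\to\infty$, for $u\neq 0$ the Bernoulli probability $G_\psi(u\psi_{N,m,R}^{-1}(t))\to I_{[0,\infty)}(u)$, and a conditional strong law yields
\begin{equation*}
\frac{1}{W}\sum_{j=1}^W \widetilde{\zeta}_j(u;t)\;\longrightarrow\; I_{[0,\infty)}(u)\quad\text{a.s.},
\end{equation*}
whence $\gamma(u;t)/\sqrt{W}\to (I_{[0,\infty)}(u)-G_\psi(0))/\sqrt{G_\psi(0)(1-G_\psi(0))}$; squaring and integrating against $F_\psi$ by dominated convergence produces the deterministic limit in (\ref{gamma-alternative}).

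The chief obstacle is bookkeeping the three nested null sets in $\{X_{i,t}\}$, $\{\xi_j(t)\}$ and $\{\widetilde{\xi}_j(t)\}$ so that the ``for almost all realisations'' statements inherited from Theorem~\ref{theta} propagate cleanly through the extra conditioning layer; concretely, one must show that the event $\{\Theta_t/R\to c_\phi\}$ has full $P^\ast$-probability for almost every realisation of $\{X_{i,t}\}$, and then transfer this to a statement about $\psi_{N,m,R}(t)$ holding on a set of full $P^\dag$-measure for almost every joint realisation $(\{X_{i,t}\},\{\xi_j(t)\})$. A secondary technical point, also non-trivial, is establishing the uniform integrability required to pass from pointwise-in-$u$ convergence of $\gamma(u;t)$ to convergence of $\Gamma_t$, which is precisely why Assumption~\ref{ass-5}(ii) strengthens the second-moment requirement of Assumption~\ref{ass-4}(ii) to a fourth-moment one.
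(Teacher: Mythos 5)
Your high-level plan is the paper's: run the Theorem~\ref{theta} argument a second time, with $\psi_{N,m,R}(t)$ in the role of $\phi_{N,m}(t)$ and $P^\dag$ in the role of $P^\ast$. You also correctly flag, at the end, exactly where the difficulty sits. The problem is that you identify the obstacle but do not clear it, and your intermediate steps quietly assume it away.

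Specifically, to have $\Gamma_t\overset{\mathcal{D}^\dag}{\to}\chi^2_1$ ``for almost all realisations of $\{X_{i,t}\}$ and $\{\xi_j(t)\}$'', you need the normaliser $\psi_{N,m,R}(t)$ to diverge (resp.\ vanish) on a set of full joint probability, not merely with probability tending to one. But Theorem~\ref{theta} only gives you $\frac{1}{R}\Theta_t\overset{P^\ast}{\to}C$ (convergence in $P^\ast$-probability) and $\Theta_t\overset{\mathcal{D}^\ast}{\to}\chi_1^2$ (convergence in $P^\ast$-distribution, i.e.\ tightness). Neither delivers, on its own, $\frac{\Theta_t}{\widetilde{l}(N,m,R)}\to\infty$ a.s. under $H_0$ or $\frac{\Theta_t}{\widetilde{l}(N,m,R)}\to 0$ a.s. in the post-break regime. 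In your first paragraph you write ``$\Theta_t/R\to c_\phi$ $P^\ast$-a.s.'' --- that is not what (\ref{theta-alternative}) says, and for a triple-indexed sequence in $(N,m,R)$ the upgrade from convergence in probability to a.s.\ convergence is genuinely non-automatic. The paper supplies precisely this missing piece as Lemma~\ref{second-randomisation}: equations (\ref{secondrand-1}) and (\ref{second-rand-2}) establish the required almost-sure limits for $\Theta_t/\widetilde{l}(N,m,R)$ via the M\'oricz maximal inequality (controlling $\max_{n\le N,\,s\le m,\,r\le R}U_{n,s,r}$ in $L^1$) together with a conditional Borel--Cantelli argument (Chen, 1978) to convert the conditional summability into an unconditional a.s.\ statement. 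Once that lemma is in hand, the rest of your argument --- the Lindeberg CLT for $\gamma(u;t)$ when $\psi^{-1}\to 0$, the LLN when $\psi^{-1}\to\infty$, and the passage from pointwise-in-$u$ convergence to convergence of $\Gamma_t$ using Assumption~\ref{ass-5}\textit{(ii)} --- is exactly right and matches the structure of the proof of Theorem~\ref{theta}. So the gap is not in your CLT/LLN mechanics but in the a.s.\ control of $\psi_{N,m,R}(t)$; without an argument of the Lemma~\ref{second-randomisation} type, the ``for almost all realisations'' conclusion does not follow.
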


Theorem \ref{gamma} is, again, an intermediate result. It states that $%
\Gamma _{t}$ has (asymptotically) a chi-square distribution under the null
of no breaks; further, by construction the sequence $\big\{\Gamma _{t}\big\}%
_{t=m}^T$ is independent across $t$ conditional on the sample. We now
discuss how these two basic facts can be employed in order to propose a
monitoring scheme for the on-line detection of breaks in the factor
structure.

\subsection{Sequential monitoring of factor models\label{sequential}}

We base our sequential monitoring procedure on the theory developed in %
\citet{lajos04}. Recall that, after collecting $m$ observations, we monitor
our model over the period $m+1\le t\le T$, which has size denoted as $T_m =
T-m$. We then consider a monitoring procedure based on the detector 
\begin{equation}  \label{eq:detector}
d\left( k;m\right) =\left\vert \sum_{t=m+1}^{m+k}\frac{\Gamma _{t}-1}{\sqrt{2%
}}\right\vert, \quad 1\le k\le T_m,
\end{equation}
which covers the entire monitoring period. In other words our detector is
made of the cumulative sum of the centered and standardized version of the
sequence $\{\Gamma_t\}_{t=m}^T$, obtained by double randomisation. 
Other detectors, differing form \eqref{eq:detector} only with respect to the start of the monitoring period, could be also suggested.
In particular, \citet{kirch17} suggest to use a rolling window, thus starting the monitoring procedure at $t=m+k-h+1$ for some $\underline h<h<k$, with $\underline h$ large enough. The asymptotic
properties of such alternative detector can be derived in a way
similar to the results proved in this section and therefore are not
discussed in this paper.
In light of Theorem \ref{gamma}, a break implies a shift in the mean of $\Gamma_t$ and therefore in the detector \eqref{eq:detector}. Therefore, our monitoring scheme looks for large deviations of $d(k;m)$ from its null-distribution.

Given the stopping rule 
\begin{equation}
\widehat{k}_{m}=\left\{ 
\begin{array}{l}
\inf \left\{ 1\leq k\leq T_{m}\text{, such that }d\left( k;m\right) \geq \nu
\left( k;m\right) \right\} , \\ 
T_{m}\text{ if the above does not hold in } 1\leq k\leq T_{m},%
\end{array}%
\right.  \label{tau-m}
\end{equation}%
we define the estimated change-point location as $\widehat{\tau }_{m}=%
\widehat{k}_{m}+m$. The threshold function in \eqref{tau-m} is defined as
(see \citet{lajos04} and \citet{lajos07})%
\begin{eqnarray}
\nu \left( k;m\right) &=&c_{\alpha ,m}\nu ^{\ast }\left( k;m\right) ,
\label{threshold-1} \\
\nu ^{\ast }\left( k;m\right) &=&m^{1/2}\left( 1+\frac{k}{m}\right) \left( 
\frac{k}{k+m}\right) ^{\eta },\text{ }\eta \in \left[ 0,\frac{1}{2}\right] ,
\label{threshold-2}
\end{eqnarray}%
where $c_{\alpha ,m}$ is a critical value corresponding to a pre-specified
level $\alpha $. Depending on the choice of $\eta $, the critical value is
defined as 
\begin{equation}
P\left( \sup_{0\leq t\leq 1}\frac{\left\vert B\left( t\right) \right\vert }{%
t^{\eta }}\leq c_{\alpha ,m}\right) =1-\alpha ,\;\text{ for }\eta \in \left[
0,\frac{1}{2}\right) ,  \label{soglia1}
\end{equation}%
where $\{B\left( t\right) ,\ 0\leq t\leq 1\}$ denotes a standard Wiener
process, or 
\begin{equation}
c_{\alpha ,m}=\frac{D_{m}-\ln \left[ -\ln \left( 1-\alpha \right) \right] }{%
A_{m}},\;\text{ for }\eta =\frac{1}{2},  \label{soglia2}
\end{equation}%
with $A_{m}=\left( 2\ln \ln m\right) ^{1/2}$ and $D_{m}=2\ln \ln m+\frac{1}{2%
}\ln \ln \ln m-\frac{1}{2}\ln \pi $. Note that in (\ref{soglia1}) $c_{\alpha
,m}$ does not depend on $m$, whilst it does in (\ref{soglia2}). Note also that \citet{CSW96}, albeit in a different context, choose $\eta =0$. It
is well known that tests based on $\eta =0$ have the smallest power, which
on the contrary increases as $\eta $ increases (see the discussion in %
\citet{lajos04}). 

In order to derive our main theorem, we also need the following assumptions.

\begin{assumption}
\label{ass-6}It holds that (i) $T_{m}=O\left( m^{\varkappa }\right) $ for
some $\varkappa \geq 1$; (ii) $\lim \inf_{m\rightarrow \infty }\frac{T_{m}}{m%
}>0$; (iii) $T_{m}>\tau +C_{0}m^{1/2+\epsilon }$ for $\epsilon >0$ such that 
$\frac{N^{1-\delta }}{m^{1/2-\epsilon }}\rightarrow C_{1}$.
\end{assumption}

\begin{assumption}
\label{restriction-3}It holds that (i) $\int_{-\infty }^{+\infty }\left\vert
u\right\vert ^{4+2\delta }dF_{\psi }\left( u\right) <\infty $; 
\begin{equation*}
\text{\it{(ii)}}\ \ m^{1/2+\epsilon }\left\{W^{-1}+W\left[h\left( \frac{R}{\widetilde{l}\left(
N,m,R\right) }\right)\right]^{-2} +\left[h\left( \frac{R}{\widetilde{l}%
\left( N,m,R\right) }\right)\right]^{-1} \right\} \rightarrow 0,
\end{equation*}%
for some $\epsilon >0$.
\end{assumption}

Assumption \ref{ass-6} is the same as equation (1.12) in \citet{lajos07},
and it essentially requires that the monitoring goes on for a sufficiently
long time, longer than the initial training period $m$. In particular, we need to monitor for a number of periods of order at least $m^{1/2}$. Assumption \ref%
{restriction-3} strengthens Assumption \ref{ass-5}\textit{(ii)}, and it is
needed to prove a moment condition for the sequence $\{\Gamma
_{t}\}_{t=m}^{T}$ which will enable a Central Limit Theory to hold. Our main result follows.


\begin{theorem}
\label{darling-erdos}Let Assumptions \ref{ass-1}-\ref{restriction-3} hold.
Under $H_{0}$\ it holds that, as $\min \left( N,m,R,W\right) \rightarrow
\infty $ 
\begin{equation}
P^{\dag }\left( \max_{1\leq k\leq T_{m}}\frac{d\left( k;m\right) }{\nu
^{\ast }\left( k;m\right) }\leq x\right) \rightarrow P\left( \sup_{0\leq
t\leq 1}\frac{\left\vert B\left( t\right) \right\vert }{t^{\eta }}\leq
x\right) ,\text{ \ \ for }\eta \in\l .\l [0,\frac 1 2\r.\r),  \label{erdos-1}
\end{equation}%
%
%
%
%
%
%
%
%
%
%
%
%
%
\begin{equation}
P^{\dag }\left( \max_{1\leq k\leq T_{m}}\frac{d\left( k;m\right) }{\nu
^{\ast }\left( k;m\right) }\leq \frac{x+D_{m}}{A_{m}}\right) \rightarrow
e^{-e^{-x}},\text{ \ \ for }\eta =\frac{1}{2},  \label{erdos-2}
\end{equation}%
for almost all realisations of $\left\{ X_{i,t}\text{, }1\leq i\leq N\text{, 
}1\leq t\leq T\right\} $ and $\{ \xi _{j}(t)\text{, }1\leq j\leq R\text{, }%
m\leq t\leq T\} $ and for $x\in\mathbb{R}^+$.

\noindent Under $H_{A,1}$ and $H_{A,2}$, as $\min \left( N,m,R,W\right)
\rightarrow \infty $, and for a given significance level $\alpha$, it holds
that%
\begin{equation}
c_{\alpha ,m}^{-1}\max_{1\leq k\leq T_{m}}\frac{d\left( k;m\right) }{\nu
^{\ast }\left( k;m\right) }\overset{P^{\dag }}{\rightarrow }\infty ,\text{ \
\ for all }\eta \in\left[0,\frac{1}{2}\right],  \label{power-suffcond-2}
\end{equation}
for almost all realisations of $\left\{ X_{i,t}\text{, }1\leq i\leq N\text{, 
}1\leq t\leq T\right\} $ and $\{ \xi _{j}(t)\text{, }1\leq j\leq R\text{, }%
m\leq t\leq T\} $, where $c_{\alpha ,m}$ is defined in \eqref{soglia1} when $\eta<\frac 12$ and
in \eqref{soglia2} when $\eta=\frac 12$.
\end{theorem}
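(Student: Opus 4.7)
The plan is to reduce the claim to a Darling--Erd\H{o}s/invariance principle for partial sums of a conditionally i.i.d.\ sequence, using Theorem~\ref{gamma} to identify that sequence as $\{\Gamma_t\}$. Conditionally on the sample and on the first-stage variables $\{\xi_j(t)\}$, the second-stage randomisation makes $\{\Gamma_t\}_{t=m+1}^{T}$ independent across $t$. Under $H_0$, Theorem~\ref{gamma}(first part) yields $\Gamma_t \overset{\mathcal{D}^\dag}{\to} \chi_1^2$ for every $t$, so $Z_t := (\Gamma_t - 1)/\sqrt{2}$ is asymptotically centred with unit variance, and $d(k;m)$ is, up to asymptotically negligible terms, the modulus of the partial sum of the $Z_t$'s.

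The first step is to upgrade the pointwise convergence in Theorem~\ref{gamma} to a quantitative approximation of the conditional moments of $\Gamma_t$ by those of $\chi_1^2$, uniformly in $t\in[m+1,T]$. Assumption~\ref{restriction-3}(ii) controls the rate at which $\psi_{N,m,R}(t)^{-1}$ vanishes, and combined with the $(4+2\delta)$-moment bound of Assumption~\ref{restriction-3}(i), a standard truncation/Chebyshev argument delivers both a Lindeberg condition for the triangular array $\{Z_t\}$ and a rate at which the first two conditional moments of $\Gamma_t$ converge to $1$ and $3$. This is the step I expect to be the main obstacle: the approximation error in Theorem~\ref{gamma} has to be small enough that its accumulation over a window of length $T_m=O(m^{\varkappa})$ stays strictly below the threshold $\nu(k;m)$, which is precisely what the rates in Assumption~\ref{restriction-3}(ii) are designed to ensure.

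Once the centering and scaling are controlled, for $\eta\in[0,1/2)$ the statement \eqref{erdos-1} follows from Donsker's theorem applied under $P^\dag$: after the time change $s=k/(k+m)$ the normalised partial sum process of $\{Z_t\}$ converges weakly to a standard Brownian motion, $\nu^*(k;m)$ becomes $s^\eta$, and the continuous-mapping theorem applied to $f\mapsto \sup_{0<s<1}|f(s)|/s^\eta$ yields the stated limit. For $\eta=1/2$, \eqref{erdos-2} follows from the Darling--Erd\H{o}s theorem in the form used by \citet{lajos04,lajos07}; the $(4+2\delta)$-moment bound of Assumption~\ref{restriction-3}(i) is exactly the integrability condition needed to invoke that result for the array $\{Z_t\}$.

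Under $H_{A,1}$ or $H_{A,2}$, Theorem~\ref{gamma}(second part) gives $\Gamma_t/W \overset{P^\dag}{\to} c>0$ on the post-break windows described there. Picking $k=t-m$ for any such $t$, one has
\begin{equation*}
d(k;m)\geq \frac{1}{\sqrt{2}}\sum_{s=t_{N,m}^{\ast}}^{t}\Gamma_s - O_{P^\dag}\!\left(\sqrt{k}\right),
\end{equation*}
so $d(k;m)/W$ is bounded below by a positive constant in $P^\dag$-probability, whereas $\nu(k;m)=c_{\alpha,m}\nu^*(k;m)=O\bigl(c_{\alpha,m}(m^{1/2}+km^{-1/2})\bigr)$ grows only polynomially in $m$ and $k$ and is therefore of strictly smaller order than $W$ as $W\to\infty$. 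This yields \eqref{power-suffcond-2} for every $\eta\in[0,1/2]$; Assumption~\ref{ass-6}(iii) guarantees that $T_m$ extends past $t_{N,m}^{\ast}$, so that such a $k$ lies in $[1,T_m]$.
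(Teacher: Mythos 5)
Your proposal follows essentially the same route as the paper. The central structural move — treating $\{\Gamma_t\}$ as conditionally independent, quantifying how fast $E^\dag(\Gamma_t)$ and $V^\dag(\Gamma_t)$ approach $1$ and $2$, and showing this approximation error accumulates slower than $\nu(k;m)$ over a window of length $T_m$ — is precisely what the paper's Lemma~\ref{gamma-chisq} establishes via Assumption~\ref{restriction-3}, and you correctly identify it as the crux. For $\eta\in[0,1/2)$ the paper likewise appeals to the functional-limit argument of \citet{lajos04}; for $\eta=1/2$ the paper spells out the KMT strong approximation and the modulus-of-continuity step before invoking Lemmas 3.4--3.6 of \citet{lajos07}, whereas you invoke the Darling--Erd\H{o}s result as a package, which is the same argument at one remove. (One small correction: the integrability hypothesis actually consumed by the strong approximation is the derived $(2+\delta)$-moment bound $E^\dag|\Gamma_t|^{2+\delta}<\infty$ of Lemma~\ref{gamma-chisq}, not the raw $(4+2\delta)$-moment condition on $F_\psi$ in Assumption~\ref{restriction-3}(i), which is the input used to prove that bound.) Your power argument is also the paper's: under the alternative, $\Gamma_t$ jumps to order $W$ past $t_{N,m}^*$, so the detector outpaces $\nu(k;m)$; you should note explicitly that the comparison between $d(k;m)\sim W$ and $\nu(k;m)$ is not automatic from ``polynomial growth'' but uses Assumption~\ref{restriction-3}(ii) (which forces $W\gg m^{1/2+\epsilon}$) together with Assumption~\ref{ass-6}, exactly as the paper does in its final estimate $\max_k\{\cdots\}\geq C_2>0$.
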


The main implication of Theorem \ref{darling-erdos} is summarized in the following result (recall that $T=T_m-m$):

\begin{corollary}
\label{corollaryerdos} Under the assumptions of Theorem \ref{darling-erdos}
it holds that, as $\min \left( N,m,R,W\right)\rightarrow \infty $
\begin{align}
& P^{\dag }\left( \widehat{\tau }%
_{m}<T\right) \leq \alpha ,\text{ \ \ under }H_{0},  \label{size} \\
& P^{\dag }\left(t_{N,m}^{\ast }\le  \widehat{\tau }%
_{m}< t^{\ast\ast}_{N,m}\right) =1,\text{ \ \ under }H_{A,1}  \label{power}\\
& P^{\dag }\left(t_{N,m}^{\ast }\le  \widehat{\tau }%
_{m}\le T\right) =1,\text{ \ \ under }H_{A,2},  \label{power2}
\end{align}%
for almost all realisations of $\left\{ X_{i,t}\text{, }1\leq i\leq N\text{, 
}1\leq t\leq T\right\} $ and $\{ \xi _{j}(t)\text{, }1\leq j\leq R\text{, }%
m\leq t\leq T\} $.
\end{corollary}

The notion of size implied by (\ref{size}), in this context, is very
different from the one usually considered in the literature. The purpose of the procedure is to keep the false
rejection probability as little as possible, and therefore (at a minimum)
below the threshold $\alpha $, rather than making it close to $\alpha $.
This makes the monitoring procedure different from the standard
Neyman-Pearson paradigm (and, in general, from a multiple testing exercise): given that the monitoring horizon keeps expanding,
the purpose of $c_{\alpha ,m}$ is to ensure that the chance of a false break
detection is as little as possible -- see also similar comments in %
\citet{lajos07}. 

\subsection{Delay in change-point detection\label{delay}}

A consequence of our approach is that monitoring for a structural change
(despite being in a high-dimensional set-up) can be treated as in a
classical time series framework. In particular, in addition to the
consistency of the procedure, a natural question is how much would the delay
be in detecting a break. In order to formally address this issue, one can
directly use the results by \citet{aue2004delay}; hereafter, we provide a
heuristic discussion of the magnitude of the delay within our setup.

Consider the notation $a_{n}=\Omega \left( b_{n}\right) $ to indicate that the magnitude of the sequence $a_{n}$ is not smaller than that of $b_{n}$%
, viz. $a_{n}>Cb_{n}>0$. Then, by construction, $\{\Gamma _{t}\}_{t=m}^{T}$
has, under the alternative, a \textquotedblleft large\textquotedblright\
shift in the mean after $t_{N,m}^{\ast }$, where $t_{N,m}^{\ast }$ is such
that (recall \eqref{delaytstar})
\begin{equation}
t_{N,m}^{\ast }-\tau =\Omega \left( \frac{m}{N^{1-\delta }}\right) .
\label{delay-1}
\end{equation}%
Defining $\beta $ such that $N=m^{\beta }$, and using (\ref{delta}), it is
possible to analyse (\ref{delay-1}) for various relative rates of divergence
of $m$ and $N$ as they pass to infinity. When $\beta >\frac{1}{2}$, we have
that $\delta =1-\frac{1}{2\beta }+\epsilon $ for an arbitrarily small value
of $\epsilon$. Thus, by (\ref{delay-1})%
\begin{equation*}
t_{N,m}^{\ast }=\tau +\Omega \left( \frac{m}{m^{\beta \left( 1-\delta
\right) }}\right) =\tau +\Omega \left( m^{1/2+\epsilon ^{\prime }}\right) ,
\end{equation*}%
where $\epsilon ^{\prime }>0$ is arbitrarily small. Thus, when $N$ is not
much smaller than $m$, or even larger, the change-point is detected with a delay, $%
t_{N,m}^{\ast }-\tau $, which is of order at least $m^{1/2}$. By the same token,
whenever $\beta \leq \frac{1}{2}$, i.e. $N$ is much smaller than $m$, we have that $\delta =\epsilon $ for an
arbitrarily small value of $\epsilon $, so that 
\begin{equation*}
t_{N,m}^{\ast }=\tau +\Omega \left( m^{1-\beta \left( 1-\delta \right)
}\right) ,
\end{equation*}%
and, by elementary arguments, it follows that $m^{1-\beta ( 1-\delta
) }=\Omega ( m^{1/2+\epsilon ^{\prime }}) $: the delay, in
this case, might be bigger. This is in line with the intuition that a break
will cause $\Gamma _{t}$ - and consequently the detector - to diverge as
fast as $N$: the lower $N$, the lower the divergence rate, and the less
effective the detecion of breaks. Finally, it is interesting to consider the
ultra high-dimensional case, $N=\exp \left( m\right) $. By (\ref{delta}), it
holds that $\delta =1-\left( 1-\epsilon \right) \frac{\ln m}{2m}$ for an
arbitrarily small value of $\epsilon $. Hence, (\ref{delay-1}) yields%
\begin{equation*}
t_{N,m}^{\ast }=\tau +\Omega \left( \frac{m}{\exp \left( \left( 1-\delta
\right) m\right) }\right) =\tau +\Omega \left( m^{1/2+\epsilon ^{\prime
}}\right) ,
\end{equation*}%
again. In essence, in all cases considered there is a delay in the detection
of breaks which is greater than $C_{0}m^{1/2}$, but smaller than $C_{0}m$ -
that is, rescaling the delay by the sample size, this vanishes.

\section{Applying the test under general circumstances\label{discussion}}

The purpose of this section is to discuss how
the test could be applied under slightly different assumptions than the ones
above, and up to which extent such assumptions can be relaxed. More
substantive extensions, which involve modifications of the test, are briefly
discussed in the concluding remarks in Section \ref{conclusions}.

\subsection{Weak factors and local alternatives\label{weak}}

The theory developed in this paper - starting from Assumption \ref{ass-2} -
implicitly requires that, when a new factor appears as a consequence of a
break, this should be a pervasive factor. Indeed, part \textit{(i)} of the
assumption entails that spiked eigenvalues must diverge at a rate $N$, i.e.
a \textquotedblleft strong\textquotedblright\ factor model. However, the
literature has also considered cases in which one or more common factor may
be less pervasive, thus leading to a covariance matrix which has some
eigenvalues passing to infinity at a rate $N^{\kappa }$, for $\kappa \in
\left( 0,1\right) $. A possible example of weak factors arises when
considering jointly macroeconomic data of different countries: global
factors are strong since they are likely to affect all countries; however
national factors, although strong within a given country, will affect only a
subset of all variables considered and can be seen as weak -- see e.g. the
empirical study in \citet{moench_dynamic_2013}. Estimation of factor models
in the presence of such \textquotedblleft weak\textquotedblright\ or
``local'' factors have been paid considerable attention by the literature -
see \citet{DGR08}, \citet{onatski12}, in the same setting as ours and, in a
slightly different context, \citet{LY12}.
%
The notion of weak factors is intertwined with that of a local alternative
hypothesis where the break does happen but it is \textquotedblleft
small\textquotedblright, for example when a break is caused by a change of only some, but not all, loadings. 
We focus on the (algebraically simpler) case of $H_{A,2}$. Consistently with the literature on weak factors,
we allow the $\left( r+1\right) $%
-th eigenvalue to behave as 
\begin{equation}  \label{weakbreak}
\lambda ^{\left( r+1\right) }\left( t\right) = C_{0}N^{\kappa}, \text{\ \ for }\ \tau \leq t\leq T,
\end{equation}
for $\kappa\in(0,1)$, while it is bounded for all other values of $t$.

We now discuss heuristically under which conditions such small breaks can be
detected; we consider for simplicity the case $\eta <\frac{1}{2}$. We know
that, based on Theorem \ref{gamma}, a break in the $\left( r+1\right) $-%
th largest eigenvalue enters the sequence $\{\Gamma _{t}\}_{t=m}^{T}
$ as a shift in its mean: this is essentially the way in which the
monitoring procedure picks up the presence of a break. In particular, by analysing the proof of
Theorem \ref{theta} and using a Mean Value argument, it follows that%
\begin{equation}
\Theta _{t}\approx R\int_{-\infty }^{+\infty }\left\vert G_{\phi }\left(
u\phi _{N,m}^{-1}\left( t\right) \right) -G_{\phi }\left( 0\right)
\right\vert ^{2}dF_{\phi }\left( u\right) \approx C_{0}R\phi
_{N,m}^{-2}\left( t\right) ,  \label{local-break-2}
\end{equation}%
for any $t\geq t_{N,m}^{\ast }$ for which $H_{A,2}$ holds. Then, from \eqref{gamma-alternative}, for the same values of $t\geq t^{\ast}_{N,m}$ for which \eqref{local-break-2} holds we have 
\begin{equation}
\Gamma _{t}\approx W\int_{-\infty }^{+\infty }\left\vert G_{\psi }\left(
u\psi _{N,m,R}^{-1}\left( t\right) \right) -G_{\psi }\left( 0\right)
\right\vert ^{2}dF_{\psi }\left( u\right) \approx C_{0}W\psi
_{N,m,R}^{-2}\left( t\right)  \label{local-break-1}.
\end{equation}%
Consider now the case where $g (\cdot 
)$ in (\ref{phi}) and  $h\left( \cdot \right) $ in (\ref%
{psi-1}) are both the identity function. Recalling the notation $N=m^{\beta }$, and noting that by \eqref{weakbreak} we have $\phi
_{N,m}\left( t\right) \approx N^{\kappa -\delta }$, by \eqref{psi-1}, \eqref{local-break-2}, and \eqref{local-break-1}, we have 
\begin{align}
\Gamma _{t}&
\approx C_0  W R^{-2} (\ln N)^{8+\epsilon}(\ln R)^{4+\epsilon} N^{4(\kappa-\delta)}=\Delta _{N,R,W}. \nonumber
\end{align}
Upon inspecting the proof of Theorem \ref{darling-erdos}, in order for the procedure to detect a break, it is required that $m^{1/2} \Delta_{N,R,W} \rightarrow \infty$, as $\min(m,N,R,W)\to\infty$. 
Therefore, if $\Delta_{N,R,W}\to\infty$ the break is always detectable. If instead $\Delta_{N,R,W}\to 0$, we are in presence of a shrinking break. By Assumption \ref{restriction-2}, a sufficient condition to have a shrinking break is 
\beq
\kappa\le \delta, \label{suffweak}
\eeq
 and a necessary condition for the break to be detectable is 
 \beq
 \kappa>\delta-\frac 1{8\beta}. \label{necweak}
 \eeq

Consider first the case $N>m^{1/2}$. Then, by definition of $\delta$ we always have a shrinking break whenever $\kappa\le1-\frac 1{2\beta}$ and moreover a new weak factor is detected if at least $\kappa > 1-\frac 5{8\beta}$. This entails that we can hope to detect new weak factor for any $\kappa> 0$ only if $\beta < \frac 58$; conversely, for larger values of $\beta$ the range of values of $\kappa$ for which we can detect a new factor is reduced, e.g. for $N=m$, we must have at least $\kappa>\frac 3 8$.

Turning to the case $N\le m^{1/2}$, since we can choose $\delta$ to be infinitesimally small, \eqref{suffweak} is never satisfied but \eqref{necweak} is always satisfied and in general we cannot say more about the ability of our procedure to detect a shrinking break. However, we note that in the case the case $N=R=W$, as in Sections \ref{numerics} and \ref{emp} below, a necessary and sufficient condition for a break to be shrinking and detectable is $\delta+\frac 14-\frac 1{8\beta}  <\kappa<\delta+\frac 14$, and when $N\le m^{1/2}$ a new factor is always detected regardless of $\kappa$.

\subsection{Heteroskedasticity in the idiosyncratic component\label%
{hetoeroskedasticity}}

The main assumptions in the paper are spelt out with respect to $X_{i,t}$,
avoiding to make any comments on the properties of $u_{i,t}$ across time. We
now discuss the behaviour of the test in the presence of heteroskedasticity,
which is not explicitly considered (although not ruled out) by Assumption %
\ref{ass-2}. For the sake of simplicity, we consider the case of an abrupt
change in the covariance matrix of $\{u_{i,t}\}_{i=1}^N$, although more
general forms of heteroskedasticity could also be considered.

To illustrate this, we consider a simple example where the covariance matrix 
$E\left( u_{t}u_{t}^{\prime }\right)$ undergoes an abrupt change of size $%
\Delta _{u}$ after a point in time, say $\tau ^{\ast }$:%
\begin{equation*}
E\left( u_{t}u_{t}^{\prime }\right) =\left\{ 
\begin{array}{l}
\Sigma _{u} \\ 
\Sigma _{u}+\Delta _{u}%
\end{array}%
\right. \text{ for }%
\begin{array}{l}
m\leq t< \tau ^{\ast }, \\ 
\tau ^{\ast }\leq t\leq T,%
\end{array}%
\end{equation*}%
where $\Delta_u$ affects some or even all covariances. The only condition we require in order for our test to be applicable is $%
\omega ^{\left( 1\right) }\left( m^{-1}\sum_{k=t-m+1}^{t}E\left(
u_{t}u_{t}^{\prime }\right) \right) \leq C_0$ for each $t\ge m$, where the
notation $\omega ^{\left( 1\right) }\left( A\right) $\ is understood to
represent the largest eigenvalue of a matrix $A$. This holds, when $t< \tau
^{\ast }$, as long as $\omega ^{\left( 1\right) }\left( \Sigma _{u}\right)
\leq C_0$. When $t\geq \tau ^{\ast }$, using Weyl's inequality it follows
that%
\begin{equation}
\omega ^{\left( 1\right) }\left( \frac 1 m\sum_{k=t-m+1}^{t}E\left(
u_{t}u_{t}^{\prime }\right) \right) \leq \omega ^{\left( 1\right) }\left(
\Sigma _{u}\right) +\omega ^{\left( 1\right) }\left( \Delta _{u}\right) ,
\label{negligible-break-idio}
\end{equation}%
which is bounded as long as $\omega ^{\left( 1\right) }\left( \Sigma
_{u}\right) \leq C_{0}$ and $\omega ^{\left( 1\right) }\left( \Delta
_{u}\right) \leq C_{1}$. In essence, as long as the perturbation matrix $%
\Delta _{u}$ is not too big, and therefore as long as the changes in the
covariance structure of the idiosyncratic are not too big, our test can
still be applied.

Condition (\ref{negligible-break-idio}) has interesting implications.
Consider a break such that $\Delta _{u}=diag\left\{ d_{i}\right\} $, with $%
0\leq d_{i}\leq C_{0}$ for all $1\le i\le N$. In such a case, where the
variances of the error terms all undergo a change (potentially), but the
covariance structure does not change, it would hold that $\omega ^{\left(
1\right) }\left( \Delta _{u}\right) \leq C_{0}$: even a large (but of finite
size) break in the variance of the idiosyncratic components does not alter
the structure of the eigenvalues of $E\left( X_{t}X_{t}^{\prime }\right) $,
by introducing a spurious spiked eigenvalue. Thus, an interesting question
about the robustness of our procedure is: when is a break in the
idiosyncratic component strong enough to be confused with a break in the
factor structure? By the same (heuristic) token as above, the eigenvalue
structure of $E\left( X_{t}X_{t}^{\prime }\right) $ would change if, for
argument's sake, $\omega ^{\left( 1\right) }\left(
m^{-1}\sum_{k=t-m+1}^{t}E\left( u_{t}u_{t}^{\prime }\right) \right)
=C_{0}N^{\varepsilon }$ with $\varepsilon \in \left( 0,1\right] $. By Weyl's
inequality assuming for simplicity that there is no break in the factor
component%
\begin{equation}
\omega ^{\left( 1\right) }\left( \frac1 m\sum_{k=t-m+1}^{t}E\left(
u_{t}u_{t}^{\prime }\right) \right) \geq \omega ^{\left( N\right) }\left(
\Sigma _{u}\right) +\omega ^{\left( 1\right) }\left( \Delta _{u}\right) \geq
\omega ^{\left( 1\right) }\left( \Delta _{u}\right) .
\label{non-negligible-break-idio}
\end{equation}%
Therefore, a sufficient condition would be $\omega ^{\left( 1\right) }\left(
\Delta _{u}\right) =C_{0}N^{\varepsilon }$. Moreover, given that $\omega
^{\left( 1\right) }\left( \Delta _{u}\right) \geq N^{-1}
\sum_{i=1}^N\sum_{j=1}^N\left\{ \Delta _{u}\right\} _{i,j}, $ then (\ref%
{non-negligible-break-idio}) suggests that a break which is
\textquotedblleft sufficiently pervasive\textquotedblright , so that it
affects not merely the variances of the idiosyncratic components, but also
their covariances (without needing to be necessarily huge), could introduce
a spiked eigenvalue in $E\left( X_{t}X_{t}^{\prime }\right)$. In such cases
our procedure might detect $\tau^*$ as a change-point even if the signal
component does not change - see also the same phenomenon documented in the off-line case by \citet{BCF16}.

\subsection{Extensions to consider further alternative hypotheses\label%
{alternatives}}

So far, we have focused our attention onto two empirically relevant but very
specific forms of alternative hypotheses: a possible change in the
loadings - $H_{A,1}$ - and a possible increase in the number of factors - $%
H_{A,2}$. However, our methodology is sufficiently general to be adapted
(with minor modifications) to other cases also. A\ leading example is the
case in which $q\ge 1$ factors vanish, viz.%
\begin{equation}
H_{A,3}: \left\{ 
\begin{array}{l}
X_{i,t}=\sum_{j=1}^{r}a_{ij}f_{jt}+u_{i,t} \\ 
X_{i,t}=\sum_{j=1}^{r-q}\widetilde{a}_{ij}f_{jt}+u_{i,t}%
\end{array}%
\text{ for }%
\begin{array}{l}
1\leq t< \tau \\ 
\tau \leq t\leq T%
\end{array}%
\right. .  \label{ha-extra-1}
\end{equation}%
Note that, in (\ref{ha-extra-1}), we can entertain the possibility that the
loadings of the non-vanishing factors may also be subject to changes,
although this is not required. For simplicity consider the case $q=1$, then
under (\ref{ha-extra-1}), it can be noted that the $r$-th eigenvalue of the
covariance matrix of $X_{i,t}$ is spiked before $\tau $, and bounded
thereafter. This suggests that testing for (\ref{ha-extra-1}) can be based
on $\widehat{\lambda }^{\left( r\right) }\left( t\right) $. Since under the
null (in essence, on account of Lemma \ref{rolling-eigenvalues-2}) $%
N^{-\delta }\widehat{\lambda }^{\left( r\right) }\left( t\right) \rightarrow
\infty $, whereas under the alternative $N^{-\delta }\widehat{\lambda }%
^{\left( r\right) }\left( t\right) \rightarrow 0$, one round of
randomisation is enough to have a sequence of test statistics which behaves
like $\{\Gamma _{t}\}_{t=m}^T$ under the null - that is, which (conditional
on the sample) is \textit{i.i.d.}, has moments that exist up to any order,
and has an asymptotic chi-square distribution, with mean and variance that
can be approximated with a polynomially vanishing error. Hence, monitoring
can be again carried out as proposed in Section \ref{monitoring}.

\section{Monte Carlo simulations}\label{numerics}


Under $H_{0}$ we simulate data according to the stable factor model %
\eqref{fmscalar2}:
\begin{equation*}
X_{i,t}=a_{i}^{\prime }f_{t}+u_{i,t}, \quad 1\le i\le N,\; 1\le t\le T.
\end{equation*}%
In particular, we fix $N=100$, and we consider $r\in \{1,2,3,4\}$ factors.
As far as the time dimension is concerned, we consider burn-in periods and thus sample sizes of
dimension $m\in \{50,75,100,125,150,175,200,225,250\}$. 
We monitor our model for $1000$ periods (that is, we set $T=1000$). We
simulate each element of the loadings vector $a_{i}$ as $i.i.d.\mathcal N\left(
0,1\right) $; we assume some time dependence in the common factors through
a causal VAR(1) process%
\begin{equation*}
f_{t}=Hf_{t-1}+e_{t},\quad 1\le t\le T,
\end{equation*}%
where $e_{t}\sim i.i.d.\mathcal N\left( 0,I_{r}\right) $ and the matrix $H$ has
maximum absolute value of the eigenvalues equal to $0.7$. The $N\times T$
matrix of idiosyncratic components $u$ is generated as $u =D\varepsilon G$,
where the $NT\times 1$ vector of stacked columns of $\varepsilon$ is $%
i.i.d.\mathcal N\left( 0,I_{NT}\right) $ and $D$ and $G$ are two $N\times N$ and $%
T\times T$ Toeplitz matrices with entries, in the $k$-th diagonal
place, given by $0.3^{k-1}$ and $0.5^{k-1}$ respectively. Finally, we have
set the signal-to-noise ratio to $\frac{Var\left( X_{i,t}\right)}{Var\left(
u_{i,t}\right)}=2$ for all $1\le i\le N$.

Under the alternative, we consider breaks to occur at the change-point $\tau
=500$ under the two schemes:
\begin{eqnarray}
&&X_{i,t} =a_{i}^{\prime }f_{t}\ I[t< \tau]+\widetilde a_{i}^{\prime }f_{t}\
I[t\geq \tau ]+u_{i,t}, \quad 1\le i\le N,\; 1\le t\le T,  \label{scheme-1} \\
&&X_{i,t} =a_{i}^{\prime }f_{t}+b_ig_{t}\ I[t\geq \tau]+u_{i,t}, \quad\quad
\quad\quad\quad \,1\le i\le N,\; 1\le t\le T.  \label{scheme-2}
\end{eqnarray}%
In (\ref{scheme-1}), we consider the case in which all loadings undergo a
change, i.e. $H_{A,1}$; all the elements of $a_{i}$ and $\widetilde a_{i}$
are generated as $i.i.d.\mathcal N\left( 0,1\right) $. Scheme (\ref{scheme-2}) refers
to a break owing to a new common factor, $g_{t}$, appearing, i.e. $H_{A,2}$;
the loadings $b_i$ are generated as $i.i.d. \mathcal N\left( 0,1\right) $, and we
simulate $g_{t}$ as the causal AR(1) 
\begin{equation*}
g_{t}=\varphi g_{t-1}+v_{t},\quad 1\le t\le T,
\end{equation*}%
with $v_{t}\sim i.i.d.\mathcal N\left( 0,1\right) $ and $\varphi=0.7$. The idiosyncratic components are generated as  before.

All results of the test are computed when setting $\eta =0.45$ and $\eta
=0.5 $. The critical values used in the case $\eta =0.45$ are taken from %
\citet{lajos04}; in particular, when the significance level is $\alpha =0.05$
the critical value is $c_{0.05}=2.7992$ and when $\alpha =0.1$ we have $%
c_{0.1}=2.5437$. Regarding the double randomisation, we choose the functions $g(\cdot)$ in \eqref{phi} and $h(\cdot)$ in (\ref{psi-1}) to be the identity, we set $W=R=N$, the distributions $G_{\phi}$ and $G_{\psi}$ in steps $A1$ and $B1$ are chosen to be standard normals, while $F_{\phi}$ and $F_{\psi}$ in steps $A2$ and $B2$ are chosen to have non-zero and equal mass at $\pm \sqrt 2$. 

In order to evaluate the performance of our procedure, we repeat simulations 
$500$ times, and we consider a series of indicators.

\begin{table}[t!]
\begin{center}
\caption{Empirical size - 5\% and 10\% significance}\label{tab:size}
\vskip -.4cm
\phantom{Table1: Power - loadings} Fraction of detections in $[m+1,T]$\vskip .1cm
\small{
\begin{tabular}{c cc|cc || cc|cc ||  cc|cc}
\hline
\hline
& \multicolumn{4}{c||}{$m=50$}& \multicolumn{4}{c||}{$m=75$}& \multicolumn{4}{c}{$m=100$}\\
\hline
& \multicolumn{2}{c|}{$\eta = 0.45$} & \multicolumn{2}{c||}{$\eta=0.5$}& \multicolumn{2}{c|}{$\eta = 0.45$} & \multicolumn{2}{c||}{$\eta=0.5$}  & \multicolumn{2}{c|}{$\eta = 0.45$} & \multicolumn{2}{c}{$\eta=0.5$} \\ 
$r$ & 5\% & 10\% & 5\% & 10\%& 5\% & 10\% & 5\% & 10\%& 5\% & 10\% & 5\% & 10\% \\ \hline
1 & 0.03 & 0.05 & 0.03 & 0.06 & 0.04 & 0.05 & 0.03 & 0.05 & 0.04 & 0.06 & 0.03 & 0.06\\ 
2 & 0.04 & 0.05 & 0.04 & 0.06 & 0.03 & 0.04 & 0.02 & 0.05 & 0.04 & 0.06 & 0.04 & 0.06\\ 
3 & 0.03 & 0.05 & 0.03 & 0.05 & 0.03 & 0.05 & 0.03 & 0.06 & 0.04 & 0.06 & 0.04 & 0.06\\ 
4 & 0.03 & 0.05 & 0.03 & 0.06 & 0.02 & 0.05 & 0.02 & 0.06 & 0.04 & 0.05 & 0.03 & 0.06\\ 
\hline
\hline
& \multicolumn{4}{c||}{$m=125$}& \multicolumn{4}{c||}{$m=150$}& \multicolumn{4}{c}{$m=175$}\\
\hline
& \multicolumn{2}{c|}{$\eta = 0.45$} & \multicolumn{2}{c||}{$\eta=0.5$}& \multicolumn{2}{c|}{$\eta = 0.45$} & \multicolumn{2}{c||}{$\eta=0.5$}  & \multicolumn{2}{c|}{$\eta = 0.45$} & \multicolumn{2}{c}{$\eta=0.5$} \\ 
$r$ & 5\% & 10\% & 5\% & 10\%& 5\% & 10\% & 5\% & 10\%& 5\% & 10\% & 5\% & 10\% \\ \hline
1 & 0.05 & 0.06 & 0.05 & 0.06 & 0.04 & 0.05 & 0.03 & 0.05 & 0.04 & 0.07 & 0.04 & 0.06\\ 
2 & 0.03 & 0.05 & 0.03 & 0.06 & 0.04 & 0.05 & 0.03 & 0.05 & 0.05 & 0.06 & 0.04 & 0.07\\ 
3 & 0.03 & 0.05 & 0.03 & 0.05 & 0.03 & 0.05 & 0.03 & 0.05 & 0.04 & 0.08 & 0.04 & 0.08\\ 
4 & 0.03 & 0.06 & 0.03 & 0.07 & 0.04 & 0.07 & 0.05 & 0.08 & 0.04 & 0.06 & 0.05 & 0.06\\ 
\hline
\hline
& \multicolumn{4}{c||}{$m=200$}& \multicolumn{4}{c||}{$m=225$}& \multicolumn{4}{c}{$m=250$}\\
\hline
& \multicolumn{2}{c|}{$\eta = 0.45$} & \multicolumn{2}{c||}{$\eta=0.5$}& \multicolumn{2}{c|}{$\eta = 0.45$} & \multicolumn{2}{c||}{$\eta=0.5$}  & \multicolumn{2}{c|}{$\eta = 0.45$} & \multicolumn{2}{c}{$\eta=0.5$} \\ 
$r$ & 5\% & 10\% & 5\% & 10\%& 5\% & 10\% & 5\% & 10\%& 5\% & 10\% & 5\% & 10\% \\ \hline
1 & 0.05 & 0.07 & 0.05 & 0.07 & 0.05 & 0.07 & 0.05 & 0.07 & 0.04 & 0.07 & 0.04 & 0.08\\ 
2 & 0.04 & 0.05 & 0.03 & 0.05 & 0.04 & 0.06 & 0.04 & 0.07 & 0.04 & 0.07 & 0.04 & 0.07\\ 
3 & 0.05 & 0.07 & 0.04 & 0.08 & 0.03 & 0.04 & 0.03 & 0.04 & 0.03 & 0.05 & 0.04 & 0.05\\ 
4 & 0.04 & 0.06 & 0.04 & 0.06 & 0.04 & 0.06 & 0.04 & 0.07 & 0.04 & 0.07 & 0.04 & 0.08\\ \hline
\end{tabular}
}
\end{center}
\end{table}

\begin{enumerate}
\item[(1)] In Table \ref{tab:size} we report the fraction of false
rejections over the whole monitoring period ($m+1\le t\le T$), when no break
is present, i.e. under $H_0$, and when testing at 5\% and 10\% significance
levels. As expected the empirical size is always below the significance
level.

\begin{table}[t!]
\begin{center}
\caption{Power - loadings change - 5\% significance}\label{tab:power1}
\vskip -.4cm
\phantom{Table2: Power - loadings} Fraction of detections in $[\tau,\tau+m)$\vskip .1cm
\small{
\begin{tabular}{c | cccc ccccc}
\hline
\hline
$\eta=0.45$&\multicolumn{9}{c}{$m$}\\
$r$&50&75&100&125&150&175&200&225&250\\
\hline
1&0.96	&	0.95	&	0.96	&	0.96	&	0.96	&	0.96	&	0.96	&	0.95	&	0.95	\\
2&0.58	&	0.97	&	0.97	&	0.96	&	0.96	&	0.96	&	0.97	&	0.95	&	0.98	\\
3&0.01	&	0.74	&	0.97	&	0.97	&	0.96	&	0.97	&	0.96	&	0.97	&	0.96	\\
4&0.00	&	0.03	&	0.80	&	0.94	&	0.96	&	0.94	&	0.96	&	0.96	&	0.96	\\
\hline
\hline
$\eta=0.5$&\multicolumn{9}{c}{$m$}\\
$r$&50&75&100&125&150&175&200&225&250\\
\hline
1&0.96	&	0.95	&	0.96	&	0.96	&	0.97	&	0.95	&	0.96	&	0.96	&	0.96	\\
2&0.44	&	0.97	&	0.97	&	0.96	&	0.96	&	0.96	&	0.97	&	0.95	&	0.98	\\
3&0.00	&	0.62	&	0.97	&	0.97	&	0.96	&	0.96	&	0.97	&	0.96	&	0.97	\\
4&0.00	&	0.01	&	0.65	&	0.95	&	0.97	&	0.95	&	0.96	&	0.96	&	0.97	\\
\hline
\hline
\end{tabular}
}
\end{center}
\end{table}

\begin{table}[th!]
\begin{center}
\caption{Power - loadings change - 10\% significance}\label{tab:power2}
\vskip -.4cm
\phantom{Table2: Power - loadings} Fraction of detections in $[\tau,\tau+m)$\vskip .1cm
\small{
\begin{tabular}{c | cccc ccccc}
\hline
\hline
$\eta=0.45$&\multicolumn{9}{c}{$m$}\\
$r$&50&75&100&125&150&175&200&225&250\\
\hline
1	&	0.94	&	0.93	&	0.95	&	0.93	&	0.94	&	0.93	&	0.94	&	0.94	&	0.93	\\
2	&	0.66	&	0.95	&	0.95	&	0.93	&	0.95	&	0.93	&	0.94	&	0.94	&	0.95	\\
3	&	0.01	&	0.81	&	0.95	&	0.95	&	0.94	&	0.95	&	0.95	&	0.94	&	0.94	\\
4	&	0.00	&	0.06	&	0.87	&	0.94	&	0.95	&	0.92	&	0.93	&	0.94	&	0.94	\\
\hline
\hline
$\eta=0.5$&\multicolumn{9}{c}{$m$}\\
$r$&50&75&100&125&150&175&200&225&250\\
\hline
1	&	0.94	&	0.93	&	0.95	&	0.92	&	0.93	&	0.93	&	0.93	&	0.93	&	0.93	\\
2	&	0.59	&	0.94	&	0.95	&	0.92	&	0.94	&	0.93	&	0.94	&	0.93	&	0.95	\\
3	&	0.01	&	0.73	&	0.94	&	0.95	&	0.94	&	0.95	&	0.94	&	0.93	&	0.94	\\
4	&	0.00	&	0.03	&	0.80	&	0.93	&	0.94	&	0.91	&	0.93	&	0.94	&	0.93	\\
\hline
\hline
\end{tabular}
}
\end{center}
\end{table}

\begin{table}[t!]
\begin{center}
\caption{Power - new factor appears - 5\% significance}\label{tab:power3}
\vskip -.4cm
\phantom{Table2: Power - loadings} Fraction of detections in $[\tau,\tau+m)$\vskip .1cm
\small{
\begin{tabular}{c | cccc ccccc}
\hline
\hline
$\eta=0.45$&\multicolumn{9}{c}{$m$}\\
$r$&50&75&100&125&150&175&200&225&250\\
\hline
1	&	0.93	&	0.93	&	0.92	&	0.95	&	0.92	&	0.95	&	0.93	&	0.93	&	0.92	\\
2	&	0.78	&	0.96	&	0.97	&	0.95	&	0.95	&	0.96	&	0.96	&	0.95	&	0.96	\\
3	&	0.10	&	0.89	&	0.97	&	0.98	&	0.96	&	0.97	&	0.95	&	0.95	&	0.97	\\
4	&	0.00	&	0.27	&	0.91	&	0.96	&	0.96	&	0.96	&	0.95	&	0.95	&	0.96	\\
\hline
\hline
$\eta=0.5$&\multicolumn{9}{c}{$m$}\\
$r$&50&75&100&125&150&175&200&225&250\\
\hline
1	&	0.95	&	0.94	&	0.93	&	0.96	&	0.95	&	0.96	&	0.95	&	0.93	&	0.94	\\
2	&	0.71	&	0.96	&	0.97	&	0.96	&	0.95	&	0.96	&	0.96	&	0.96	&	0.97	\\
3	&	0.06	&	0.85	&	0.97	&	0.98	&	0.96	&	0.98	&	0.96	&	0.96	&	0.98	\\
4	&	0.00	&	0.16	&	0.89	&	0.96	&	0.96	&	0.95	&	0.95	&	0.96	&	0.97	\\
\hline
\hline
\end{tabular}
}
\end{center}
\end{table}

\begin{table}[th!]
\begin{center}
\caption{Power - new factor appears - 10\% significance}\label{tab:power4}
\vskip -.4cm
\phantom{Table2: Power - loadings} Fraction of detections in $[\tau,\tau+m)$\vskip .1cm
\small{
\begin{tabular}{c | cccc ccccc}
\hline
\hline
$\eta=0.45$&\multicolumn{9}{c}{$m$}\\
$r$&50&75&100&125&150&175&200&225&250\\
\hline
1	&	0.89	&	0.88	&	0.88	&	0.90	&	0.88	&	0.91	&	0.92	&	0.88	&	0.88	\\
2	&	0.81	&	0.94	&	0.95	&	0.93	&	0.91	&	0.93	&	0.93	&	0.93	&	0.93	\\
3	&	0.14	&	0.88	&	0.94	&	0.95	&	0.94	&	0.95	&	0.92	&	0.93	&	0.95	\\
4	&	0.00	&	0.36	&	0.92	&	0.94	&	0.94	&	0.93	&	0.93	&	0.93	&	0.94	\\
\hline
\hline
$\eta=0.5$&\multicolumn{9}{c}{$m$}\\
$r$&50&75&100&125&150&175&200&225&250\\
\hline
1	&	0.90	&	0.89	&	0.88	&	0.91	&	0.89	&	0.92	&	0.91	&	0.89	&	0.89	\\
2	&	0.76	&	0.94	&	0.95	&	0.92	&	0.92	&	0.93	&	0.93	&	0.93	&	0.93	\\
3	&	0.10	&	0.86	&	0.94	&	0.96	&	0.94	&	0.95	&	0.93	&	0.93	&	0.94	\\
4	&	0.00	&	0.27	&	0.89	&	0.94	&	0.93	&	0.93	&	0.92	&	0.92	&	0.94	\\
\hline
\hline
\end{tabular}
}
\end{center}
\end{table}

\item[(2)] In Tables \ref{tab:power1}, \ref{tab:power2}, \ref{tab:power3}
and \ref{tab:power4} we show the fraction of detections for which $\tau\le \widehat {\tau}_m<\tau+m-1$, when a break takes place under $H_{A,1}$ or $H_{A,2}$ and when testing at 5\% and 10\% significance levels, setting either $\eta=0.45$ or $\eta=0.5$. Results show that the test does have power versus the two
alternative hypotheses considered in this paper. As the construction of the
test and the theory would suggest, the power declines as $r$, the original,
pre-break number of factors, increases: in essence, the test checks whether
an eigenvalue is large, and the magnitude of the $(r+1)$-th largest
eigenvalue declines with $r$. Still, even when $r=4$, the test has high
power when $m\geq 100$ in all cases considered, and, in presence of a new
factor appearing (see Tables \ref{tab:power3} and \ref{tab:power4}), even
when $m\geq 50$. An interesting feature of the test is the case $\eta =0.5$:
although in theory this choice yields the highest power, it is well known
that convergence to the extreme value distribution is very slow, leading to
larger than correct critical values, and, consequently, to lower power (see
the comments in \citet{csorgo1997}). However, considering the discrepancy
between the power when $\eta =0.45$ and $\eta =0.5$, this is not always the case: tests based on the choice $\eta =0.5$ have roughly the same power as for the case $\eta =0.45$ whenever there is a
change in the loadings, and also when there is a new factor appearing (at
least for a sample size $m\geq 100$). Last, notice that when considering $H_{A,2}$ (a new factor appearing), then we could also detect a change-point when $\tau+m-1\le t\le T$, but we do not report results in this case since power can only increase with respect to what shown in Tables \ref{tab:power3} and \ref{tab:power4}.

\item[(3)] In Tables \ref{tab:loc1} and \ref{tab:loc2} we report the
minimum, maximum, the 25$^{\text{{\tiny th}}}$, 50$^{\text{{\tiny th}}}$ and
75$^{\text{{\tiny th}}}$ percentiles of the distribution of the estimated
change-point locations, whenever under $H_{A,1}$ or $H_{A,2}$ a break is
detected at $\widehat {\tau}_m$ such that $\tau\le \widehat {\tau}_m \le T$
and when testing at 10\% significance levels. Given the results in Tables %
\ref{tab:power2} and \ref{tab:power4} we report those statistics only for $%
m=100,175,250$. It is evident that the test detects a break with a delay
which increases as $r$ increases - this is in line with the comments in
Section \ref{weak}, since, as $r$ grows, the $r$-th eigenvalue becomes
smaller and smaller, thus being closer to a weak factor. Interestingly,
there are virtually no differences between the cases of $\eta =0.45$ and $%
\eta =0.5$; similarly, different values of $m$ also do not seem to alter
results. Note that, as expected, the minimum values of the distribution of the estimated locations are, roughly speaking, of order $m^{1/2}$ all across the table.

\begin{table}[t!]
\caption{Location distribution - loadings change }
\label{tab:loc1}
{\ \centering $\phantom{\qquad\qquad\qquad\qquad\qquad%
\qquad}$ {(true change-point at $\tau=500$)} \newline
}
\begin{center}
{\small {\ \vskip -1cm 
\begin{tabular}{llccccc|ccccc}
\hline\hline
&  & \multicolumn{5}{c|}{$\eta = 0.45$} & \multicolumn{5}{c}{$\eta = 0.5$}
\\ 
$m$ & $r$ & min & 25$^{\mbox{th}}$ & 50$^{\mbox{th}}$ & 75$^{\mbox{th}}$ & 
max & min & 25$^{\mbox{th}}$ & 50$^{\mbox{th}}$ & 75$^{\mbox{th}}$ & max \\ 
\hline
100 & 1 & 504 & 516 & 520 & 526 & 551 & 505 & 516 & 521 & 526 & 551 \\ 
& 2 & 507 & 523 & 528 & 533 & 553 & 507 & 524 & 529 & 534 & 554 \\ 
& 3 & 509 & 529 & 536 & 543 & 573 & 509 & 531 & 538 & 545 & 586 \\ 
& 4 & 520 & 548 & 559 & 571 & $> T$ & 520 & 553 & 563 & 578 & $> T$ \\ \hline
175 & 1 & 504 & 516 & 522 & 528 & 550 & 505 & 517 & 523 & 529 & 550 \\ 
& 2 & 508 & 523 & 529 & 535 & 553 & 509 & 524 & 530 & 537 & 555 \\ 
& 3 & 507 & 531 & 537 & 543 & 564 & 507 & 532 & 538 & 544 & 566 \\ 
& 4 & 514 & 536 & 544 & 551 & 580 & 514 & 538 & 546 & 553 & 591 \\ \hline
250 & 1 & 502 & 518 & 523 & 529 & 551 & 505 & 519 & 524 & 529 & 553 \\ 
& 2 & 508 & 525 & 531 & 537 & 558 & 508 & 526 & 532 & 538 & 565 \\ 
& 3 & 508 & 531 & 538 & 547 & 568 & 508 & 533 & 540 & 548 & 570 \\ 
& 4 & 516 & 538 & 545 & 554 & 581 & 516 & 539 & 547 & 555 & 584 \\ 
\hline\hline
\end{tabular}%
} }
\end{center}
\end{table}

\begin{table}[th!]
\caption{Location distribution - new factor appears}
\label{tab:loc2}{\ \centering $\phantom{\qquad\qquad\qquad\qquad\qquad%
\qquad}$ {(true change-point at $\tau=500$)}\newline
}
\begin{center}
{\small {\ \vskip -1cm 
\begin{tabular}{llccccc|ccccc}
\hline\hline
&  & \multicolumn{5}{c|}{$\eta = 0.45$} & \multicolumn{5}{c}{$\eta = 0.5$}
\\ 
$m$ & $r$ & min & 25$^{\mbox{th}}$ & 50$^{\mbox{th}}$ & 75$^{\mbox{th}}$ & 
max & min & 25$^{\mbox{th}}$ & 50$^{\mbox{th}}$ & 75$^{\mbox{th}}$ & max \\ 
\hline
100 & 1 & 502 & 514 & 519 & 525 & 547 & 505 & 515 & 520 & 525 & 548 \\ 
& 2 & 505 & 522 & 530 & 537 & 574 & 506 & 523 & 531 & 538 & 576 \\ 
& 3 & 513 & 532 & 541 & 550 & 598 & 514 & 534 & 542 & 552 & 598 \\ 
& 4 & 519 & 546 & 557 & 569 & 648 & 520 & 548 & 560 & 574 & 669 \\ \hline
175 & 1 & 501 & 515 & 520 & 525 & 546 & 502 & 516 & 521 & 526 & 552 \\ 
& 2 & 507 & 524 & 531 & 537 & 564 & 507 & 524 & 532 & 539 & 572 \\ 
& 3 & 508 & 532 & 541 & 550 & 578 & 508 & 534 & 542 & 552 & 579 \\ 
& 4 & 518 & 543 & 552 & 562 & 598 & 518 & 544 & 555 & 565 & 599 \\ \hline
250 & 1 & 502 & 514 & 520 & 527 & 552 & 502 & 514 & 521 & 527 & 553 \\ 
& 2 & 508 & 524 & 531 & 540 & 571 & 508 & 525 & 532 & 541 & 571 \\ 
& 3 & 510 & 533 & 542 & 551 & 590 & 510 & 534 & 543 & 552 & 589 \\ 
& 4 & 515 & 542 & 552 & 562 & 600 & 515 & 544 & 553 & 564 & 603 \\ 
\hline\hline
\end{tabular}%
} }
\end{center}
\end{table}
\end{enumerate}

\section{An application to US industrial production data}\label{emp}

We conclude with an
application to a panel of US industrial production indexes. Specifically, we consider monthly 
growth rates for $%
N=224$ sectorial indices, 
over the period from January 1972 to November 2015, for a total of $%
T=527$ observations. Estimation is based on a sample of size $m=60$, i.e. 5 years.
Analysis of the whole dataset using rolling samples of size $m$ suggests
between one and two factors throughout - this result consistently follows
using different procedures - namely, \citet{trapani17} testing procedure and
the criteria by \citet{baing02} and \citet{ABC10}. Therefore, we run our
sequential testing procedure monitoring the first four factors, thus
accounting both for at most two new factors emerging and for a change in all
loadings. The test is run at 5\% significance level and setting $\eta =0.5$,
hence using the critical values in \eqref{soglia2}.

The monitoring is implemented as follows. We begin at $t=m+1$; once the
first change-point is detected at $\widehat{\tau }_{1}\geq m+1$, we restart
the estimation at $t=\widehat{\tau }_{1}$ and after $m$ periods we restart
monitoring at $t=\widehat{\tau }_{1}+m+1$. In general, given an estimated
change-point $\widehat{\tau }_{j}$, with $j\geq 1$, we restart monitoring by
computing the detector defined in \eqref{eq:detector} which in this case is
defined as 
\begin{equation*}
d(k;m)=\left\vert \sum_{t=\widehat{\tau }_{j}+m+1}^{m+k}\frac{\Gamma _{t}-1}{%
\sqrt{2}}\right\vert ,\quad \widehat{\tau }_{j}+1\leq k\leq T-m.
\end{equation*}%
Therefore, the monitoring window after the $j$-th change-point is of size $T-%
\widehat{\tau }_j-m$ and the estimated change-points $\widehat{\tau }_{j}$
are such that $\widehat{\tau }_{j+1}-\widehat{\tau }_{j}\geq m+1$. We keep
restarting the procedure as long as we have a monitoring window of non-zero
length, that is as long as $T-\widehat{\tau }_{j}>m$; this allows the
possibility for the last change-point to be detected in the interval $T-m\le
t\le T$.


We find evidence of two change-points dated: (i) $\wh{\tau}_1$: August, 1983; and (ii) $\wh{\tau}_2$: March, 2008.
The estimated locations are also shown in Figure \ref{fig:sp100} together
with the joint panel of data. The first estimated change-point ($\widehat{\tau}_1$)
clearly mark the start of the Great Moderation, i.e. a period of decrease in volatility of output and inflation, while the second one ($\widehat{\tau}_2$) takes place at the start of
the US recession marked by the Great Financial Crisis. 
Last we discuss the delay of the estimated change-points. Concerning $\widehat{\tau}_1$, there is a general consensus that the start of the Great Moderation is to be dated in 1983, however a precise date is not available, see for example \citet{stockwatson07}. We note here that if we consider the start of the Great Moderation to coincide with the end of the recession of the early 1980s, then the National Bureau of Economic Research (NBER) dates the start of the expansion of the US business cycle in December, 1982, thus the first change-point is detected with a delay of 4 time-periods. 
Concerning $\widehat{\tau}_2$ the NBER dates the start of the recession in December, 2007, therefore we detect the change-point with a delay of 3 time-points \footnote{See \url{https://www.nber.org/cycles/US_Business_Cycle_Expansions_and_Contractions_20120423.pdf}}.
\begin{figure}[t!]
\caption{Estimated change-point locations for US industrial production indexes}
\label{fig:sp100}
\begin{center}
\vskip -.4cm 
\includegraphics[width=.8\textwidth]{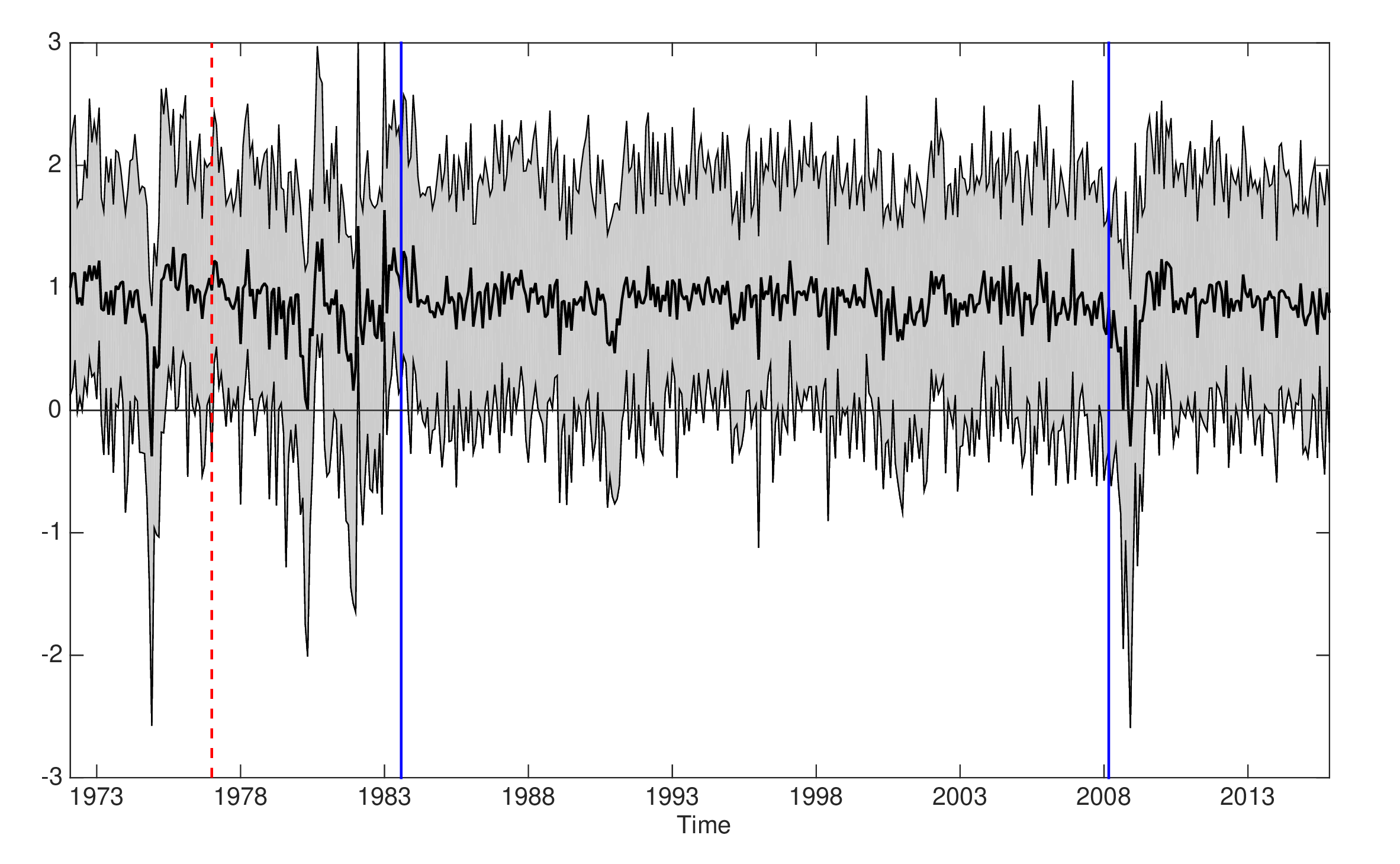} 
\begin{tabular}{p{.9\textwidth}}
{\footnotesize {Black line: cross-sectional median of the data (monthly growth rates on yearly basis); blue lines:
estimated change-point locations; red line: first period used for testing. 
}}%
\end{tabular}%
\end{center}
\par
\vskip -.2cm
\end{figure}

\section{Conclusions\label{conclusions}}

In this paper we develop a a family of monitoring procedures to detect a
break in the signal component of a large factor model; to the best of our
knowledge, this is the first contribution in high-dimensional factor models
which proposes a sequential monitoring and testing procedure, as opposed to
the extant literature where ex-post detection of breaks is usually
considered. Our statistics are based on a well-known property of the $\left(
r+1\right) $-th eigenvalue of the sample covariance matrix of the data:
whilst under the null the $\left( r+1\right) $-th eigenvalue is bounded,
under the alternative of a break (either in the loadings, or in the number
of factors itself) it becomes spiked. Given that the sample eigenvalue does
not have a known limiting distribution under the null, we regularise the
problem by (doubly) randomising the test statistic in conjunction with
sample conditioning, obtaining a sequence of \textit{i.i.d.}, asymptotically
chi-square statistics which are then employed to build the monitoring
scheme. Numerical evidence shows that our procedure works very well in
finite samples, with a very small probability of false detections and tight
detection times in presence of a genuine change-point.

Building on the methodology proposed in this paper, there are at least two
possible extensions which could be considered. Firstly, the results and
methodology in this paper could be also used in the context of a
non-stationary factor model, similar to the one considered in \citet{bai04},
where the factors are allowed to have unit roots. In such case, the key
theoretical result would be to show that in presence of $r$ factors the
first $r$ eigenvalues of the matrix $m^{-2}\sum_{t=1}^{m}X_{t}X_{t}^{\prime
} $ diverge to positive infinity almost surely at some rate, whereas the
remaining factors are a.s. bounded. Secondly, it is possible to extend the
theory developed in this paper to the context of the generalised dynamic
factor model by \citet{FHLR00} or the factor model by \citet{LY12}, which are based on the
asymptotic behavior of the eigenvalues of the spectral density or the long-run covariance matrices, respectively.
By studying the asymptotic behavior of the estimated eigenvalues of those matrices an
appropriate test statistic based on these can be built. These, and other,
extensions are under current investigations by the authors.

%
%


\section*{References}
\bibliography{BT_biblio}
%
%
%
%
%

\newpage

\appendix

\section{Technical appendix}
\setcounter{lemma}{0} \renewcommand{\thelemma}{A.\arabic{lemma}} %
\subsection{Preliminary lemmas}

\label{lemmas}

This section contains technical results which are useful to prove the main
theorems in the paper. Throughout this and the next section, $E^{\ast }$
denotes expectation calculated with respect to $P^{\ast }$; similarly, $%
E^{\dag }$ and $V^{\dag }$\ denote expectation and variance calculated with
respect to $P^{\dag }$. Also, whenever possible, we omit for ease of
notation the dependence of $\xi _{j}\left( t\right) $, and of related
quantities,\ on $t$.

\begin{lemma}
\label{lambda-average}Let%
\begin{equation}
\overline{\lambda }_{N}\left( t\right) = \frac{1}{N}\sum_{p=1}^{N}\lambda
^{\left( p\right) }\left( t\right) \;\mbox{ and }\; \widehat{\overline{\lambda }}_{N}\left( t\right) = \frac{1}{N}\sum_{p=1}^{N}\widehat{\lambda}
^{\left( p\right) }\left( t\right).  \nonumber
\end{equation}%
Under Assumptions \ref{ass-1} and \ref{ass-2}, it holds that%
\begin{equation}
\left\{ 
\begin{array}{l}
\lim \sup_{N\rightarrow \infty }\overline{\lambda }_{N}\left( t\right) =%
\overline{\lambda }^{\sup }\left( t\right) <\infty \\ 
\lim \inf_{N\rightarrow \infty }\overline{\lambda }_{N}\left( t\right) =%
\overline{\lambda }^{\inf }\left( t\right) >0\text{ \ \ }%
\end{array}%
\right. , \nonumber
\end{equation}%
for all $m\leq t\leq T$. Furthermore, under Assumptions \ref{ass-1}-\ref{ass-3},
it holds that\textit{, as }$\min \left\{ N,m\right\} \rightarrow \infty $%
\begin{equation*}
\left\{ 
\begin{array}{l}
\lim \sup_{N\rightarrow \infty }\widehat{\overline{\lambda }}_{N}\left(
t\right) =\overline{\lambda }^{\sup }\left( t\right) <\infty \\ 
\lim \inf_{N\rightarrow \infty }\widehat{\overline{\lambda }}_{N}\left(
t\right) =\overline{\lambda }^{\inf }\left( t\right) >0%
\end{array}%
\right. .
\end{equation*}

\begin{proof}
See Lemmas 2.1 and A.1 in \citet{trapani17}.
\end{proof}
\end{lemma}

\begin{lemma}
\label{second-randomisation}Under Assumptions \ref{ass-1}-\ref{ass-4}, it
holds that
\begin{equation}
\lim \sup_{N,m,R\rightarrow \infty }\frac{\Theta _{t}}{\widetilde{l}\left(
N,m,R\right) }=0\text{ \ \ a.s.,} 
\begin{array}{ll}
\text{under }H_{A,1}, & \text{for }\ t_{N,m}^{\ast }\leq t<t_{N,m}^{\ast\ast }, \\ 
\text{under }H_{A,2}, & \text{for }\ t_{N,m}^{\ast }\leq t\leq T,%
\end{array}
 \label{secondrand-1}
\end{equation}
and
%
\begin{equation}
\frac{\widetilde{l}\left( N,m,R\right) }{R}\times \frac{\Theta _{t}}{%
\widetilde{l}\left( N,m,R\right) }\rightarrow C>0\text{ \ \ a.s.,}\begin{array}{ll}
\text{under }H_{0}, & \text{for }\ m\leq t\leq T, \\ 
\text{under }H_{A,1}, & \text{for }\ m\leq t<\tau, \\ 
& \text{and } \tau+m-1\leq t\leq T, \\ 
\text{under }H_{A,2}, & \text{for }\ m\leq t<\tau ,%
\end{array}
\label{second-rand-2}
\end{equation}%

as $\min \left( N,m,R\right) \rightarrow \infty $. 

\begin{proof}
We begin with (\ref{secondrand-1}). For any $1\leq n\leq N$, $1\leq s\leq m$, $1\leq r\leq
R$, define
\begin{equation*}
U_{n,s,r}=\int_{-\infty }^{+\infty }\bigg\vert r^{-1/2}\sum_{j=1}^{r}\left\{
I\left[ \xi _{j}\leq u\phi _{n,s}^{-1}\left( t\right) \right] -G_{\phi
}\left( 0\right) \right\} \bigg\vert ^{2}dF_{\phi }\left( u\right) .
\end{equation*}%

We begin by showing that%
\begin{equation}
\sum_{N=1}^{\infty }\sum_{m=1}^{\infty }\sum_{R=1}^{\infty }\frac{1}{mNR}%
P^{\ast }\left[ \max_{1\leq n\leq N,1\leq s\leq m,1\leq r\leq
R}U_{n,s,r}>\epsilon \widetilde{l}\left( N,m,R\right) \right] <\infty ,
\label{bc-1}
\end{equation}%
for any $\epsilon>0$. Using the short-hand notation $\max_{n,s,r}$\ for $\max_{1\leq n\leq N,1\leq
s\leq m,1\leq r\leq R}$, Markov inequality implies that \eqref{bc-1} follows if%
\begin{equation}
\sum_{N=1}^{\infty }\sum_{m=1}^{\infty }\sum_{R=1}^{\infty }\frac{1}{mNR%
\widetilde{l}\left( N,m,R\right) }E^{\ast }\left\vert
\max_{n,s,r}U_{n,s,r}\right\vert <\infty .  \label{bc-2}
\end{equation}%
The maximal inequality contained in Theorem 2 in \citet{moricz1983} entails
that 
\begin{equation}
E^{\ast }\left\vert \max_{n,s,r}U_{n,s,r}\right\vert \leq C_{0}E^{\ast
}\left\vert U_{N,m,R}\right\vert \left( \ln N\right) \left( \ln m\right)
\left( \ln R\right) .  \label{moricz}
\end{equation}%
Further, combining (\ref{th1-1}) with (\ref{th1-2})-(\ref{th1-3}), it is easy to see that %
\begin{equation*}
E^{\ast }\left\vert U_{N,m,R}\right\vert \leq C_{0}+C_{1}R\phi
_{N,m}^{-2}\left( t\right),
\end{equation*}%
which holds under $H_{A,1}$ for $t_{N,m}^{\ast}\le t<t_{N,m}^{\ast\ast}$ and under $H_{A,2}$ for $t_{N,m}^{\ast}\le t\le T$. By Assumption \ref{restriction-1}, it holds that $E^{\ast }\left\vert U_{N,m,R}\right\vert$ is bounded. Then (\ref{bc-1})
follows immediately from (\ref{bc-2}).

Note now that for every triple $\left( N,m,R\right) $, there is a triple of
positive integers $\left( k_{1},k_{2},k_{3}\right) $ such that $2^{k_{1}}$ $%
\leq $ $N$ $<$ $2^{k_{1}+1}$, $2^{k_{2}}$ $\leq $ $m$ $<$ $2^{k_{2}+1}$, $%
2^{k_{3}}$ $\leq $ $R$ $<$ $2^{k_{3}+1}$. Further, there is also a triple of
real numbers defined over $\left[ 0,1\right) $, say $\left( \rho _{1},\rho
_{2},\rho _{3}\right) $, such that $N$ $=$ $2^{k_{1}+\rho _{1}}$, etc...
Define now the short-hand notation
\begin{equation*}
A_{k_{1},k_{2},k_{3}}=\left\{ \omega :\max_{1\leq k_{1}\leq 2^{k_{1}+\rho
_{1}},1\leq k_{2}\leq 2^{k_{2}+\rho _{2}},1\leq k_{3}\leq 2^{k_{3}+\rho
_{3}}}\left\vert U_{k_{1},k_{2},k_{3}}\right\vert >\epsilon \widetilde{l}%
\left( k_{1},k_{2},k_{3}\right) \right\} .
\end{equation*}%
By (\ref{bc-1}), it holds that%
\begin{equation*}
\sum_{k_{1}=0}^{\infty }\sum_{k_{2}=0}^{\infty }\sum_{k_{3}=0}^{\infty }%
\frac{2^{k_{1}+1}2^{k_{2}+1}2^{k_{3}+1}}{\left( 2^{k_{1}+1}-1\right) \left(
2^{k_{2}+1}-1\right) \left( 2^{k_{3}+1}-1\right) }P^{\ast }\left(
A_{k_{1},k_{2},k_{3}}\right) <\infty ;
\end{equation*}%
thus%
\begin{eqnarray}
&&\sum_{k_{1}=0}^{\infty }\sum_{k_{2}=0}^{\infty }\sum_{k_{3}=0}^{\infty
}P^{\ast }\left( A_{k_{1},k_{2},k_{3}}\right)\leq   \label{bc-3} \\
&\leq &2^{3}\sum_{k_{1}=0}^{\infty }\sum_{k_{2}=0}^{\infty
}\sum_{k_{3}=0}^{\infty }\frac{2^{k_{1}}2^{k_{2}}2^{k_{3}}}{\left(
2^{k_{1}+1}-1\right) \left( 2^{k_{2}+1}-1\right) \left( 2^{k_{3}+1}-1\right) 
}P^{\ast }\left( A_{k_{1},k_{2},k_{3}}\right) <\infty .  \notag
\end{eqnarray}%
This result entails that $P^{\ast }\left( A_{k_{1},k_{2},k_{3}}\text{ i.o.}%
\right) =1$, which is a \textit{conditional} result. Let now $%
\mathrm X_{k_{1},k_{2},k_{3}}$ be the indicator of $A_{k_{1},k_{2},k_{3}}$, and
note that $A_{k_{1},k_{2},k_{3}}$ is conditional on the $\sigma $-field $%
\mathcal{F}_{k_{1},k_{2},k_{3}}$ $=$ $\left\{ X_{i,t},\text{ }1\leq i\leq
N,1\leq t\leq m\right\} \cup \left\{ \xi _{j},1\leq j\leq R\right\} $, which
is non decreasing. Equation (\ref{bc-3}) implies that%
\begin{equation*}
\sum_{k_{1}=0}^{\infty }\sum_{k_{2}=0}^{\infty }\sum_{k_{3}=0}^{\infty
}E\left( \left. \mathrm X_{k_{1},k_{2},k_{3}}\right\vert \mathcal{F}%
_{k_{1},k_{2},k_{3}}\right) <\infty ;
\end{equation*}%
hence, by Theorem 1 in \citet{chen1978}, it holds that 
\begin{equation}
\sum_{k_{1}=0}^{\infty }\sum_{k_{2}=0}^{\infty }\sum_{k_{3}=0}^{\infty
}\mathrm X_{k_{1},k_{2},k_{3}}<\infty \text{ \ \ a.s.}  \label{bc-5}
\end{equation}%
We note that the result by \citet{chen1978} is for a series indexed by a single
index, but his arguments can be readily generalised to the case of
multi-index series. Equation (\ref{bc-5}) can be equivalently rewritten as 
\begin{equation}
\sum_{k_{1}=0}^{\infty }\sum_{k_{2}=0}^{\infty }\sum_{k_{3}=0}^{\infty
}P\left( A_{k_{1},k_{2},k_{3}}\right) <\infty ,  \label{bc-4}
\end{equation}%
which is an \textit{unconditional} result. From (\ref{bc-4}), it is easy to
see that%
\begin{equation*}
\frac{\max_{k_{1},k_{2},k_{3}}\left\vert U_{k_{1},k_{2},k_{3}}\right\vert }{%
\widetilde{l}\left( k_{1},k_{2},k_{3}\right) }\rightarrow 0\text{ a.s.;}
\end{equation*}%
this entails that 
\begin{equation*}
\frac{\left\vert U_{N,m,R}\right\vert }{\widetilde{l}\left( N,m,R\right) }%
\leq \frac{\max_{k_{1},k_{2},k_{3}}\left\vert
U_{k_{1},k_{2},k_{3}}\right\vert }{\widetilde{l}\left(
k_{1},k_{2},k_{3}\right) }\frac{\widetilde{l}\left( k_{1},k_{2},k_{3}\right) 
}{\widetilde{l}\left( N,m,R\right) }\leq \frac{\max_{k_{1},k_{2},k_{3}}\left\vert U_{k_{1},k_{2},k_{3}}\right\vert }{\widetilde{l}\left(
k_{1},k_{2},k_{3}\right) }\rightarrow 0\text{ a.s.,}
\end{equation*}%
so that finally%
\begin{equation*}
\lim \sup_{N,m,R\rightarrow \infty }\frac{\left\vert U_{N,m,R}\right\vert }{%
\widetilde{l}\left( N,m,R\right) }=0\text{ a.s.,}
\end{equation*}%
from which (\ref{secondrand-1}) follows. 

Consider now (\ref{second-rand-2}). Under $H_0$, Lemma \ref%
{rolling-eigenvalues} entails that 
\begin{equation*}
P\left\{ \omega :\;\lim_{N,m\rightarrow \infty }\phi _{N,m}\left( t\right)
=0\right\} =1,\quad m\leq t\leq
T, \label{phi-nt-1-1}
\end{equation*}%
so that we can assume henceforth that $\lim_{N,m\rightarrow \infty }\phi
_{N,m}\left( t\right) =0$ for $m\leq t\leq
T$. Similarly this holds also under $H_{A,1}$ for $m\le t<\tau$ and $\tau+m-1\le t\le T$ and under $H_{A,2}$ for $m\le t<\tau$.
Also, by definition it holds that $E^{\ast }I%
\left[ \xi _{j}\leq u\phi _{N,m}^{-1}\left( t\right) \right] =G_{\phi
}\left( u\phi _{N,m}^{-1}\left( t\right) \right) $. Therefore 
\begin{align*}
&G_{\phi }\left( 0\right) \left[ 1-G_{\phi }\left( 0\right) \right] \Theta
_{t} =\\
=\,&\int_{-\infty }^{+\infty }\left\vert R^{-1/2}\sum_{j=1}^{R}\left\{ I%
\left[ \xi _{j}\leq u\phi _{N,m}^{-1}\left( t\right) \right] -G_{\phi
}\left( u\phi _{N,m}^{-1}\left( t\right) \right)+G_{\phi
}\left( u\phi _{N,m}^{-1}\left( t\right) \right) -G_{\phi }\left( 0\right)
\right\} \right\vert ^{2}dF_{\phi }\left( u\right) = \\
=&\,\int_{-\infty }^{+\infty }\left\vert R^{-1/2}\sum_{j=1}^{R}\left\{ I\left[
\xi _{j}\leq u\phi _{N,m}^{-1}\left( t\right) \right] -G_{\phi }\left( u\phi
_{N,m}^{-1}\left( t\right) \right) \right\} \right\vert ^{2}dF_{\phi }\left(
u\right) + \\
&+R^{1/2}\int_{-\infty }^{+\infty }\left\vert G_{\phi }\left( u\phi
_{N,m}^{-1}\left( t\right) \right) -G_{\phi }\left( 0\right) \right\vert
^{2}dF_{\phi }\left( u\right) + \\
&+2\int_{-\infty }^{+\infty }R^{-1/2}\sum_{j=1}^{R}\left\{ I\left[ \xi
_{j}\leq u\phi _{N,m}^{-1}\left( t\right) \right] -G_{\phi }\left( u\phi
_{N,m}^{-1}\left( t\right) \right) \right\} \Big[ G_{\phi }\left( u\phi
_{N,m}^{-1}\left( t\right) \right) -G_{\phi }\left( 0\right) \Big]
dF_{\phi }\left( u\right) .
\end{align*}%
Note that, by (\ref{phi-nt-1-1}), $G_{\phi }\left( u\phi _{N,m}^{-1}\left( t\right) \right)
-G_{\phi }\left( 0\right) =I_{\left[ 0,\infty \right) }\left( u\right)
-G_{\phi }\left( 0\right) $ as $N,m\rightarrow \infty $. Also, using similar arguments as in the proof of (\ref{secondrand-1}), it is easy to see that 
\begin{equation}
\lim\!\!\!\!\!\!\! \sup_{N,m,R\rightarrow \infty }\frac{\int_{-\infty }^{+\infty
}\left\vert R^{-1/2}\sum_{j=1}^{R}\left\{ I\left[ \xi _{j}\leq u\phi
_{N,m}^{-1}\left( t\right) \right] -G_{\phi }\left( u\phi _{N,m}^{-1}\left(
t\right) \right) \right\} \right\vert ^{2}dF_{\phi }\left( u\right) }{%
\widetilde{l}\left( N,m,R\right) }=0\text{ a.s.;}. \label{limsup-lemma-a2}
\end{equation}%
Equation (\ref{second-rand-2}) follows directly from (\ref{phi-nt-1-1}), (\ref{limsup-lemma-a2}) (and the Cauchy-Schwartz inequality), and Assumption \ref%
{ass-4}\textit{(iii)}. We point out that the passages above follow closely to the proof of Theorem 4 in \citet{HT16}. 
\end{proof}
\end{lemma}

\begin{lemma}
\label{gamma-chisq}Under Assumptions \ref{ass-1}-\ref{ass-5} and \ref%
{restriction-3}(ii), it holds that, under $H_{0}$%
\begin{eqnarray}
\max_{1\leq k\leq T_{m}}\sqrt{\frac{m}{k\left( k+m\right) }}\left\vert
\sum_{t=m+1}^{m+k}\left( E^{\dag }\left( \Gamma _{t}\right) -1\right)
\right\vert &=&O\left( m^{-\epsilon }\right) ,  \label{exp-gamma} \\
\max_{1\leq k\leq T_{m}}\sqrt{\frac{m}{k\left( k+m\right) }}\left\vert
\sum_{t=m+1}^{m+k}\left( V^{\dag }\left( \Gamma _{t}\right) -2\right)
\right\vert &=&O\left( m^{-\epsilon }\right) ,  \label{var-gamma}
\end{eqnarray}%
for some $\epsilon >0$. Also%
\begin{equation}
E^{\dag }\left\vert \Gamma _{t}\right\vert ^{2+\delta }<\infty ,
\label{liapunov-gamma}
\end{equation}%
for some $\delta >0$.

\begin{proof}
We start with equation (\ref{exp-gamma}). By construction%
\begin{align*}
&\frac{E^{\dag }\left( \Gamma _{t}\right) }{G_{\psi }\left( 0\right) \left[
1-G_{\psi }\left( 0\right) \right] } =\frac{E^{\dag }\int_{-\infty
}^{+\infty }\left\vert \gamma \left( u;t\right) \right\vert ^{2}dF_{\psi
}\left( u\right) }{G_{\psi }\left( 0\right) \left[ 1-G_{\psi }\left(
0\right) \right] }= \\
&=E^{\dag }\int_{-\infty }^{+\infty }\left\vert W^{-1/2}\sum_{j=1}^{W}\left[
\widetilde{\zeta }_{j}\left( u;t\right) -G_{\psi }\left( 0\right) \right]
\right\vert ^{2}dF_{\psi }\left( u\right) =\\
&=W^{-1}\int_{-\infty }^{+\infty }E^{\dag }\left\vert \widetilde{\zeta }%
_{j}\left( u;t\right) -G_{\psi }\left( 0\right) \right\vert ^{2}dF_{\psi
}\left( u\right) ;
\end{align*}%
by similar passages as in the proof of Theorem \ref{theta}, it can be shown
that%
\begin{equation*}
\frac{E^{\dag }\left( \Gamma _{t}\right) }{G_{\psi }\left( 0\right) \left[
1-G_{\psi }\left( 0\right) \right] }-1\leq C_{0}\left[ Wh^{-2}\left( \frac{R%
}{\ln R}\right) +h^{-1}\left( \frac{R}{\ln R}\right) \right] .
\end{equation*}%
Thus%
\begin{align*}
&\max_{1\leq k\leq T_{m}}\sqrt{\frac{m}{k\left( k+m\right) }}\left\vert
\sum_{t=m+1}^{m+k}\left( E^{\dag }\left( \Gamma _{t}\right) -1\right)
\right\vert \leq\\
&\leq C_{0}W\left( \frac{\ln R}{R}\right) ^{2}\max_{1\leq k\leq T_{m}}\sqrt{%
\frac{m}{k\left( k+m\right) }}k\, \leq  \\
&\leq C_{0}m^{1/2}\left[ Wh^{-2}\left( \frac{R}{\ln R}\right) +h^{-1}\left( 
\frac{R}{\ln R}\right) \right] ,
\end{align*}%
which is $O\left( m^{-\epsilon }\right) $ on account of Assumption \ref%
{restriction-3}\textit{(ii)}.

We now turn to (\ref{var-gamma}). Let $\gamma \left( 0;t\right)
=W^{-1/2}\sum_{j=1}^{W}\left[ \widetilde{\zeta }_{j}\left( 0;t\right)
-G_{\psi }\left( 0\right) \right] $; we have%
\begin{align}
&\left( \int_{-\infty }^{+\infty }\left\vert \gamma \left( u;t\right)
\right\vert ^{2}dF_{\psi }\left( u\right) \right) ^{2}-\left( \int_{-\infty
}^{+\infty }\left\vert \gamma \left( 0;t\right) \right\vert ^{2}dF_{\psi
}\left( u\right) \right) ^{2} = \label{var-diff} \\
&=\int_{-\infty }^{+\infty }\left( \left\vert W^{-1/2}\sum_{j=1}^{W}\left[ 
\widetilde{\zeta }_{j}\left( u;t\right) -\widetilde{\zeta }_{j}\left(
0;t\right) \right] \right\vert ^{2}\right. + \notag \\
&\left. +2W^{-1}\sum_{j=1}^{W}\left[ \widetilde{\zeta }_{j}\left(
u;t\right) -\widetilde{\zeta }_{j}\left( 0;t\right) \right] \sum_{j=1}^{W}%
\left[ \widetilde{\zeta }_{j}\left( 0;t\right) -G_{\psi }\left( 0\right) %
\right] ^{2}\right) ^{2}dF_{\psi }\left( u\right)  +\notag \\
&+2\int_{-\infty }^{+\infty }W^{-1/2}\sum_{j=1}^{W}\left[ \widetilde{\zeta }%
_{j}\left( 0;t\right) -G_{\psi }\left( 0\right) \right] \times \left(
\left\vert W^{-1/2}\sum_{j=1}^{W}\left[ \widetilde{\zeta }_{j}\left(
u;t\right) -\widetilde{\zeta }_{j}\left( 0;t\right) \right] \right\vert
^{2}\right. +  \notag \\
&+\left. 2W^{-1}\sum_{j=1}^{W}\left[ \widetilde{\zeta }_{j}\left( u;t\right)
-\widetilde{\zeta }_{j}\left( 0;t\right) \right] \sum_{j=1}^{W}\left[ 
\widetilde{\zeta }_{j}\left( 0;t\right) -G_{\psi }\left( 0\right) \right]
\right) dF_{\psi }\left( u\right) .  \notag
\end{align}%
By Rosenthal's inequality%
\begin{align}
&W^{-2}E^{\dag }\left\vert \sum_{j=1}^{W}\left[ \widetilde{\zeta }%
_{j}\left( u;t\right) -\widetilde{\zeta }_{j}\left( 0;t\right) \right]
\right\vert ^{4}\leq  \label{rosenthal} \\
&\leq C_{0}W^{-2}\left[ \left\vert \sum_{j=1}^{W}E^{\dag }\left( \widetilde{%
\zeta }_{j}\left( u;t\right) -\widetilde{\zeta }_{j}\left( 0;t\right)
\right) \right\vert ^{4}+\sum_{j=1}^{W}E^{\dag }\left\vert \widetilde{\zeta }%
_{j}\left( u;t\right) -\widetilde{\zeta }_{j}\left( 0;t\right) \right\vert
^{4}\right] \leq \notag \\
&\leq C_{0}W^{-2}\left[ W^{4}\left\vert E^{\dag }\left( \widetilde{\zeta }%
_{j}\left( u;t\right) -\widetilde{\zeta }_{j}\left( 0;t\right) \right)
\right\vert ^{4}+C_{1}W\right] \leq \notag \\
&\leq C_{0}W^{-2}\left[ W^{4}\left\vert G_{\psi }\left( \psi
_{N,m,R}^{-1}\left( t\right) \right) -G_{\psi }\left( 0\right) \right\vert
^{4}+C_{1}W\right] \leq \notag \\
&\leq C_{0}W^{-1}+C_{1}W^{2}u^{4}\psi _{N,m,R}^{-4}\left( t\right) ;  \notag
\end{align}%
the same logic also yields%
\begin{align*}
&W^{-1}E^{\dag }\left\vert \sum_{j=1}^{W}\left[ \widetilde{\zeta }_{j}\left(
u;t\right) -\widetilde{\zeta }_{j}\left( 0;t\right) \right] \right\vert ^{2}
\leq C_{0}u\psi _{N,m,R}^{-1}\left( t\right) +C_{1}Wu^{2}\psi
_{N,m,R}^{-2}\left( t\right) , \\
&E^{\dag }\left\vert W^{-1/2}\sum_{j=1}^{W}\left[ \widetilde{\zeta }%
_{j}\left( 0;t\right) -G_{\psi }\left( 0\right) \right] \right\vert ^{2}
\leq C_{0}.
\end{align*}%
Repeated applications of the $C_{r}$-inequality and of the Cauchy-Schwartz
inequality to (\ref{var-diff}) yield%
\begin{equation*}
E^{\dag }\left( \int_{-\infty }^{+\infty }\left\vert \gamma \left(
u;t\right) \right\vert ^{2}dF_{\psi }\left( u\right) \right) ^{2}-E^{\dag
}\left( \int_{-\infty }^{+\infty }\left\vert \gamma \left( 0;t\right)
\right\vert ^{2}dF_{\psi }\left( u\right) \right) ^{2}\leq C_{0}W\psi
_{N,m,R}^{-2}\left( t\right) .
\end{equation*}%
\ Finally, tedious but elementary calculations yield%
\begin{equation*}
E^{\dag }\left\vert W^{-1/2}\sum_{j=1}^{W}\left[ \widetilde{\zeta }%
_{j}\left( 0;t\right) -G_{\psi }\left( 0\right) \right] \right\vert
^{4}=3\left( G_{\psi }\left( 0\right) \left[ 1-G_{\psi }\left( 0\right) %
\right] \right) ^{2}+O\left( W^{-1}\right) .
\end{equation*}%
Putting all together and using (\ref{exp-gamma}), it follows that%
\begin{equation*}
V^{\dag }\left( \Gamma _{t}\right) -2\leq C_{0}\left( W\psi
_{N,m,R}^{-2}\left( t\right) +W^{-1}\right) ,
\end{equation*}%
whence (\ref{var-gamma}) follows.

Consider now (\ref{liapunov-gamma}). By convexity%
\begin{equation*}
\left\vert \int_{-\infty }^{+\infty }\left\vert \gamma \left( u;t\right)
\right\vert ^{2}dF_{\psi }\left( u\right) \right\vert ^{2+\delta }\leq
\int_{-\infty }^{+\infty }\left\vert \gamma \left( u;t\right) \right\vert
^{4+2\delta }dF_{\psi }\left( u\right) ;
\end{equation*}%
also, applying the $C_{r}$-inequality%
\begin{align*}
&\int_{-\infty }^{+\infty }\left\vert \gamma \left( u;t\right) \right\vert
^{4+2\delta }dF_{\psi }\left( u\right) \leq \\
&\leq C_{0}\int_{-\infty }^{+\infty }\left\vert W^{-1/2}\sum_{j=1}^{W}\left[
\widetilde{\zeta _{j}}\left( u;t\right) -G_{\psi }\left( u\psi
_{N,m,R}^{-1}\left( t\right) \right) \right] \right\vert ^{4+2\delta
}dF_{\psi }\left( u\right) +\\
&+C_{0}\int_{-\infty }^{+\infty }\left\vert W^{-1/2}\sum_{j=1}^{W}\left[
G_{\psi }\left( u\psi _{N,m,R}^{-1}\left( t\right) \right) -G_{\psi }\left(
0\right) \right] \right\vert ^{4+2\delta }dF_{\psi }\left( u\right) .
\end{align*}%
Note that, by applying Burkholder's inequality and convexity%
\begin{align}
&\int_{-\infty }^{+\infty }E^{\dag }\left\vert W^{-1/2}\sum_{j=1}^{W}\left[ 
\widetilde{\zeta _{j}}\left( u;t\right) -G_{\psi }\left( u\psi
_{N,m,R}^{-1}\left( t\right) \right) \right] \right\vert ^{4+2\delta
}dF_{\psi }\left( u\right)\leq  \nn\\
&\leq \int_{-\infty }^{+\infty }E^{\dag }\left\vert W^{-1}\sum_{j=1}^{W} 
\left[ \widetilde{\zeta _{j}}\left( u;t\right) -G_{\psi }\left( u\psi
_{N,m,R}^{-1}\left( t\right) \right) \right] ^{2}\right\vert ^{2+\delta
}dF_{\psi }\left( u\right)\leq  \nn\\
&\leq W^{-1}\sum_{j=1}^{W}\int_{-\infty }^{+\infty }E^{\dag }\left\vert 
\left[ \widetilde{\zeta _{j}}\left( u;t\right) -G_{\psi }\left( u\psi
_{N,m,R}^{-1}\left( t\right) \right) \right] ^{2}\right\vert ^{2+\delta
}dF_{\psi }\left( u\right) \leq C_{0},\nn
\end{align}%
since $\widetilde{\zeta _{j}}\left( u;t\right) $ has finite moments of any
order. Also%
\begin{align}
&\int_{-\infty }^{+\infty }\left\vert W^{-1/2}\sum_{j=1}^{W}\left[ G_{\psi
}\left( u\psi _{N,m,R}^{-1}\left( t\right) \right) -G_{\psi }\left( 0\right) %
\right] \right\vert ^{4+2\delta }dF_{\psi }\left( u\right)\leq\nn \\
&\leq W^{2+\delta }\int_{-\infty }^{+\infty }\left\vert G_{\psi }\left(
u\psi _{N,m,R}^{-1}\left( t\right) \right) -G_{\psi }\left( 0\right)
\right\vert ^{4+2\delta }dF_{\psi }\left( u\right)\nn\leq \\
&\leq W^{2+\delta }\left( \frac{m_{G,\psi }}{\psi _{N,m,R}\left( t\right) }%
\right) ^{4+2\delta }\int_{-\infty }^{+\infty }\left\vert u\right\vert
^{4+2\delta }dF_{\psi }\left( u\right) \leq C_{0},\nn
\end{align}%
where $m_{G,\psi }$ is the upper bound of the density of $G_{\psi }\left(
\cdot \right) $; the final estimate follows from Assumptions \ref%
{restriction-2} and \ref{restriction-3}\textit{(i)}. This proves (\ref%
{liapunov-gamma}).
\end{proof}
\end{lemma}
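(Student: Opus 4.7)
The plan is to handle the three conclusions in turn, since they share a common core: all three follow from writing $\Gamma_t$ explicitly in terms of the conditionally i.i.d.\ sequence $\{\tilde\zeta_j(u;t)\}_{j=1}^W$ and exploiting the fact that, under $H_0$, Lemma~\ref{rolling-eigenvalues-2} together with Lemma~\ref{second-randomisation} guarantees $\psi_{N,m,R}(t)\to\infty$ almost surely at a controllable rate, so that $G_\psi(u\psi_{N,m,R}^{-1}(t))\to G_\psi(0)$ uniformly in $u$ on the support of $F_\psi$.

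To prove \eqref{exp-gamma}, I will first compute $E^\dag(\Gamma_t)$ exactly. Writing $g_0=G_\psi(0)$ and $G=G_\psi(u\psi_{N,m,R}^{-1}(t))$ and exploiting the conditional independence of the $\tilde\xi_j$'s, a direct calculation gives
\begin{equation*}
E^\dag |\gamma(u;t)|^2 = \frac{G(1-G)}{g_0(1-g_0)} + \frac{W(G-g_0)^2}{g_0(1-g_0)} .
\end{equation*}
Integrating against $F_\psi$ and using a first-order Taylor expansion of $G_\psi$ about $0$ (the density is bounded by Assumption~\ref{ass-5}(i)), together with $\int u^2\,dF_\psi<\infty$, yields $|E^\dag(\Gamma_t)-1|\leq C[\psi_{N,m,R}^{-1}(t)+W\psi_{N,m,R}^{-2}(t)]$. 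Since $\max_{1\leq k\leq T_m}\sqrt{m/[k(k+m)]}\cdot k\leq m^{1/2}$, this reduces \eqref{exp-gamma} to Assumption~\ref{restriction-3}(ii) after replacing $\psi_{N,m,R}(t)$ by its lower bound $h(R/\widetilde l(N,m,R))$ coming from Lemma~\ref{second-randomisation} applied under $H_0$.

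For \eqref{var-gamma} the computation is similar in spirit but requires a fourth-moment bound. I will write $\Gamma_t^2$ as the double integral $\int\int |\gamma(u_1;t)|^2|\gamma(u_2;t)|^2\,dF_\psi(u_1)\,dF_\psi(u_2)$, then split $\tilde\zeta_j(u;t)=[\tilde\zeta_j(u;t)-\tilde\zeta_j(0;t)]+\tilde\zeta_j(0;t)$ to isolate the part that yields the limiting $\chi^2_1$ variance $2$ and the part that is small because $\psi_{N,m,R}^{-1}(t)\to 0$. Rosenthal's inequality controls the fourth moment of the centred partial sum $\sum_j[\tilde\zeta_j(u;t)-\tilde\zeta_j(0;t)]$ by a term of order $W\psi_{N,m,R}^{-1}(t)+W^2\psi_{N,m,R}^{-2}(t)$, and a direct computation for i.i.d.\ Bernoullis gives $E^\dag|W^{-1/2}\sum_j[\tilde\zeta_j(0;t)-g_0]|^4=3(g_0(1-g_0))^2+O(W^{-1})$, producing the limiting variance $2$ up to $O(W^{-1}+W\psi_{N,m,R}^{-2}(t))$. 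Combining with the bound on the mean just established and once again using Assumption~\ref{restriction-3}(ii) and $\max_k\sqrt{m/[k(k+m)]}\cdot k\leq m^{1/2}$ gives \eqref{var-gamma}.

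Finally, for the Liapunov condition \eqref{liapunov-gamma} I will use convexity of $x\mapsto x^{2+\delta}$ to pull the power inside the integral, obtaining $|\Gamma_t|^{2+\delta}\leq \int |\gamma(u;t)|^{4+2\delta}\,dF_\psi(u)$; then I will split $\gamma(u;t)$ as the sum of the centred piece and the bias piece $W^{1/2}[G-g_0]/\sqrt{g_0(1-g_0)}$. Burkholder applied to the centred piece yields a bounded $(4+2\delta)$-moment since $\tilde\zeta_j$ is bounded, while the bias piece is bounded by $W^{(2+\delta)/1}\cdot(m_{G,\psi}|u|/\psi_{N,m,R}(t))^{4+2\delta}$, which is integrable against $F_\psi$ thanks to Assumption~\ref{restriction-3}(i). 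The main obstacle, both analytically and in terms of bookkeeping, is the variance expansion, because controlling the fourth moment of a double stochastic integral requires carefully tracking the small shifts $G-g_0$ across the two integration variables; the key simplifying observation is that $G_\psi$ has a bounded density, so $|G-g_0|$ is uniformly of order $|u|\psi_{N,m,R}^{-1}(t)$, which reduces all remaining estimates to matching the explicit rate imposed by Assumption~\ref{restriction-3}(ii).
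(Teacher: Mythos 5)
Your proposal is correct and follows essentially the same route as the paper: an exact conditional-mean computation for $E^\dagger|\gamma(u;t)|^2$ via the Bernoulli structure combined with the bounded-density estimate $|G_\psi(u\psi_{N,m,R}^{-1}(t))-G_\psi(0)|\leq m_{G,\psi}|u|\psi_{N,m,R}^{-1}(t)$ and the elementary bound $\max_k\sqrt{m/[k(k+m)]}\,k\leq m^{1/2}$ for \eqref{exp-gamma}; Rosenthal's inequality plus the exact fourth-moment formula $3(G_\psi(0)[1-G_\psi(0)])^2+O(W^{-1})$ after splitting on $\tilde\zeta_j(u;t)-\tilde\zeta_j(0;t)$ for \eqref{var-gamma}; and convexity, the $C_r$-inequality, Burkholder, and the moment condition in Assumption~\ref{restriction-3}(i) for \eqref{liapunov-gamma}. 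The only cosmetic difference is that you phrase the bounds in terms of $\psi_{N,m,R}(t)$ and then invoke the lower bound $\psi_{N,m,R}(t)\gtrsim h(R/\widetilde l(N,m,R))$ from Lemma~\ref{second-randomisation} under $H_0$, whereas the paper writes the bound directly in terms of $h$; both lead to Assumption~\ref{restriction-3}(ii) closing the argument.
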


\subsection{Proofs of main results}

\begin{proof}[Proof of Lemma \protect\ref{rolling-eigenvalues}]

Recall the definition%
\begin{equation*}
\Sigma _{m}\left( t\right) =\frac{1}{m}\sum_{k=t-m+1}^{t}E\left(
X_{k}X_{k}^{\prime }\right) .
\end{equation*}%
When $m\le t<\tau $ and under $H_{A,1}$ or $H_{A,2}$, it is easy to see that there is no change in the $\left(
r+1\right) $-th eigenvalue of $E\left( X_{k}X_{k}^{\prime }\right) $. Thus,
the proof that $\lambda ^{\left( r+1\right) }\left( t\right) $ is finite is
exactly the same as the proof of Lemma 2.1 in \citet{trapani17}. The same holds under $H_0$ for all $m\le t\le T$.

We begin with studying $H_{A,2}$. When $\tau\le t<\tau+m-1$, it holds that%
\begin{align}
\Sigma _{m}\left( t\right) &=\frac{1}{m}\sum_{k=t-m+1}^{\tau -1}E\left(
X_{k}X_{k}^{\prime }\right) +\frac{1}{m}\sum_{k=\tau }^{t}E\left(
X_{k}X_{k}^{\prime }\right)\nonumber\\
& =\frac{\tau+m-t-1 }{m}\Sigma _{m}^{\left(
1\right) }\left( t\right) +\frac{t-\tau +1}{m}\Sigma _{m}^{\left( 2\right)
}\left( t\right) .  \label{sig-m-t}
\end{align}

Let $\lambda _{1}^{\left( r+1\right) }\left( t\right) $ and $\lambda
_{2}^{\left( r+1\right) }\left( t\right) $ be the $\left( r+1\right) $-th
eigenvalue of $\Sigma _{m}^{\left( 1\right) }\left( t\right) $ and $\Sigma
_{m}^{\left( 2\right) }\left( t\right) $ respectively. Now $\lambda _{1}^{\left( r+1\right) }\left( t\right)$ depends only on observations before the break, therefore $\lambda _{1}^{\left( r+1\right) }\left( t\right)<\infty$: this
can be shown again by following the proof of Lemma 2.1 in \citet{trapani17}. As far as 
$\lambda _{2}^{\left( r+1\right) }\left( t\right) $ is concerned, it depends only on 
post-break observations, which are driven by a factors vector $f_t$ of size $(r+q)$. In particular,
$\Sigma _{m}^{\left( 2\right) }\left( t\right) =A(t)E\left( f_{t}f_{t}^{\prime }\right) A(t)^{\prime }+\Sigma
_{u}\left( t\right) $, where $A(t)=\left[ A | B \right]$ is a constant $N\times (r+q)$ matrix.
By Weyl's inequality and Assumption \ref{ass-2} 

\begin{align}
\lambda _{2}^{\left( r+1\right) }\left( t\right) &
\ge \gamma^{(r+1)}(t) +\omega^{(N)}(t)\geq \underline C_{r+1}(t)N.  \label{sig-m2-t}
\end{align}

Then, applying Weyl's inequality
and (\ref{sig-m2-t}) to (\ref{sig-m-t}), it follows that%
\begin{equation*}
\lambda ^{\left( r+1\right) }\left( t\right) \geq \frac{\tau+m-t-1 }{m}%
\lambda _{1}^{\left( \min \right) }\left( t\right) +\frac{t-\tau +1}{m}%
\lambda _{2}^{\left( r+1\right) }\left( t\right) ,
\end{equation*}%
which yields (\ref{lambda-population-2}) immediately. When instead $\tau+m-1\le t \le T$ we have $\Sigma_m(t)=\Sigma_m^{(2)}(t)$ and the result follows directly from  \eqref{sig-m2-t}. 
Under $H_{A,1}$, $r(t)=r$ for $m\le t\le T$ and therefore $\Sigma_F(t)= E(f_tf_t^\prime)$ is a $r\times r$ constant matrix and we denote it as $\Sigma_F$.
We have
\beq\label{eq:boh}
\Sigma_X(t) = \big(A\Sigma_F A'\big) \ I_{[m,\tau)}(t) + \big(\widetilde A\Sigma_F\widetilde A' \big)\, I_{[\tau,T]}(t)+\Sigma_u(t). 
\eeq
Therefore, it holds that for $\tau\le t< \tau+m-1$
\begin{equation*}
\Sigma _{m}\left( t\right) =\frac 1 m\sum_{k=t-m+1}^{t}\Sigma_X(k) =A^{\ast }\Sigma _{F}^{\ast }\left( t\right)
A^{\ast \prime }+\Sigma _{u}\left( t\right) ,
\end{equation*}%
where we have defined $A^*=[A|\widetilde A]$ and
\begin{equation}
\Sigma _{F}^{\ast }\left( t\right) =\left[ 
\begin{array}{cc}
\frac{\tau+m-t-1 }{m}\Sigma _{F}  & 0 \\ 
0 & \frac{t-\tau +1}{m}\Sigma _{F}
\end{array}%
\right] ,  \label{sig-f-star}
\end{equation}%
with the off-diagonal blocks being $r\times r$ matrices of zeros. Denoting by $\nu^{(k)}(\cdot)$ the $k$-th largest eigenvalues of a matrix, by Weyl's
inequality, we have
\begin{equation}
\lambda ^{\left( r+1\right) }\left( t\right) \geq \nu ^{\left(
r+1\right) }\left( A^{\ast }\Sigma _{F}^{\ast }\left( t\right) A^{\ast
\prime }\right) +\omega^{(N)}(t)\ge  \nu ^{\left(
r+1\right) }\left( A^{\ast }\Sigma _{F}^{\ast }\left( t\right) A^{\ast
\prime }\right),
\label{weyl-m-Ha2}
\end{equation}
Now, since the spectrum of a block diagonal matrix is the union of the spectra of the blocks 
\begin{align}
\nu^{\left( r+1\right) }&\left(
A^{\ast}\Sigma_F^{\ast}(t)A^{\ast \prime  }\right)= \nu^{\left(r+1\right)}\left(\begin{array}{cc}
\frac{\tau+m-t-1 }{m}A\Sigma_FA'&0\\
0&\frac{t-\tau +1}{m}\widetilde A\Sigma_F\widetilde A'
\end{array}
\right)\nonumber\\
&= \min\left\{ \nu^{(r)}\left(\frac{\tau+m-t-1 }{m}A\Sigma_F A'\right),\nu^{(r)}\left(\frac{t-\tau +1}{m}\widetilde A\Sigma_F\widetilde A'\right)\right\}\nonumber\\
&\ge \min\left\{\frac{\tau+m-t-1 }{m},\frac{t-\tau +1}{m} \right\}\,\min\left\{\nu^{(r)}\big(A\Sigma_F A'\big),\nu^{(r)}\big(\widetilde A\Sigma_F\widetilde A'\big)\right\}\nonumber\\
&\ge\min\left\{\frac{\tau+m-t-1 }{m},\frac{t-\tau +1}{m} \right\} \underline C_r N,\label{eq:SFmax}
\end{align}
where the last inequality follows from the fact that by \eqref{eq:boh} and Assumption \ref{ass-2} we have
$$
\gamma^{(r)}(t)= \nu^{(r)}\big(A\Sigma_FA'\big)  I_{[m,\tau)}(t)+ \nu^{(r)}\big(\widetilde A\Sigma_F\widetilde A'\big)  I_{[\tau,T]}(t)\ge \underline C_r N,
$$  
which implies that $ \nu^{(r)}\big(A\Sigma_FA'\big)\ge \underline C_r N$ and $\nu^{(r)}\big(\widetilde A\Sigma_F\widetilde A'\big)\ge \underline C_r N$. 
Using \eqref{eq:SFmax} in (\ref{weyl-m-Ha2}), equation (\ref{lambda-population}) follows. Last, when $\tau+m-1\le t\le T$, we have  $A^*=\widetilde A$ and $\Sigma_F^*(t)=\Sigma_F$ and the proof is the same as in Lemma 2.1 in \citet{trapani17}.
\end{proof}

\begin{proof}[Proof of Lemma \protect\ref{rolling-eigenvalues-2}]

The proof is essentially the same as the proof of Lemma 2.2 in \citet{trapani17}, and we report it here in full for completeness. Consider the
eigenvalue stability inequality (see e.g. \citet{hornjohnson}, p. 367), viz.%
\begin{equation}
\left\vert \widehat{\lambda }^{\left( r+1\right) }\left( t\right) -\lambda
^{\left( r+1\right) }\left( t\right) \right\vert \leq \left\Vert \widehat{%
\Sigma }_{m}\left( t\right) -\Sigma _{m}\left( t\right) \right\Vert _{op},
\label{beta1}
\end{equation}%
where $\left\Vert \cdot \right\Vert _{op}$ is the operator norm. By
symmetry, $\left\Vert \widehat{\Sigma }_{m}\left( t\right) -\Sigma
_{m}\left( t\right) \right\Vert _{op}\leq \left\Vert \widehat{\Sigma }%
_{m}\left( t\right) -\Sigma _{m}\left( t\right) \right\Vert _{F}$, where $%
\left\Vert \cdot \right\Vert _{F}$ denotes the Frobenius norm; hence, (\ref%
{beta1}) becomes
\begin{equation}
\left\vert \widehat{\lambda }^{\left( r+1\right) }\left( t\right) -\lambda
^{\left( r+1\right) }\left( t\right) \right\vert \leq \left[
\sum_{i=1}^{N}\sum_{j=1}^{N}\left( \frac{1}{m}%
\sum_{k=t-m+1}^{t}X_{i,k}X_{j,k}-E(X_{i,k}X_{j,k})\right) ^{2}\right] ^{1/2}.
\label{betahigh}
\end{equation}%
We now provide an estimate for $\left\vert \widehat{\lambda }^{\left(
r+1\right) }\left( t\right) -\lambda ^{\left( r+1\right) }\left( t\right)
\right\vert $; the proof uses, in a multi-index context, the same approach
as \citet{cai2006}. Let $\delta _{h,j,k}\equiv X_{h,k}X_{j,k}-E(X_{h,k}X_{j,k})$, and $%
t_{0}=t-m+1$; we begin by showing 
\begin{align} 
\sum_{N=1}^{\infty }\sum_{N=1}^{\infty }\sum_{m=1}^{\infty }\frac{1}{N^{2}m}P
\Bigg[ \max_{1\leq \tilde{h}\leq N,1\leq \tilde{j}\leq N,t_{0}\leq \tilde{t}%
\leq t_{0}+m-1}\Bigg\vert \sum_{h=1}^{\tilde{h}}\sum_{j=1}^{\tilde{j}}\Bigg( 
\frac{1}{m}\sum_{k=t_{0}}^{\tilde{t}}\delta _{h,j,k}\Bigg) ^{2}\Bigg\vert
^{1/2}\nonumber\\
>\varepsilon \frac{N}{\sqrt{m}}\ln ^{1+\epsilon }N\ln ^{\frac{%
1+\epsilon }{2}}m\Bigg]  < \infty ,  \label{pmax1}
\end{align}%
for some $\varepsilon >0$ and any $\epsilon>0$. Equation (\ref{pmax1}) can be shown by noting that%
\begin{align*}
E\left[ \max_{1\leq \tilde{h}\leq N,1\leq \tilde{j}\leq N,t_{0}\leq \tilde{t}%
\leq t_{0}+m-1}\left\vert \sum_{h=1}^{\tilde{h}}\sum_{j=1}^{\tilde{j}}\left( 
\frac{1}{m}\sum_{k=t_{0}}^{\tilde{t}}\delta _{h,j,k}\right) ^{2}\right\vert %
\right] \\
\leq \sum_{h=1}^{N}\sum_{j=1}^{N}E\left[ \max_{t_{0}\leq \tilde{t}%
\leq t_{0}+m-1}\left\vert \left( \frac{1}{m}\sum_{k=t_{0}}^{\tilde{t}}\delta
_{h,j,k}\right) ^{2}\right\vert \right] \leq C_{0}\frac{N^{2}}{m},
\end{align*}%
by virtue of Assumption \ref{ass-3}\textit{(ii)}. Thus, by Markov inequality, (\ref%
{pmax1}) holds since 
\begin{equation*}
\sum_{N=1}^{\infty }\sum_{N=1}^{\infty }\sum_{m=1}^{\infty }\frac{1}{N^{2}m}%
\frac{m}{\varepsilon ^{2}N^{2}\left[ \ln ^{1+\epsilon }N\ln ^{\frac{%
1+\epsilon }{2}}m\right] ^{2}}C_{0}\frac{N^{2}}{m}<\infty .
\end{equation*}%
We now show that (\ref{pmax1}) entails 
\begin{equation}
\lim \sup_{N,m\rightarrow \infty }\frac{\left\vert
\sum_{h=1}^{N}\sum_{j=1}^{N}\left(\sum_{k=t_{0}}^{t_{0}+m-1} \delta
_{h,j,k}\right)^{2}\right\vert^{1/2} }{N\sqrt{m}\ln ^{1+\epsilon }N\ln ^{\frac{1+\epsilon }{2}%
}m}=0\text{ a.s.}  \label{pmax2}
\end{equation}%
Similarly to the proof of Lemma \ref{second-randomisation}, note that for every triple $\left( N,N,m\right) $, there is a triple of positive
integers $\left( k'_{1},k'_{2},k'_{3}\right) $ such that $2^{k'_{1}}$ $\leq $ $N$
$<$ $2^{k'_{1}+1}$, $2^{k'_{2}}$ $\leq $ $N$ $<$ $2^{k'_{2}+1}$, $2^{k'_{3}}$ $%
\leq $ $m$ $<$ $2^{k'_{3}+1}$. Further, there is also a triple of real
numbers defined over $\left[ 0,1\right) $, say $\left( \rho' _{1},\rho'
_{2},\rho' _{3}\right) $, such that $N$ $=$ $2^{k'_{1}+\rho' _{1}}$, etc...
Define now the short-hand notation%
\begin{align}
L\left( k'_{1},k'_{2},k'_{3}\right) &\equiv \frac{\sqrt{2^{k'_{1}+1}}\sqrt{%
2^{k'_{2}+1}}}{\sqrt{2^{k'_{3}+1}}}\ln ^{\frac{1+\epsilon }{2}}\left(
2^{k'_{1}+\rho' _{1}}\right) \ln ^{\frac{1+\epsilon }{2}}\left( 2^{k'_{2}+\rho'
_{2}}\right) \ln ^{\frac{1+\epsilon }{2}}\left( 2^{k'_{3}+\rho' _{3}}\right) ,\nonumber\\
S\left( k'_{1},k'_{2},k'_{3}\right) &\equiv
\left\vert\sum_{h=1}^{k'_{1}}\sum_{j=1}^{k'_{2}}\left(\sum_{k=t_{0}}^{t_{0}+k'_{3}-1}\delta
_{h,j,k}\right)^{2}\right\vert^{1/2},\nonumber\\
P_{k'_{1},k'_{2},k'_{3}}&\equiv P\left[ \max_{1\leq k'_{1}\leq 2^{k'_{1}+\rho'
_{1}},1\leq k'_{2}\leq 2^{k'_{2}+\rho' _{2}},1\leq k'_{3}\leq 2^{k'_{3}+\rho'
_{3}}}\left\vert S\left( k'_{1},k'_{2},k'_{3}\right) \right\vert >\varepsilon
L\left( k'_{1},k'_{2},k'_{3}\right) \right] .\nonumber
\end{align}%


Equation (\ref{pmax1}) entails that%
\begin{equation*}
\sum_{k'_{1}=0}^{\infty }\sum_{k'_{2}=0}^{\infty }\sum_{k'_{3}=0}^{\infty }%
\frac{2^{k'_{1}}2^{k'_{2}}2^{k'_{3}}}{\left( 2^{k'_{1}+1}-1\right) \left(
2^{k'_{2}+1}-1\right) \left( 2^{k'_{3}+1}-1\right) }P_{k'_{1},k'_{2},k'_{3}}<%
\infty ;
\end{equation*}%
thus%
\begin{equation*}
\sum_{k'_{1}=0}^{\infty }\sum_{k'_{2}=0}^{\infty }\sum_{k'_{3}=0}^{\infty
}P_{k'_{1},k'_{2},k'_{3}}\leq 2^{3} \sum_{k'_{1}=0}^{\infty }\sum_{k'_{2}=0}^{\infty
}\sum_{k'_{3}=0}^{\infty }\frac{2^{k'_{1}}2^{k'_{2}}2^{k'_{3}}}{\left(
2^{k'_{1}+1}-1\right) \left( 2^{k'_{2}+1}-1\right) \left( 2^{k'_{3}+1}-1\right) 
}P_{k'_{1},k'_{2},k'_{3}}<\infty ,
\end{equation*}%
so that the Borel-Cantelli Lemma yields%
\begin{equation*}
\frac{\max_{k'_{1},k'_{2},k'_{3}}\left\vert S\left( k'_{1},k'_{2},k'_{3}\right)
\right\vert }{L\left( k'_{1},k'_{2},k'_{3}\right) }\rightarrow 0\text{ a.s.,}
\end{equation*}%
whence 
\begin{align}
\frac{\left\vert S\left( N,N,T\right) \right\vert }{L\left( N,N,T\right) }
\leq \frac{\max_{k'_{1},k'_{2},k'_{3}}\left\vert S\left(
k'_{1},k'_{2},k'_{3}\right) \right\vert }{L\left( k'_{1},k'_{2},k'_{3}\right) }%
\frac{L\left( k'_{1},k'_{2},k'_{3}\right) }{L\left( N,N,T\right) } \leq \sqrt{2}\frac{\max_{k'_{1},k'_{2},k'_{3}}\left\vert S\left(
k'_{1},k'_{2},k'_{3}\right) \right\vert }{L\left( k'_{1},k'_{2},k'_{3}\right) }%
\rightarrow 0\text{ a.s.,}\nonumber
\end{align}%
so that finally%
\begin{equation*}
\lim \sup_{N,T\rightarrow \infty }\frac{\left\vert S\left( N,N,T\right)
\right\vert }{L\left( N,N,T\right) }=0\text{ a.s..}
\end{equation*}%
The desired result now follows immediately.

\end{proof}

\begin{proof}[Proof of Theorem \protect\ref{theta}]
Consider (\ref{theta-null}); we report its proof, which is a refinement of the proof of Theorem 3 in \citet{HT16}, in full. In the presence of
a break, it follows from Lemmas \ref{rolling-eigenvalues} and \ref%
{rolling-eigenvalues-2} that
\begin{equation*}
P\left\{ \omega :\;\lim_{N,m\rightarrow \infty }\phi _{N,m}\left( t\right)
=\infty \right\} =1,
\end{equation*}%
for each $t\geq \tau $, as long as%
\begin{eqnarray*}
N^{1-\delta }\min \left\{\frac{t-\tau +1}{m},\frac{\tau+m-t-1}{m}\right \} &\rightarrow &\infty \text{ for }
\tau \leq
t< \tau +m-1, \\
N^{1-\delta } &\rightarrow &\infty \text{ for } t\geq \tau +m-1,
\end{eqnarray*}%
hold. By definition, this holds true within the intervals $t^{\ast}_{N,m} \leq t \leq t^{\ast}_{N,m}$ under $H_{A,1}$, and $t^{*}_{N,m} \leq t \leq T$ under $H_{A,2}$. Thus, we can assume from now on that $\lim_{N,m\rightarrow \infty
}\phi _{N,m}\left( t\right) =\infty $ holds in the prescribed intervals. Note that%
\begin{align}
R^{-1/2}\sum_{i=1}^{R}&\left[ \zeta _{i}(u;t)-G_{\phi }\left( 0\right) \right]
=R^{-1/2}\sum_{i=1}^{R}\left[ I\{\xi _{i}\leq 0\}-G_{\phi }\left( 0\right) %
\right]+ \nonumber \\
& +R^{-1/2}\sum_{i=1}^{R}\left[ I\{\xi _{i}\leq u\phi _{N,m}^{-1}\left( t\right)
\}-I\{\xi _{i}\leq 0\}-(G_{\phi }(u\phi _{N,m}^{-1}\left( t\right) )-G_{\phi
}(0))\right] + \nonumber \\
& +R^{-1/2}\sum_{i=1}^{R}\left[ G_{\phi }(u\phi _{N,m}^{-1}\left( t\right) )-G_{\phi
}(0)\right] . \label{th1-1}
\end{align}

By construction, 
\begin{eqnarray*}
E^{\ast }\zeta _{j}(u;t) &=&G_{\phi }(u\phi _{N,m}^{-1}\left( t\right) ), \\
E^{\ast }(\zeta _{j}(u;t)-E^{\ast }\zeta _{j}(u;t))^{2} &=&G_{\phi }(u\phi
_{N,m}^{-1}\left( t\right) )\left[ 1-G_{\phi }(u\phi _{N,m}^{-1}\left(
t\right) )\right] .
\end{eqnarray*}%
Consider now the following passages:%
\begin{align*}
E^{\ast }& \left[\int_{-\infty }^{\infty }\!\left( R^{-1/2}\sum_{i=1}^{R}\left[
I\{\xi _{i}\leq u\phi _{N,m}^{-1}\left( t\right) \}-I\{\xi _{i}\leq
0\}-(G_{\phi }(u\phi _{N,m}^{-1}\left( t\right) )-G_{\phi }(0))\right]
\right) ^{2}\!dF_{\phi }(u)\right]= \\
& =\int_{-\infty }^{\infty }E^{\ast }\left[ I\{\xi _{1}\leq u\phi
_{N,m}^{-1}\left( t\right) \}-I\{\xi _{1}\leq 0\}-(G_{\phi }(u\phi
_{N,m}^{-1}\left( t\right) )-G_{\phi }(0))\right] ^{2}dF_{\phi }(u),
\end{align*}%
on account of the independence of the $\xi _{i}$s. Also, note that the
random variable $I\{\xi _{1}\leq u\phi _{N,m}^{-1}\left( t\right) \}-I\{\xi
_{1}\leq 0\}$ has expected value given by $G_{\phi }(u\phi _{N,m}^{-1}\left(
t\right) )-G_{\phi }(0)$, and variance equal to%
\begin{align*}
& E^{\ast }\left[ I\{\xi _{1}\leq u\phi _{N,m}^{-1}\left( t\right) \}-I\{\xi
_{1}\leq 0\}-G_{\phi }(u\phi _{N,m}^{-1}\left( t\right) )-G_{\phi }(0)\right]
^{2}= \\
& =\left( G_{\phi }(u\phi _{N,m}^{-1}\left( t\right) )-G_{\phi }(0)\right) 
\left[ 1-G_{\phi }(u\phi _{N,m}^{-1}\left( t\right) )-G_{\phi }(0)\right]\leq \\
& \leq G_{\phi }(u\phi _{N,m}^{-1}\left( t\right) )-G_{\phi }(0).
\end{align*}%
Hence, we have%
\begin{align}
& \int_{-\infty }^{\infty }E^{\ast }\left[ I\{\xi _{1}\leq u\phi
_{N,m}^{-1}\left( t\right) \}-I\{\xi _{1}\leq 0\}-(G_{\phi }(u\phi
_{N,m}^{-1}\left( t\right) )-G_{\phi }(0))\right] ^{2}dF_{\phi }(u) \leq \nonumber \\
& \leq \int_{-\infty }^{\infty }\left[ G_{\phi }(u\phi _{N,m}^{-1}\left(
t\right) )-G_{\phi }(0)\right] dF_{\phi }(u)\leq \frac{m_{G}}{\phi
_{N,m}\left( t\right) }\int_{-\infty }^{\infty }|u|dF_{\phi }(u), \label{th1-2}
\end{align}%
where the last passage follows from Assumption \ref{ass-5}\textit{(i)}, with 
$m_{G}$ an upper bound for the density of $G$. Also%
\begin{equation}
\int_{-\infty }^{\infty }\left( R^{1/2}\left[ G_{\phi }(u\phi
_{N,m}^{-1}\left( t\right) )-G_{\phi }(0)\right] \right) ^{2}dF_{\phi
}\left( u\right) \leq \frac{R}{\phi _{N,m}^{2}\left( t\right) }%
m_{G}\int_{-\infty }^{\infty }u^{2}dF_{\phi }(u). \label{th1-3}
\end{equation}%
Hence, using Assumptions \ref{ass-4} and \ref{restriction-1}, we conclude
via Markov's inequality that 
\begin{align*}
\Theta _{t}& =\int_{-\infty }^{\infty }\left\{ \frac{1}{\sqrt{G_{\phi }(0)%
\left[ 1-G_{\phi }(0)\right] }R^{1/2}}\sum_{i=1}^{R}\left[ I\{\xi _{i}\leq
0\}-G_{\phi }(0)\right] \right\} ^{2}dF_{\phi }(u)+o_{P^{\ast }}(1)= \\
& =\left\{ \frac{1}{\sqrt{G_{\phi }(0)\left[ 1-G_{\phi }(0)\right] }R^{1/2}}%
\sum_{i=1}^{R}\left[ I\{\xi _{i}\leq 0\}-G_{\phi }(0)\right] \right\}
^{2}+o_{P^{\ast }}(1),
\end{align*}%
and therefore the result follows from the Central Limit Theorem for
Bernoulli random variables. 

Equation (\ref{theta-alternative}) can be shown by exactly the same logic as the proof of (\ref{second-rand-2}), and of Theorem 4 in \citet{HT16}, and is therefore omitted.
\end{proof}

\begin{proof}[Proof of Theorem \protect\ref{gamma}]
The proof of the theorem is exactly the same as that of Theorem \ref{theta}, but this time 
based on Lemma \ref{second-randomisation}.
\end{proof}

\begin{proof}[Proof of Theorem \protect\ref{darling-erdos}]
By (\ref{exp-gamma}) and (\ref{var-gamma}), we have%
\begin{equation}
\max_{1\leq k\leq T_{m}}\sqrt{\frac{m}{k\left( k+m\right) }}\left\vert
\sum_{t=m+1}^{m+k}\frac{\Gamma _{t}-1}{\sqrt{2}}\right\vert =\max_{1\leq
k\leq T_{m}}\sqrt{\frac{m}{k\left( k+m\right) }}\left\vert
\sum_{t=m+1}^{m+k}Z_{t}\right\vert +O\left( m^{-\epsilon }\right) ,
\label{max-1}
\end{equation}%
where%
\begin{equation*}
Z_{t} = \frac{\Gamma _{t}-E^{\dag }\left( \Gamma _{t}\right) }{\sqrt{%
V^{\dag }\left( \Gamma _{t}\right) }}
\end{equation*}%
is an \textit{i.i.d.} sequence with mean zero, unit variance and finite
moments of order $2+\delta $. Consider (\ref{erdos-1}); on account of (\ref%
{max-1}), this holds immediately, following the same passages as in the
proof of Theorem 2.1 in \citet{lajos04}. As far as (\ref%
{erdos-2}) is concerned, due to the polynomial rate of approximation in (\ref%
{max-1}), it suffices to prove that%
\begin{equation*}
P^{\dag }\left( A_{m}\max_{1\leq k\leq T_{m}}\sqrt{\frac{m}{k\left(
k+m\right) }}\left\vert \sum_{t=m+1}^{m+k}Z_{t}\right\vert \geq
x+D_{m}\right) =e^{-e^{-x}},
\end{equation*}%
as $\min \left( N,m,R,W\right) \rightarrow \infty $. This is a relatively
standard exercise, and it is very similar to the proof of Theorem 1.1 in \citet{lajos07}; we therefore report only the main passages.
Let $a\left( m\right) =\left( \ln m\right) ^{2}$; by virtue of (\ref%
{liapunov-gamma}), it holds that%
\begin{equation}
\sup_{a\left( m\right) \leq k<\infty }\sqrt{\frac{m}{k\left( k+m\right) }}%
\left\vert \sum_{t=m+1}^{m+k}Z_{t}-B\left( k\right) \right\vert =O_{P^{\dag
}}\left( a\left( m\right) ^{-\frac{\epsilon }{2\left( 2+\epsilon \right) }%
}\right) ,  \label{kmt}
\end{equation}%
where $\left\{ B\left( t\right),\ 0\leq t<\infty \right\} $ is a
standard Wiener process -- see \citet{KMT1,KMT2}. We
now show that%
\begin{equation}
\max_{a\left( m\right) \leq k\leq \frac{cm}{\ln m}}\frac{B\left( \frac{k}{k+m%
}\right) }{\sqrt{\frac{k}{m}}}=\max_{a\left( m\right) \leq k\leq \frac{cm}{%
\ln m}}\frac{B\left( \frac{k}{k+m}\right) }{\sqrt{\frac{k}{k+m}}}+O_{P^{\dag
}}\left( \frac{\left( \ln \ln m\right) ^{1/2}}{\ln m}\right) ;
\label{modulus}
\end{equation}%
given that%
\begin{equation*}
\left\vert \frac{k}{m}-\frac{k}{k+m}\right\vert \leq \left( \frac{k}{m}%
\right) ^{2},
\end{equation*}%
using the modulus of continuity of the Wiener process we obtain%
\begin{equation*}
\max_{a\left( m\right) \leq k\leq \frac{cm}{\ln m}}\left\vert \frac{B\left( 
\frac{k}{k+m}\right) }{\sqrt{\frac{k}{m}}}-\frac{B\left( \frac{k}{k+m}%
\right) }{\sqrt{\frac{k}{k+m}}}\right\vert =O_{P^{\dag }}\left( 1\right)
\max_{a\left( m\right) \leq k\leq \frac{cm}{\ln m}}\frac{k}{m}\left( \ln 
\frac{m}{k}\right) ^{1/2},
\end{equation*}%
whence (\ref{modulus}) follows. Consequently, the following results hold:%
\begin{eqnarray}
&&\frac{1}{\sqrt{2\ln \ln m}}\max_{1\leq k\leq T_{m}}\frac{B\left( \frac{k}{%
k+m}\right) }{\sqrt{\frac{k}{m}}}\overset{P^{\dag }}{\rightarrow }1,
\label{lemma3.4} \\
&&A_{m}\max_{1\leq k\leq a\left( m\right) }\frac{B\left( \frac{k}{k+m}%
\right) }{\sqrt{\frac{k}{m}}}-D_{m}\overset{P^{\dag }}{\rightarrow }-\infty ,
\label{lemma3.5} \\
&&A_{m}\max_{\frac{cm}{\ln m}\leq k\leq T_{m}}\frac{B\left( \frac{k}{k+m}%
\right) }{\sqrt{\frac{k}{m}}}-D_{m}\overset{P^{\dag }}{\rightarrow }-\infty ;
\label{lemma3.5b}
\end{eqnarray}%
the results above are shown, for $\sqrt{\frac{k+m}{m}}B\left( \frac{k}{k+m}%
\right) $, in Lemmas 3.4, 3.5 and (in the proof of) Lemma 3.6 in \citet{lajos07}; in (\ref{lemma3.5b}) we have used the fact that, by
Assumption \ref{ass-6}, there exists a $c>0$ such that $T_{m}>cm$. Combining
(\ref{kmt}), (\ref{lemma3.4}), (\ref{lemma3.5}), (\ref{lemma3.5b}) and (\ref%
{modulus}) together, we obtain%
\begin{align*}
&P^{\dag }\left( A_{m}\max_{1\leq k\leq T_{m}}\frac{1}{\sqrt{k\left(
k+m\right) }}\left\vert \sum_{t=m+1}^{m+k}Z_{t}\right\vert \geq
x+D_{m}\right) =\\
&=P^{\dag }\left( A_{m}\max_{a\left( m\right) \leq k\leq \frac{cm}{\ln m}}%
\frac{\left\vert B\left( \frac{k}{k+m}\right) \right\vert }{\sqrt{\frac{k}{%
k+m}}}\geq x+D_{m}\right) +o\left( 1\right) ;
\end{align*}%
then the desired result follows from Lemma 3.6 in \citet{lajos07}.

Consider now (\ref{power-suffcond-2}); it is convenient to prove the result
under $H_{A,2}$ first. On account of (\ref{lambda-population}), Lemmas \ref%
{rolling-eigenvalues} and \ref{rolling-eigenvalues-2}, Assumption \ref{ass-4}%
\textit{(i)} and (\ref{gamma-alternative}), it holds that%
\begin{equation*}
\Gamma _{t}=C_{0}W+o_{P^{\dag }}\left( W\right), \text{ for }t\geq \tau
+C_{1}m^{1/2+\epsilon },
\end{equation*}%
where $\epsilon >0$ is such that $\frac{N^{1-\delta }}{m^{1/2-\epsilon }}%
\rightarrow C_{2}\in \left( 0,+\infty \right) $ and 
\begin{equation*}
C_{0}=\int_{-\infty }^{+\infty }\frac{\left\vert G_{\psi }\left( u\right)
-G_{\psi }\left( 0\right) \right\vert ^{2}}{G_{\psi }\left( 0\right) \left[
1-G_{\psi }\left( 0\right) \right] }dF_{\psi }\left( u\right) .
\end{equation*}%
Thus, standard algebra yields that under $H_{A,2}$%
\begin{equation*}
\sum_{m+1}^{m+k}\frac{\Gamma _{t}-1}{\sqrt{2}}=O_{P^{\dag }}\left( 1\right) %
\left[ m+k-\left( \tau +C_{1}m^{1/2+\epsilon }\right) \right] W+o_{P^{\dag
}}\left( W\right) ,
\end{equation*}%
whenever $k\geq \tau +C_{1}m^{1/2+\epsilon }$. Therefore, 
\begin{equation*}
\Lambda _{m}=O_{P^{\dag }}\left( 1\right) m^{1/2}W\max_{1\leq k\leq T_{m}}%
\frac{m+k-\left( \tau +C_{1}m^{1/2+\epsilon }\right) }{k^{1/2}\left(
m+k\right) ^{1/2}}+o_{P^{\dag }}\left( W\right) ;
\end{equation*}%
elementary algebra yields 
\begin{equation*}
\max_{1\leq k\leq T_{m}}\frac{m+k-\left( \tau +C_{1}m^{1/2+\epsilon }\right) 
}{k^{1/2}\left( m+k\right) ^{1/2}}\geq C_{2}>0;
\end{equation*}%
thus, Assumption \ref{ass-6} implies (\ref{power-suffcond-2}). Under $%
H_{A,1} $ the logic is similar, and therefore only the main passages are
reported. Under $H_{A,1}$ we have%
\begin{equation*}
\Gamma _{t}=C_{0}W+o_{P^{\dag }}\left( W\right), \text{ for }\tau
+C_{1}m^{1/2+\epsilon }\leq t\leq \tau +m-C_{1}m^{1/2+\epsilon },
\end{equation*}%
with the same notation as above; it is then easy to see that%
\begin{equation*}
\max_{1\leq k\leq T_{m}}\frac{m+k-\left( \tau +C_{1}m^{1/2+\epsilon }\right) 
}{k^{1/2}\left( m+k\right) ^{1/2}}\geq C_{2}>0;
\end{equation*}%
the proof is now the same as before.
\end{proof}

\begin{proof}[Proof of Corollary \protect\ref{corollaryerdos}]
The corollary is an immediate consequence of Theorem \ref{darling-erdos} and
its proof. Considering (\ref{size}), note that $P^{\dag }\left( \widehat{%
\tau }_{m}<T\right) $ is monotonically nondecreasing in $T$; by
definition%
\begin{align*}
P^{\dag }\left( \widehat{\tau }_{m}<T\right)  &=P\left( \max_{1\leq
k\leq T}\frac{d\left( k;m\right) }{\nu ^{\ast }\left( k;m\right) }%
>c_{\alpha ,m}\right) \leq P\left( \max_{1\leq k<\infty }\frac{d\left( k;m\right) }{\nu ^{\ast
}\left( k;m\right) }>c_{\alpha ,m}\right) = \\
&=P\left( \sup_{0\leq t\leq 1}\frac{\left\vert B\left( t\right) \right\vert 
}{t^{\eta }}>c_{\alpha ,m}\right) +o\left( 1\right) =\alpha ,
\end{align*}%
which proves (\ref{erdos-1}); (\ref{erdos-2}) follows from the same
passages. Similarly, as far as (\ref{power}) is concerned, note that%
\begin{equation*}
P^{\dag }\left(t_{N,m}^{\ast }\le  \widehat{\tau }%
_{m}<T\right)  =P\left( c_{\alpha
,m}^{-1}\max_{1\leq k\leq T_{m}}\frac{d\left( k;m\right) }{\nu ^{\ast
}\left( k;m\right) }>1\right) =1,
\end{equation*}%
by (\ref{power-suffcond-2}).
\end{proof}

\end{document}